
\documentclass[nohyperref]{article}

\usepackage{mfirstuc}
\MFUnocap{the}\MFUnocap{and}\MFUnocap{a}
\usepackage{microtype}
\usepackage{graphicx}
\usepackage{subcaption}
\usepackage{booktabs} 
\usepackage{hyperref}

\newcommand{\alglinelabel}{%
  \addtocounter{ALC@line}{-1}
  \refstepcounter{ALC@line}
  \label
}


\usepackage{amsmath,lineno,amssymb,mathtools,url,verbatim}
\usepackage{nccmath}
\usepackage{pgfplots}
\pgfplotsset{grid style={red}}
\usepackage{filecontents}
\usepackage{microtype}
\usepackage{float}
\usetikzlibrary{matrix}
\usepackage{tikz-cd}
\usepackage[algo2e]{algorithm2e} 
\usepackage{algorithmic}
\usepackage[inline]{enumitem}
\usepackage{tabularx}
\usetikzlibrary{decorations.markings}
\usepackage{url} 
\usepackage{graphicx}
\usepackage{tikz}
\usetikzlibrary{automata, positioning, arrows}
\usetikzlibrary{matrix}
\usetikzlibrary{arrows,positioning,shapes.geometric}
\usetikzlibrary{calc,trees,positioning,arrows,chains,shapes.geometric,%
    decorations.pathreplacing,decorations.pathmorphing,shapes,%
   shapes.symbols, trees,backgrounds}
      \tikzstyle{blockzm1} = [rectangle, draw, fill=white, 
    text width=1.5em, text centered, rounded corners, minimum height=1.5em, minimum width = 1.5em]
    
      \tikzstyle{blockzm2} = [rectangle, draw, fill=white, 
    text width=1.0em, text centered, rounded corners, minimum height=1.0em, minimum width = 1.0em]
    
      \tikzstyle{blockzm1r} = [rectangle, draw, fill=red!70, 
    text width=1.5em, text centered, rounded corners, minimum height=1.5em, minimum width = 1.5em]
      \tikzstyle{blockzm1o} = [rectangle, draw, fill=orange!80, 
    text width=1.5em, text centered, rounded corners, minimum height=1.5em, minimum width = 1.5em]
      \tikzstyle{blockzm1y} = [rectangle, draw, fill=yellow!50, 
    text width=1.5em, text centered, rounded corners, minimum height=1.5em, minimum width = 1.5em]
      \tikzstyle{blockzm1g} = [rectangle, draw, fill=green!60, 
    text width=1.5em, text centered, rounded corners, minimum height=1.5em, minimum width = 1.5em]
        \tikzstyle{blockzm1p} = [rectangle, draw, fill=blue!50, 
    text width=1.5em, text centered, rounded corners, minimum height=1.5em, minimum width = 1.5em]
    
   \tikzstyle{blockz} = [rectangle, draw, fill=white, 
    text width=1.75em, text centered, rounded corners, minimum height=1.75em, minimum width = 1.75em]
     \tikzstyle{blockzflexi} = [rectangle, draw, fill=white, 
    text centered, rounded corners]
    \tikzstyle{blockz2} = [rectangle, draw, fill=white, 
    text width=2em, text centered, rounded corners, minimum height=2em, minimum width = 2em]
        \tikzstyle{blockzL} = [rectangle, draw, fill=white, 
    text width=4em, text centered, rounded corners, minimum height=3em, minimum width = 4em]
    
       \tikzstyle{vertex}=[circle,fill=black!25,minimum size=20pt,inner sep=0pt]
\tikzstyle{selected vertex} = [vertex, fill=red!24]
\tikzstyle{edge} = [draw,thick,-]
\tikzstyle{weight} = [font=\small]
\tikzstyle{selected edge} = [draw,line width=5pt,-,red!50]
\tikzstyle{ignored edge} = [draw,line width=5pt,-,black!20]
    
\tikzset{
  treenode/.style = {align=center, inner sep=0pt, text centered,
    font=\sffamily},
  arn_n/.style = {treenode, circle, white, font=\sffamily\bfseries, draw=black,
    fill=black, text width=1.5em},
  arn_r/.style = {treenode, circle, red, draw=red, 
    text width=1.5em, very thick},
  arn_x/.style = {treenode, rectangle, draw=black,
    minimum width=0.5em, minimum height=0.5em}
}
\usepackage{tikz-cd}
\usepackage{verbatim}
\usepackage{hyperref}


\usepackage[accepted]{icml2023}


\usepackage{amsthm}
\usepackage{thmtools}
\usepackage{thm-restate}

\usepackage{amsmath}
\usepackage{amssymb}
\usepackage{mathtools}


\numberwithin{algorithm}{section}

\newtheorem{dfn}[algorithm]{Definition}

\newtheorem{thmm}[algorithm]{Theorem}

\theoremstyle{definition}
\newtheorem{exa}[algorithm]{Example}
\theoremstyle{definition}

\theoremstyle{plain}

\newtheorem{cor}[algorithm]{Corollary}
\newtheorem{lem}[algorithm]{Lemma}

\usepackage{mathtools}
\DeclarePairedDelimiter\ceil{\lceil}{\rceil}

\newcommand{\li}[1]{\#\mathrm{Line}[#1]}
\newcommand{\Es}{\mathcal{E}}
\newcommand{\C}{\mathcal{C}}
\newcommand{\AP}{\mathcal{P}}
\newcommand{\Child}{\mathrm{Children}}
\newcommand{\Desc}{\mathrm{Descendants}}
\newcommand{\nxt}{\mathrm{Next}}

\newcommand{\R}{\mathbb{R}}

\newcommand{\Z}{\mathbb{Z}}

\newcommand{\T}{\mathcal{T}}
\newcommand{\Sd}{\mathcal{S}}

\newcommand{\be}{\beta}

\newcommand{\la}{\lambda}

\newcommand{\NN}{\mathrm{NN}}

\newcommand{\rad}{\mathrm{diam}}
\newcommand{\diam}{\mathrm{diam}}

\newcommand{\bs}{\hfill $\blacksquare$}

\newenvironment{hproof}{%
  \proof}{\endproof}

\usepackage{array} 
\usepackage{varwidth} 
\newcolumntype{M}{>{\begin{varwidth}{3cm}}l<{\end{varwidth}}}

\usepackage[textsize=tiny]{todonotes}

\icmltitlerunning{A new compressed cover tree for $k$-nearest neighbors}

\begin{document}

\twocolumn[
\icmltitle{\capitalisewords{A new near-linear time algorithm for $k$-Nearest neighbor search using a compressed cover tree}}




\begin{icmlauthorlist}
\icmlauthor{Yury Elkin}{1}
\icmlauthor{Vitaliy Kurlin}{1}
\end{icmlauthorlist}

\icmlaffiliation{1}{Department of Computer Science, University of Liverpool, Liverpool, United Kingdom}

\icmlcorrespondingauthor{Vitaliy Kurlin}{vitaliy.kurlin@gmail.com}

\icmlkeywords{nearest neighbor search, parameterized complexity, metric space, cover tree}

\vskip 0.3in
]



\printAffiliationsAndNotice{} 

\begin{abstract}
Given a reference set $R$ of $n$ points and a query set $Q$ of $m$ points in a metric space, this paper studies an important problem of finding $k$-nearest neighbors of every point $q \in Q$ in the set $R$ in a near-linear time. 
In the paper at ICML 2006, Beygelzimer, Kakade, and Langford introduced a cover tree on $R$ and attempted to prove that this tree can be built in $O(n\log n)$ time while the nearest neighbor search can be done in $O(n\log m)$ time with a hidden dimensionality factor.
This paper fills a substantial gap in the past proofs of time complexity 
by defining a simpler compressed cover tree on the reference set $R$. 
The first new algorithm constructs a compressed cover tree in $O(n \log n)$ time.
The second new algorithm finds all $k$-nearest neighbors of all points from $Q$ using a compressed cover tree in time $O(m(k+\log n)\log k)$ with a hidden dimensionality factor depending on point distributions of the given sets $R,Q$ but not on their sizes.  
\end{abstract}


\section{\capitalisewords{The neighbor search, overview of results}}
\label{sec:text:intro}

In the modern formulation, the $k$-nearest neighbor problem is to find all $k\geq 1$ nearest neighbors in a given reference set $R$ for all points from another given query set $Q$.

Both sets belong to a common ambient space $X$ with a distance metric $d$ satisfying all metric axioms.
The simplest example of $X$ is $\R^n$ with the Euclidean metric. 
A query set $Q$ can be a point or a finite subset of a reference set~$R$. 

The \emph{exact} $k$-nearest neighbor problem asks for all true (exact) $k$-nearest neighbors in $R$ for every point $q\in Q$. 
\smallskip

Another (probabilistic) version of the $k$-nearest neighbor search \citet{har2006fast,manocha2007empirical} aims to find exact $k$-nearest neighbors with a given probability. 
The approximate version \citet{arya1993approximate,krauthgamer2004navigating,andoni2018approximate, wang2021comprehensive} of the nearest neighbor search looks for an $\epsilon$-approximate neighbor $r\in R$ of every query point $q \in Q$ such that $d(q,r) \leq (1+\epsilon)d(q,\NN(q))$ , where $\epsilon>0$ is fixed and $\NN(q)$ is the exact first nearest neighbor of $q$. 
\smallskip

\begin{restatable}[diameter and aspect ratio]{dfn}{dfnaspectratio}
\label{dfn:radius+d_min}
For any finite set $R$ with a metric $d$, the \emph{diameter} is $\rad(R) = \max\limits_{p \in R}\max\limits_{q \in R}d(p,q)$. 
The \emph{aspect ratio} is $\Delta(R) = \dfrac{\rad(R)}{d_{\min}(R)}$, where $d_{\min}(R)$ is the shortest distance between points of $R$.
\end{restatable}


\begin{restatable}[$k$-nearest neighbor set $\NN_k$]{dfn}{dfnkNearestNeighbor}\label{dfn:kNearestNeighbor}
For any point $q\in Q$, let $d_1\leq\dots\leq d_{|R|}$ be ordered
 distances from $q$ to all points of a reference set $R$ whose size (number of points) is denoted by $|R|$.
For any $k\geq 1$, the $k$-nearest neighbor set $\NN_k(q;R)$ consists of all $u\in R$ with $d(q,u)\leq d_k$. 
\end{restatable}

\noindent
For $Q=R=\{0,1,2,3\}$, the point $q=1$ has ordered distances $d_1=0<d_2=1=d_3<d_4=2$. 
The nearest neighbor sets are $\NN_1(1;R)=\{1\}$,
$\NN_2(1;R)=\{0,1,2\}=\NN_3(1;R)$, $\NN_4(1;R)=R$. 
So 0 can be a 2nd neighbor of 1, then 2 becomes a 3rd neighbor of 1, or these neighbors of $0$ can be found in a different order.

\begin{restatable}[all $k$-nearest neighbors search]{prob}{proknn}\label{pro:knn}
Let $Q,R$ be finite subsets of query and reference sets in a metric space $(X,d)$.
For any fixed $k\geq 1$, design an algorithm to exactly find $k$ distinct points from $\NN_k(q;R)$ for all $q\in Q$ so that the parametrized worst-case time complexity is near-linear in time $\max\{|Q|,|R|\}$, where hidden constants may depend on structures of $Q,R$ but not on their sizes $|Q|,|R|$. 
\end{restatable}

\noindent 
In a metric space, let $\bar B(p,t)$ be the closed ball with a center $p$ and a radius $t\geq 0$.
The notation $|\bar B(p,t)|$ denotes the number (if finite) of points in the closed ball. 
Definition~\ref{dfn:expansion_constant} recalls the expansion constant $c$ from \citet{beygelzimer2006cover} and introduces the new minimized expansion constant $c_m$, which is a discrete analog of the doubling dimension \citet{cole2006searching}. 

\begin{restatable}[expansion constants $c$ and $c_m$]{dfn}{dfnexpansionconstant}\label{dfn:expansion_constant}
A subset $R$ of a metric space $(X,d)$ is called \emph{locally finite} if the set $\bar{B}(p,t) \cap R$ is finite for all $p \in X$ and $t \in \R_{+}$. 
	Let $R$ be a locally finite set in a metric space $X$. 
\smallskip

	The \emph{expansion constant} $c(R)$ is the smallest $c(R)\geq 2$ such that $|\bar{B}(p,2t)|\leq c(R) \cdot |\bar{B}(p,t)|$ for any point $p\in R$ and $t\geq 0$, see \citet{beygelzimer2006cover}. 
\smallskip
 
Introduce the new \emph{minimized expansion constant} $c_m(R) = \lim\limits_{\xi \rightarrow 0^{+}}\inf\limits_{R\subseteq A\subseteq X}\sup\limits_{p \in A,t > \xi}\dfrac{|\bar{B}(p,2t) \cap A|}{|\bar{B}(p,t) \cap A|}$, where $A$ is a locally finite set which covers $R$.
\end{restatable}

\begin{restatable}{lem}{lemexpansionconstantproperty}\label{lem:expansion_constant_property}
For any finite sets $R\subseteq U$ in a metric space, we have that $c_m(R) \leq c_m(U)$ and $c_m(R) \leq c(R)$.
\end{restatable}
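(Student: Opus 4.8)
The plan is to obtain both inequalities directly from the definition of $c_m$, in each case by shrinking or enlarging the family of covering sets $A$ over which the infimum is taken and then invoking monotonicity of $\inf$ and $\sup$. Before that I would dispose of two routine points. First, for any locally finite $A$ and any $p\in A$ the denominator $|\bar B(p,t)\cap A|$ is at least $1$ since it contains $p$, so every ratio $\frac{|\bar B(p,2t)\cap A|}{|\bar B(p,t)\cap A|}$ is well defined and, as $\bar B(p,2t)\supseteq\bar B(p,t)$, lies in $[1,\infty)$. Second, for a fixed $A$ the quantity $\sup_{p\in A,\,t>\xi}\frac{|\bar B(p,2t)\cap A|}{|\bar B(p,t)\cap A|}$ is non-increasing in $\xi$, hence so is its infimum over admissible $A$; this guarantees that the limit $\xi\to0^{+}$ in the definition of $c_m$ exists in $[1,\infty]$ and that inequalities between such expressions pass to the limit.

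For $c_m(R)\leq c_m(U)$ I would exploit an inclusion of admissible families. If $A$ is locally finite with $U\subseteq A\subseteq X$, then $R\subseteq U\subseteq A$, so $A$ is also admissible in the infimum defining $c_m(R)$. Hence, for every fixed $\xi>0$, the infimum for $c_m(R)$ is taken over a family containing the one used for $c_m(U)$, so it is no larger; letting $\xi\to0^{+}$ preserves the inequality and yields $c_m(R)\leq c_m(U)$.

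For $c_m(R)\leq c(R)$ I would use the single test set $A=R$, which is admissible because the finite set $R$ is locally finite and $R\subseteq R\subseteq X$. By the defining property of the expansion constant applied at centres $p\in R$, we have $|\bar B(p,2t)\cap R|\leq c(R)\,|\bar B(p,t)\cap R|$ for all $t\geq0$, so every ratio occurring in the formula for $c_m(R)$ with the choice $A=R$ is at most $c(R)$. Taking the supremum over $p\in R$ and $t>\xi$, then the infimum over all admissible $A$ (which can only decrease the value below the bound obtained from $A=R$), and finally the limit $\xi\to0^{+}$, gives $c_m(R)\leq c(R)$.

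The whole proof is essentially a careful unwinding of the nested limit, infimum, and supremum, so I do not expect a substantive obstacle; the only part needing genuine care is the preliminary bookkeeping (well-definedness of the ratios and existence of the $\xi$-limit), which is why I would settle it first.
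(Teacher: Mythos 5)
Your proposal is correct and follows the natural (and essentially unavoidable) route: the first inequality comes from the inclusion of admissible covering families induced by $R\subseteq U$, and the second from testing with $A=R$ and the defining bound of $c(R)$, with monotonicity in $\xi$ justifying the passage to the limit. This matches the paper's argument, so no further comment is needed.
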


\noindent
Note that both $c(R), c_m(R)$ are always defined when $R$ is finite. 
We show below that a single outlier can make the expansion constant $c(R)$ as large as $O(|R|)$.
\smallskip

In the Euclidean line $\R$, 
The set $R=\{1,2,\dots,n,2n+1\}$ of $|R|=n+1$ points has $c(R)=n+1$ because $\bar B(2n+1;n)=\{2n+1\}$ is a single point, while $\bar B(2n+1;2n)=R$ is the full set of $n+1$ points. 
On the other hand, the same set $R$ can be extended to a larger uniform set $A=\{1,2,\dots,2n-1,2n\}$ whose expansion constant is $c(A)=2$.
So the minimized constant of the original set $R$ is much smaller: $c_m(R) \leq c(A)=2<c(R)=n+1$.
\smallskip

\noindent
The constant $c$ from \citet{beygelzimer2006cover} equals $2^{\text{dim}_{KR}}$ from \citet[Section~2.1]{krauthgamer2004navigating}. 
\smallskip

In \citet[Section~1.1]{krauthgamer2004navigating} the doubling dimension $2^{\text{dim}}$ is defined as a minimum value $\rho$ such that any set $X$ can be covered by $2^{\rho}$ sets whose diameters are half of the diameter of $X$.
The past work \citet{krauthgamer2004navigating} proves that  $2^{\text{dim}} \leq 2^{n}$ for any subset of $\R^n$.
\smallskip

Theorem \ref{thm:normed_space_exp_constant} in appendix~\ref{sec:minimized_exp_constant} will prove that $c_m(R) \leq 2^{n}$ for any a finite subset $R\subset\R^{n}$, so $c_m(R)$ mimics $2^{\text{dim}}$.
\medskip

\noindent
\textbf{Navigating nets}.
In 2004, \citet[Theorem~2.7]{krauthgamer2004navigating} claimed that a navigating net can be constructed in time 
$O\big(2^{O(\text{dim}_{KR}(R)} |R| (\log|R|) \log(\log|R|)\big)$ and
all $k$-nearest neighbors of a query point $q$ can be found in time $O(2^{O(\text{dim}_{KR}(R \cup \{q\})}(k + \log|R|)$, where $\text{dim}_{KR}(R \cup \{q\})$ is the expansion constant defined above. 
The paper above sketched a proof of \citet[Theorem~2.7]{krauthgamer2004navigating} in one sentence and skipped pseudo-codes.
Unfortunately, the authors didn't reply to our request for these details. 
\medskip

\noindent
\textbf{Modified navigating nets} \citet{cole2006searching} were used in 2006 to claim the time $O(\log(n) + (1/\epsilon)^{O(1)})$ to find $(1+\epsilon)$-approximate neighbors.
The proof and pseudo-code were skipped for this claim and for the construction of the modified navigating net for the claimed time $O(|R| \cdot \log(|R|))$. 
\smallskip

\noindent
\textbf{Cover trees}. 
In 2006, \citet{beygelzimer2006cover} introduced a cover tree inspired by the navigating nets \citet{krauthgamer2004navigating}. 
This cover tree was designed to prove a worst-case time for the nearest neighbor search in terms of the size $|R|$ of a reference set $R$ and the expansion constant $c(R)$ from Definition \ref{dfn:expansion_constant}. 
Assume that a cover tree is already constructed on the set $R$. Then \citet[Theorem~5]{beygelzimer2006cover} claimed that a nearest neighbor of any query point $q \in Q$ could be found in time $O(c(R)^{12} \cdot \log|R|)$. 
\smallskip

\noindent
In 2015, \citet[Section~5.3]{curtin2015improving} pointed out that the proof of \citet[Theorem~5]{beygelzimer2006cover} contains a crucial gap, now also confirmed by a specific example in \citet[Counterexample~5.2]{elkin2022counterexamples}. 
\smallskip

The time complexity result of the cover tree construction algorithm \citet[Theorem~6]{beygelzimer2006cover} had a similar issue, the gap of which is exposed rigorously in \citet[Counterexample~4.2]{elkin2022counterexamples}.
\smallskip

\noindent 
\textbf{Further studies in cover trees.} A noteworthy paper on cover trees \citet{kollar2006fast} introduced a new probabilistic algorithm for the nearest neighbor search, as well as corrected the pseudo-code of the cover tree construction algorithm of \citet[Algorithm~2]{beygelzimer2006cover}.
Later in 2015, a new, more efficient implementation of cover tree was introduced in \citet{izbicki2015faster}. However, no new time-complexity results were proven. 
\smallskip

Another study \citet{jahanseir2016transforming} explored connections between modified navigating nets \citet{cole2006searching} and cover trees \citet{beygelzimer2006cover}.
\smallskip

Several papers \citet{beygelzimer2006coverExtend, ram2009linear, curtin2015plug} studied the possibility of solving $k$-nearest neighbor Problem \ref{pro:knn} by using cover trees on both sets $Q,R$, 
see \citet[Section~6]{elkin2022counterexamples}.
\medskip 
\noindent
\textbf{New contributions}.
This work corrects the past gaps of the single-tree approach \citet{beygelzimer2006cover}, which were discovered in \citet{elkin2022counterexamples} by using a new compressed cover tree $\T(R)$ from Definition \ref{dfn:cover_tree_compressed}.
\begin{itemize}
    \item Theorem \ref{thm:construction_time}  and Corollary \ref{cor:construction_time_KR} estimate the time to build a compressed cover tree, which corrects the proof of \citet[Theorem~6]{beygelzimer2006cover}.
    \item Theorem \ref{thm:knn_KR_time} and Corollary~\ref{cor:cover_tree_knn_miniziminzed_constant_time} estimate the time to find all $k$-nearest neighbors as in Problem~\ref{pro:knn}.
These advances correct and generalize \citet[Theorem~5]{beygelzimer2006cover}.
\end{itemize}

\begin{table*}
	\centering
	\caption{Building data structures with hidden $c_m(R)$ or dimensionality constant $2^{\text{dim}}$ \citet[Section~1.1]{krauthgamer2004navigating}.}
        \vskip 0.15in
	\begin{tabular}{|V{35mm}|V{5.5cm}|V{25mm}|V{35mm}|}
		\hline
		Data structure    & claimed time complexity   & space & proofs \\
		\hline
		Navigating nets \citet{krauthgamer2004navigating} & $O\big(2^{O(\text{dim})} \cdot |R| \cdot \log(\Delta) \cdot \log(\log((\Delta ))\big)$
		& $O(2^{O(\text{dim})}|R|)$  & \citet[Theorem~2.5]{krauthgamer2004navigating} \\
		\hline
		Compressed cover tree [Definition \ref{dfn:cover_tree_compressed}] & $O\big(c_m(R)^{O(1)}\cdot|R|\log(\Delta(R))\big)$    & $O(|R|)$ Lemma~\ref{lem:linear_space_cover_tree}     & Theorem \ref{thm:construction_time}  \\
		\hline           
	\end{tabular}
	\label{table:text:dim:construction}  
\end{table*}
\begin{table*}
	\label{table:text:KR:construction}
	\centering
	\caption{Results for building data structures with the hidden classical expansion constant $c(R)$ of Definition \ref{dfn:expansion_constant} or KR-type constant $2^{\text{dim}_{KR}}$ \citet[Section~2.1]{krauthgamer2004navigating}.}
        \vskip 0.15in
	\begin{tabular}{|V{35mm}|V{6cm}|V{25mm}|V{40mm}|}
		\hline
		Data structure    & claimed time complexity   & space & proofs \\
		\hline
		Navigating nets \citet{krauthgamer2004navigating} & $O\big(2^{O(\text{dim}_{KR})} \cdot |R| \log(|R|) \log(\log|R| )\big)$, \citet[Theorem~2.6]{krauthgamer2004navigating} & $O(2^{O(\text{dim})}|R|)$ & Not available \\
		\hline
		Cover tree \citet{beygelzimer2006cover}   &  $O(c(R)^{O(1)} \cdot |R| \cdot \log|R|)$, \citet[Theorem~6]{beygelzimer2006cover} & $O(|R|)$ & \citet[Counterexample~4.2]{elkin2022counterexamples} shows that the past proof is incorrect \\ 
		\hline
		Compressed cover tree [Definition \ref{dfn:cover_tree_compressed}] &     $O\big(c(R)^{O(1)} \cdot |R| \cdot \log|R| \big)$  &  $O(|R|)$ Lemma~\ref{lem:linear_space_cover_tree}     & Corollary \ref{cor:construction_time_KR} \\
		\hline       
	\end{tabular}
\end{table*}
\begin{table*}
	\label{table:text:KR:knearest}
	\centering
	\caption{Results for exact $k$-nearest neighbors of one query point $q \in X$ using the hidden classical expansion constant $c(R)$ of Definition \ref{dfn:expansion_constant} or KR-type constant $2^{\text{dim}_{KR}}$  \citet[Section~2.1]{krauthgamer2004navigating} and assuming that all data structures are already built. Note that the dimensionality factor $2^{\text{dim}_{KR}}$ is equivalent  to $c(R)^{O(1)}$. }
        \vskip 0.15in
	\begin{tabular}{|V{3.0cm}|V{5cm}|V{22mm}|V{50mm}|}
		\hline
		Data structure     & claimed time complexity   & space  & proofs \\
		\hline
		Navigating nets \citet{krauthgamer2004navigating} & $O\big(2^{O(\text{dim}_{KR})}(\log(|R|) + k)\big)$ for $k\geq 1$ \citet[Theorem~2.7]{krauthgamer2004navigating} & $O(2^{O(\text{dim})} |R|)$ & Not available \\
		\hline
		Cover tree \citet{beygelzimer2006cover}   &  $O\big(c(R)^{O(1)}\log|R|\big)$ for $k=1$ \citet[Theorem~5]{beygelzimer2006cover} & $O(|R|)$ & \citet[Counterexample~5.2]{elkin2022counterexamples} shows that the past proof is incorrect \\ 
		\hline
		Compressed cover tree, Definition \ref{dfn:cover_tree_compressed} &     $O\big(c(R \cup \{q\})^{O(1)} \cdot \log(k) \cdot (\log(|R|) + k)\big)$  & $O(|R|)$, Lemma~\ref{lem:linear_space_cover_tree}      & Theorem \ref{thm:knn_KR_time} \\
		\hline            
	\end{tabular}
\end{table*}
\begin{table*}
	\centering
	\caption{Results for exact $k$-nearest neighbors of one point $q$ using hidden $c_m(R)$ or dimensionality constant $2^{\text{dim}}$ \citet[Section~1.1]{krauthgamer2004navigating}  assuming that all structures are built.}
        \vskip 0.15in
	\begin{tabular}{|V{35mm}|V{55mm}|V{22mm}|V{40mm}|}
		\hline
		Data structure     & claimed time complexity & space & proofs \\
		\hline
		Navigating nets \citet{krauthgamer2004navigating} & $O\big(2^{O(\text{dim})} \cdot \log(\Delta) + |\bar{B}(q,O(d(q,R))|\big)$ for $k = 1$  & $O(2^{O(\text{dim})}|R|)$  & a proof outline in \citet[Theorem~2.3]{krauthgamer2004navigating} \\
		\hline
		Compressed cover tree, Definition~\ref{dfn:cover_tree_compressed} &     $O\big(\log(k)\cdot (c_m(R)^{O(1)} \log(|\Delta|) + |\bar{B}(q,O(d_k(q,R))| ) \big )$  & $O(|R|)$, Lemma~\ref{lem:linear_space_cover_tree}     & Corollary~\ref{cor:cover_tree_knn_miniziminzed_constant_time} \\
		\hline          
	\end{tabular}
	\label{table:text:dim:knearest}
\end{table*}

\begin{figure*}
	\centering
	\begin{tikzpicture}[align=center, node distance = 1.0cm, scale = 0.45]

    \node (scaleinf) {$l = \infty$};
    \node [below = 12.5pt of scaleinf](scalemidinf) {};
     \node[below=25pt of scaleinf] (scale6) {$l = 5$};
    \node[below of =scale6] (scale5) {$l = 4$};
    \node[below of =scale5] (scale4) {$l = 3$};
	\node[below of =scale4] (scale3) {$l = 2$};
	\node[below of =scale3] (scale2) {$l = 1$};
	\node[below=25pt of scale2] (scale1) {$l = -\infty$};

 	\node [blockzm2,  right=70pt of scaleinf ] (node1x) {$r$};
 	\node [blockzm2, below=25pt of node1x ] (node1a) {$r$};
 	\node [blockzm2, below of = node1a ] (node1b) {$r$};
 	\node [blockzm2,  below of = node1b ] (node1c) {$r$};
 	\node [blockzm2,   below of = node1c ] (node1d) {$r$};
 	\node [blockzm2,  below of = node1d ] (node1e) {$r$};
 	\node [blockzm2,   below =25pt of node1e ] (node1f) {$r$};
 	
 	\draw[dashed, ->] (node1x) -> (node1a);
 	    \draw[->] (node1a) -> (node1b);
 	     \draw[->] (node1b) -> (node1c);
 	      \draw[->] (node1c) -> (node1d);
 	       \draw[->] (node1d) -> (node1e);
 	      \draw[dashed, ->] (node1e) -> (node1f);

 		\node [blockzm2, right=10pt of node1b ] (node2b) {$p_{4}$};
 	\node [blockzm2,  below of = node2b ] (node2c) {$p_{4}$};
 	\node [blockzm2,   below of = node2c ] (node2d) {$p_{4}$};
 	\node [blockzm2,  below of = node2d ] (node2e) {$p_{4}$};
 	\node [blockzm2,   below =25pt of node2e ] (node2f) {$p_{4}$};
 	
 	    \draw[->] (node1a) -> (node2b);
 	    
 	    \draw[->] (node2b) -> (node2c);
 	      \draw[->] (node2c) -> (node2d);
 	       \draw[->] (node2d) -> (node2e);
 	      \draw[dashed, ->] (node2e) -> (node2f);

 		\node [blockzm2, right = 10pt of node2c ] (node3c) {$p_{3}$};
 	\node [blockzm2,   below of = node3c ] (node3d) {$p_{3}$};
 	\node [blockzm2,  below of = node3d ] (node3e) {$p_{3}$};
 	\node [blockzm2,   below =25pt of node3e ] (node3f) {$p_{3}$};
 	
 	        \draw[->] (node2b) -> (node3c);
 	      \draw[->] (node3c) -> (node3d);
 	       \draw[->] (node3d) -> (node3e);
 	      \draw[dashed, ->] (node3e) -> (node3f);

 		\node [blockzm2,   left = 10 pt of  node1d ] (node4d) {$p_{2}$};
 	\node [blockzm2,  below of = node4d ] (node4e) {$p_{2}$};
 	\node [blockzm2,   below =25pt of node4e ] (node4f) {$p_{2}$};

            \draw[->] (node1c)   -> (node4d);
 	       \draw[->] (node4d) -> (node4e);
 	      \draw[dashed, ->] (node4e) -> (node4f);

 	\node [blockzm2,  left = 10 pt of node4e ] (node5e) {$p_{1}$};
 	\node [blockzm2,   below =25pt of node5e ] (node5f) {$p_{1}$};
 	
 	\draw[->] (node4d) -> (node5e);
 	 \draw[dashed, ->] (node5e) -> (node5f);
 	
 	\node[rectangle, draw, fill=white, 
    text width=1.0em, text centered, rounded corners, minimum height=50pt, minimum width = 1.0em, right = 220pt of scalemidinf] (ynode1a) {$r$};
    
    	\node[rectangle, draw, fill=white, 
    text width=1.0em, text centered, rounded corners, minimum height=50pt, minimum width = 1.0em, below = 10pt of ynode1a] (ynode1b) {$r$};
    
    	\node[rectangle, draw, fill=white, 
    text width=1.0em, text centered, rounded corners, minimum height=80pt, minimum width = 1.0em, below = 10pt of ynode1b] (ynode1c) {$r$};

    \node[blockzm2, above right = -18 pt and 10 pt of ynode1b] (ynode2a) {$p_4$};
    
    	\node[rectangle, draw, fill=white, 
    text width=1.0em, text centered, rounded corners, minimum height=110pt, minimum width = 1.0em, below = 10pt of ynode2a] (ynode2b) {$p_4$};
    
    \node[rectangle, draw, fill=white, 
    text width=1.0em, text centered, rounded corners, minimum height=110pt, minimum width = 1.0em, right = 10pt of ynode2b] (ynode3a) {$p_3$};

        \node[blockzm2, above left = -15 pt and 10 pt of ynode1c] (ynode4a) {$p_2$};
    
    	\node[rectangle, draw, fill=white, 
    text width=1.0em, text centered, rounded corners, minimum height=55pt, minimum width = 1.0em, below = 10pt of ynode4a] (ynode4b) {$p_2$};
    
    \node[rectangle, draw, fill=white, 
    text width=1.0em, text centered, rounded corners, minimum height=55pt, minimum width = 1.0em, left = 10pt of ynode4b] (ynode5a) {$p_1$};

    \draw[->] (ynode1a) -> (ynode1b);
    \draw[->] (ynode1b) -> (ynode1c);
    \draw[->] (ynode1a) -> (ynode2a);
    \draw[->] (ynode2a) -| (ynode3a);
    \draw[->] (ynode2a) -> (ynode2b);
    
      \draw[->] (ynode1b) -> (ynode4a);
     \draw[->] (ynode4a) -| (ynode5a);
     \draw[->] (ynode4a) -> (ynode4b);
    

 	\node[blockzm2, right = 285pt of scale6] (xnode1) {$r$};
	\node[blockzm2, below right = 10 pt and 10 pt of xnode1] (xnode2) {$p_4$};
 	\node[blockzm2, below right = 70 pt and 10 pt of xnode1] (xnode4) {$p_2$};
 	\node[blockzm2, below right = 10 pt and 8 pt of xnode4] (xnode5) {$p_1$};
 	\node[blockzm2, below right = 12 pt and 8 pt of xnode2] (xnode3) {$p_3$};
 	
 	\draw[->]	(xnode1) -> (xnode2);
 	\draw[->]	(xnode2) -> (xnode3);
 	\draw[->]	(xnode1) -> (xnode4);
 	\draw[->]	(xnode4) -> (xnode5);

\end{tikzpicture}
	\caption{A comparison of past cover trees and a new compressed cover tree in Example \ref{exa:implicitexplicitexample}. \textbf{Left:} an implicit cover tree contains infinite repetitions. \textbf{Middle:} an explicit cover tree. \textbf{Right:} a compressed cover tree from Definition \ref{dfn:cover_tree_compressed} includes each given point exactly once.  }
	\label{fig:text:tripleexample}
\end{figure*}
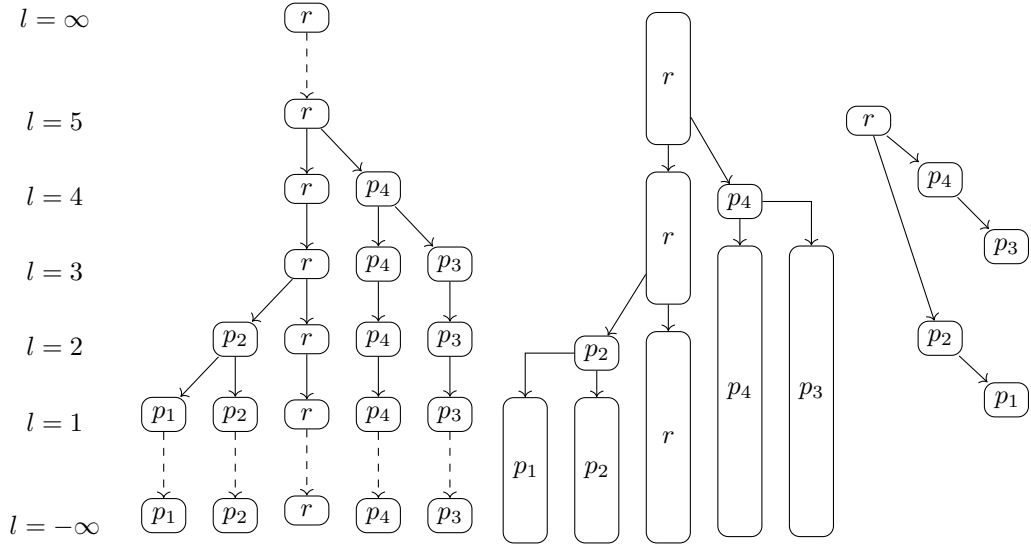


\section{\capitalisewords{A new compressed cover tree}}
\label{sec:text:cover_tree}

This section introduces in Definition \ref{dfn:cover_tree_compressed} a new compressed cover tree to solve Problem \ref{pro:knn}. 
We also prove relevant properties of the expansion constant $c(R)$ and minimized expansion constant $c_m(R)$ of Definition~\ref{dfn:expansion_constant}.
All extra details and proofs of this section are in Appendices~\ref{sec:cover_tree},\ref{sec:distinctive_descendant_set}.
\smallskip

A compressed cover tree in Definition~\ref{dfn:cover_tree_compressed} will be significantly simpler than an explicit cover tree \citet[Definition~2.2]{elkin2022counterexamples}, where any given point $p$ can appear in many different nodes simultaneously. 
\smallskip

To regain the functionality of the explicit cover tree,
we introduce the new concept of a distinctive descendant set $\Sd_i(p, \T(R))$ in Definition~\ref{dfn:distinctive_descendant_set}.
See Figure~\ref{fig:text:tripleexample} for a comparison between implicit, explicit, and compressed cover trees. 

\begin{restatable}[a compressed cover tree $\T(R)$]{dfn}{dfncovertreecompressed}\label{dfn:cover_tree_compressed}
Let $R$ be a finite set in a metric space $(X,d)$. 
	\emph{A compressed cover tree} $\T(R)$ has the vertex set $R$ with a root $r \in R$ and a \emph{level} function $l : R \rightarrow \Z$ satisfying the conditions below.
	\smallskip
	
	\noindent
	(\ref{dfn:cover_tree_compressed}a)
	\emph{Root condition} :  
 	the level of the root node $r$ satisfies $l(r) \geq 1 +  \max\limits_{p \in R \setminus \{r\}}l(p)$.
	\smallskip
	
	\noindent
	(\ref{dfn:cover_tree_compressed}b)
	\emph{Cover condition} : 
	for every node $q \in R\setminus \{r\}$, we select a unique \emph{parent} $p$ and a level $l(q)$ such that $d(q,p) \leq 2^{l(q)+1}$ and $l(q) < l(p)$; 
	this parent node $p$ has a single link to its  \emph{child} node $q$. 
	\smallskip
	
	\noindent
	(\ref{dfn:cover_tree_compressed}c)
	\emph{Separation condition} : 
	for $i \in \Z$, the \emph{cover set} 
	$C_i = \{p \in R \mid l(p) \geq i\}$ has
	$d_{\min}(C_i) = \min\limits_{p \in C_{i}}\min\limits_{q \in C_{i}\setminus \{p\}} d(p,q) > 2^{i}$.
	\medskip
	
	\noindent
	Since there is a 1-1 map between $R$ and all nodes of $\T(R)$, the same notation $p$ can refer to a point in the set $R$ or to a node of the tree $\T(R)$.  
	Set $l_{\max} = 1 +  \max\limits_{p \in R \setminus \{r\}}l(p) $ and $l_{\min} = \min\limits_{p \in R}l(p)$.
	For any node $p\in\T(R)$, $\Child(p)$ denotes the set of all children of $p$, including $p$ itself.
	For any node $p \in\T(R)$, define the \emph{node-to-root} path as a unique sequence of nodes $w_0,\dots,w_m$ such that $w_0 = p$, $w_m$ is the root and $w_{j+1}$ is the parent of $w_{j}$ for $j=0,...,m-1$. 
\smallskip

	A node $q \in\T(R)$ is a \emph{descendant} of a node $p$ if $p$ is in the node-to-root path of $q$. 
	A node $p$ is an \emph{ancestor} of $q$ if $q$ is in the node-to-root path of $p$. 
	Let $\Desc(p)$ be the set of all descendants of $p$, including itself $p$. 
\end{restatable}

Lemma~\ref{lem:packing} links the minimized expansion constant with the doubling dimension. This result is used in the proofs of the width bound of a compressed cover tree in Lemma~\ref{lem:compressed_cover_tree_width_bound}, also for the time complexity of a compressed cover tree construction in Lemma~\ref{lem:general_construction_time},  and for the $k$-nearest neighbor search in Lemma~\ref{lem:knn:time}.
All hyperlinks are clickable.

\begin{restatable}[packing]{lem}{lempacking}\label{lem:packing}
Let $S$ be a finite $\delta$-sparse set in a metric space $(X,d)$, so $d(a,b) > \delta$ for all $a,b \in S$. 
	Then, for any point $ p \in X$ and any radius $t > \delta$, we have
	$|\bar{B}(p, t)  \cap S | \leq (c_m(S))^{\mu}$, where $\mu = \lceil \log_2(\frac{4t}{\delta} + 1) \rceil $.
\end{restatable}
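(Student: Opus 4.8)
Write $Y=\bar B(p,t)\cap S$; if $Y=\es$ there is nothing to prove, so assume $Y\neq\es$. The plan is a disjoint--ball packing argument carried out inside a near--optimal locally finite superset $A\supseteq S$ furnished by the definition of $c_m(S)$. Since $Y$ is $\delta$--separated, the closed balls $\bar B(y,\delta/2)$, $y\in Y$, are pairwise disjoint, and each lies in $\bar B(y_0,2t+\delta/2)$ for any fixed $y_0\in Y$ (because $d(y,y_0)\le 2t$ for $y,y_0\in\bar B(p,t)$). Hence each of these small balls contributes at least one point of $A$, while the doubling property of $A$, iterated $\mu$ times, bounds $|\bar B(y_0,2t+\delta/2)\cap A|$ by $(c_m(S)+\ep)^{\mu}|\bar B(y_0,\delta/2)\cap A|$. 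Choosing $y_0$ to minimize $|\bar B(y_0,\delta/2)\cap A|$ then lets us cancel this common factor --- finite by local finiteness of $A$ and at least $1$ --- to get $|Y|\le(c_m(S)+\ep)^{\mu}$, after which $\ep\to 0$ gives $|Y|\le c_m(S)^{\mu}$.

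In more detail, fix $\ep>0$. Because $c_m(S)=\lim_{\xi\to 0^{+}}\inf_{S\subseteq A\subseteq X}\sup_{q\in A,\,s>\xi}|\bar B(q,2s)\cap A|\,/\,|\bar B(q,s)\cap A|$, I can choose a threshold $\xi\in(0,\delta/2)$ and a locally finite $A$ with $S\subseteq A\subseteq X$ such that $|\bar B(q,2s)\cap A|\le(c_m(S)+\ep)\,|\bar B(q,s)\cap A|$ whenever $q\in A$ and $s>\xi$. Pick $y_0\in Y$ minimizing $|\bar B(y_0,\delta/2)\cap A|$ (possible since $S$, hence $Y$, is finite). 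For $y\in Y\subseteq A$, iterating the inequality $\mu$ times from radius $\delta/2$ --- the radii used, $\delta/2,\delta,\dots,2^{\mu-1}\delta/2$, all exceed $\xi$ --- yields $|\bar B(y,2^{\mu}\delta/2)\cap A|\le(c_m(S)+\ep)^{\mu}|\bar B(y,\delta/2)\cap A|$. The choice $\mu=\lceil\log_2(4t/\delta+1)\rceil$ forces $2^{\mu}\ge 4t/\delta+1$, i.e. $2^{\mu}\delta/2\ge 2t+\delta/2$ (this is exactly where the ``$+1$'' is used), so $\bar B(y_0,2t+\delta/2)\subseteq\bar B(y,2^{\mu}\delta/2)$ and hence
\[
|\bar B(y_0,2t+\delta/2)\cap A|\ \le\ (c_m(S)+\ep)^{\mu}\,|\bar B(y,\delta/2)\cap A|\qquad\text{for every }y\in Y .
\]
On the other hand, disjointness of the balls $\bar B(y,\delta/2)$ together with $\bigcup_{y\in Y}\bar B(y,\delta/2)\subseteq\bar B(y_0,2t+\delta/2)$ gives
\[
|Y|\cdot|\bar B(y_0,\delta/2)\cap A|\ \le\ \sum_{y\in Y}|\bar B(y,\delta/2)\cap A|\ \le\ |\bar B(y_0,2t+\delta/2)\cap A| .
\]
Taking $y=y_0$ in the first of these two displays, substituting into the second, and cancelling $|\bar B(y_0,\delta/2)\cap A|\in[1,\infty)$ gives $|Y|\le(c_m(S)+\ep)^{\mu}$; letting $\ep\to 0$ finishes.

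The one genuinely delicate step is obtaining the exponent $\mu$ rather than $2\mu$. The naive alternative --- enclose $Y$ in a ball about one of its own points and halve that ball down to a singleton --- fails, since a near--optimal $A$ may carry arbitrarily large clusters at scales below $\xi$, so its innermost ball need not be a singleton and halving all the way down would cost far more than $\mu$ factors of $c_m(S)+\ep$. The packing argument above avoids this: it never needs a singleton ball, only the harmless cancellation of the common factor $|\bar B(y_0,\delta/2)\cap A|$. The remaining ingredients --- that $\xi$ can be taken below $\delta/2$ because $\delta$ is fixed, the triangle-inequality nesting $\bar B(y_0,2t+\delta/2)\subseteq\bar B(y,2^{\mu}\delta/2)$, and the disjointness of radius-$\delta/2$ balls centred at a $\delta$--separated set --- are routine.
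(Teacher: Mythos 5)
Your argument is correct and is essentially the packing argument the paper itself uses: the balls $\bar B(y,\delta/2)$ around the $\delta$-separated points $y\in\bar B(p,t)\cap S$ are pairwise disjoint and all lie in $\bar B(y_0,2t+\delta/2)$, and iterating the doubling inequality of a near-optimal locally finite superset $A$ exactly $\mu$ times lets you cancel the common factor $|\bar B(y_0,\delta/2)\cap A|$ instead of shrinking to a singleton (which, as you rightly note, is unavailable for $c_m$). One small slip worth flagging: the containment $\bar B(y_0,2t+\delta/2)\subseteq\bar B(y,2^{\mu}\delta/2)$ does not hold for general $y\in Y$ (it would require $2^{\mu}\delta/2\geq 4t+\delta/2$), but since you only use the resulting inequality at $y=y_0$, where the containment is trivial, and the inequality for other $y$ follows anyway from the minimality of $|\bar B(y_0,\delta/2)\cap A|$, the proof is unaffected.
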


Proof of Lemma~\ref{lem:packing} is in \hyperlink{proof:lem:packing}{Appendix~\ref*{sec:cover_tree}}. 
\smallskip

\noindent
Lemma~\ref{lem:compressed_cover_tree_width_bound} shows that the number of children of any node of a compressed cover tree on any specific level can be bounded by using minimized expansion constant $c_m(R)$. 
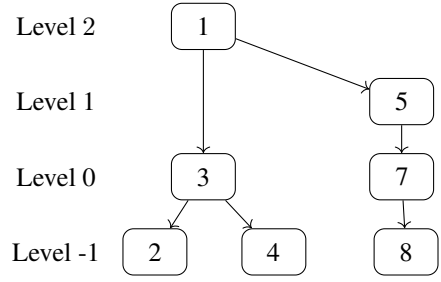
\begin{figure*}
	\centering
	\begin{tikzpicture}[align=center, node distance = 1.0cm, scale = 0.45]

	\node (scalem2text) {Level $2$};
	\node[below of =scalem2text] (scalem1text) {Level 1};
	\node[below of =scalem1text] (scalem0text) {Level 0};
	\node[below of =scalem0text] (scalemm1text) {Level -1};

	\node [blockz,  right=25pt of scalem2text ] (node1) {1};
	\node [blockz,  right=100pt of scalem1text] (node5) {5};
	\node [blockz,  right=25pt of scalem0text] (node3) {3};
	\node [blockz,  right=5pt of scalemm1text] (node2) {2};
	\node [blockz,  right=50pt of scalemm1text] (node4) {4};
	
	\node [blockz,  right=100pt of scalem0text] (node7) {7};
	\node [blockz,  right=100pt of scalemm1text] (node8) {8};

	  \draw[->] (node1) -> (node5);
	  \draw[->] (node1) -> (node3);
	  \draw[->] (node3) -> (node2);
	  \draw[->] (node3) -> (node4);
	  \draw[->] (node5) -> (node7);
	  \draw[->] (node7) -> (node8);

\end{tikzpicture} 
	\caption{Consider a compressed cover tree $\T(R)$ that was built on set $R = \{1,2,3,4,5,7,8\}$. Let $\Sd_i(p, \T(R))$ be a distinctive descendant set of Definition \ref{dfn:distinctive_descendant_set}. Then $V_2(1) = \emptyset, V_{1}(1) = \{5\}$ and $V_{0}(1) = \{3,5,7\}$.
		And also $\Sd_2(1, \T(R)) = \{1, 2,3,4,5,7,8\}$, $\Sd_1(1, \T(R)) = \{1,2,3,4\} $ and $\Sd_{0}(1, \T(R)) = \{1\} $.}
	\label{fig:uniqueDescendant_paper}
\end{figure*}
\begin{restatable}[width bound]{lem}{lemwidthbound}\label{lem:compressed_cover_tree_width_bound}
Let $R$ be a finite subset of a metric space $(X,d)$. 
	For any compressed cover tree $\T(R)$, any node $p$ and any level $i \leq l(p)$ we have  
	$$\{q \in \Child(p) \mid l(q) = i\} \cup \{p\} \leq (c_m(R))^4, $$ where $c_m(R)$ is the minimized expansion constant of $R$.
\end{restatable}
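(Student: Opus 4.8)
The strategy is to reduce the statement to an application of the packing Lemma~\ref{lem:packing}. Fix a node $p$ of $\T(R)$ and a level $i \leq l(p)$, and set $S = \{q \in \Child(p) \mid l(q) = i\} \cup \{p\}$. I would first verify that $S$ is contained in the cover set $C_i$: every child $q$ with $l(q)=i$ belongs to $C_i$ by definition of $C_i$, and $p$ itself has $l(p) \geq i$, so $p \in C_i$ too. By the separation condition (\ref{dfn:cover_tree_compressed}c), $C_i$ is $2^i$-sparse, hence so is its subset $S$; thus $d(a,b) > 2^i$ for all distinct $a,b \in S$. This lets me take $\delta = 2^i$ in Lemma~\ref{lem:packing}.

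Next I need a ball that contains all of $S$. The natural center is $p$. Each $q \in \Child(p)$ with $l(q) = i$ satisfies the cover condition $d(q,p) \leq 2^{l(q)+1} = 2^{i+1}$, and trivially $d(p,p) = 0 \leq 2^{i+1}$. Hence $S \subseteq \bar B(p, 2^{i+1})$. Now apply Lemma~\ref{lem:packing} with the sparse set $S$, center $p$, radius $t = 2^{i+1}$ and $\delta = 2^i$: since $t = 2^{i+1} > 2^i = \delta$, the hypothesis $t > \delta$ holds, and we get
\[
|S| = |\bar B(p, t) \cap S| \leq (c_m(S))^{\mu}, \qquad \mu = \left\lceil \log_2\!\left(\tfrac{4t}{\delta} + 1\right)\right\rceil = \left\lceil \log_2 9 \right\rceil = 4.
\]
Finally, since $S \subseteq R$, Lemma~\ref{lem:expansion_constant_property} gives $c_m(S) \leq c_m(R)$, so $|S| \leq (c_m(R))^4$, which is exactly the claim.

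The main thing to be careful about is the arithmetic that makes $\mu$ come out to exactly $4$: with $t/\delta = 2$ we have $4t/\delta + 1 = 9$, and $\log_2 9 \in (3,4)$, so the ceiling is $4$ — this is tight and depends on using the radius $2^{i+1}$ rather than anything larger. A secondary point is to double-check that Lemma~\ref{lem:packing} is stated for an arbitrary point $p \in X$ as the ball center (it is), so there is no issue with $p$ being a point of $S$ itself. Beyond that, the proof is a direct instantiation, and I do not anticipate a genuine obstacle; all the structural work has been front-loaded into the separation and cover conditions and into Lemma~\ref{lem:packing}.
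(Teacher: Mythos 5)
Your proposal is correct and follows essentially the same route as the paper: place the children at level $i$ together with $p$ inside $C_i$ (hence $2^i$-sparse by the separation condition) and inside $\bar B(p,2^{i+1})$ (by the cover condition), then apply the packing Lemma~\ref{lem:packing} with $t=2^{i+1}$, $\delta=2^i$ to get the exponent $\lceil\log_2 9\rceil=4$, and finish with the monotonicity $c_m(S)\leq c_m(R)$ from Lemma~\ref{lem:expansion_constant_property}. No gaps.
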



\smallskip
\noindent 
Lemma~\ref{lem:growth_bound} is an important property of the expansion constant, which allows us to calculate the low-bound of the number of points in the larger ball $\bar{B}(q, 4r)$ for any node $q \in R$ and radius $r \in \R_{+}$ using the smaller ball $\bar{B}(q, r)$ and the expansion constant $c(R)$, if there exists a point $p \in R$ which is located in annulus $2r < d(p,q) \leq 3r$. 


\begin{restatable}[growth bound]{lem}{lemgrowthbound}\label{lem:growth_bound}
	Let $(A,d)$ be a finite metric space, let $q \in A$ be an arbitrary point and let $r \in \R$ be a real number.
	Let $c(A)$ be the expansion constant from Definition \ref{dfn:expansion_constant}.
	If there exists a point $p \in A$ such that $2r < d(p,q) \leq 3r$, then $|\bar{B}(q, 4r)|  \geq  (1+\frac{1}{c(A)^2}) \cdot |\bar{B}(q, r)|$. 
\end{restatable}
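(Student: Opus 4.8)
The plan is to exploit the expansion constant inequality $|\bar B(q,2s)| \le c(A)\cdot|\bar B(q,s)|$ at two carefully chosen scales so that the annulus containing $p$ contributes at least a $c(A)^{-2}$ fraction of $|\bar B(q,r)|$ new points inside $\bar B(q,4r)$. First I would observe that the hypothesis $2r < d(p,q) \le 3r$ places $p$ in the annulus $\bar B(q,3r)\setminus \bar B(q,2r)$, so in particular $p\in\bar B(q,4r)$ but $p\notin\bar B(q,2r)$. The key quantitative idea: I want to find a point $x$ (I will take $x = q$, or possibly recenter) and radii such that a ball around $x$ lies entirely inside $\bar B(q,4r)$, is disjoint from $\bar B(q,r)$, and has size comparable to $|\bar B(q,r)|$ via the doubling inequality.

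The cleanest route is to recenter at $p$. Since $d(p,q)\le 3r$, the triangle inequality gives $\bar B(q,r)\subseteq \bar B(p,4r)$, hence $|\bar B(p,4r)| \ge |\bar B(q,r)|$. Apply the expansion constant twice at center $p$: $|\bar B(p,4r)| \le c(A)\,|\bar B(p,2r)| \le c(A)^2\,|\bar B(p,r)|$. Therefore $|\bar B(p,r)| \ge \frac{1}{c(A)^2}|\bar B(q,r)|$. Now I claim $\bar B(p,r)$ is disjoint from $\bar B(q,r)$: if $y\in \bar B(p,r)\cap\bar B(q,r)$ then $d(p,q)\le d(p,y)+d(y,q)\le 2r$, contradicting $d(p,q) > 2r$. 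And $\bar B(p,r)\subseteq \bar B(q,4r)$ since for $y\in\bar B(p,r)$ we have $d(q,y)\le d(q,p)+d(p,y)\le 3r + r = 4r$. Combining, $\bar B(q,4r) \supseteq \bar B(q,r) \sqcup \bar B(p,r)$ as a disjoint union, so $|\bar B(q,4r)| \ge |\bar B(q,r)| + |\bar B(p,r)| \ge |\bar B(q,r)| + \frac{1}{c(A)^2}|\bar B(q,r)| = (1+\frac{1}{c(A)^2})|\bar B(q,r)|$, which is exactly the claim.

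The only subtlety I anticipate is making sure the two applications of the doubling inequality are legitimate — Definition~\ref{dfn:expansion_constant} states $|\bar B(p,2t)|\le c(R)|\bar B(p,t)|$ for $p\in R$ and $t\ge 0$, so I need $p\in A$ (true by hypothesis, since $p$ is a point of the set) and I apply it with $t = 2r$ and then $t = r$; if $r\le 0$ the statement is vacuous or trivial (a point $p$ with $2r < d(p,q)$ forces $r\ge 0$, and if $r=0$ then $0 < d(p,q)\le 0$ is impossible, so actually $r>0$). So no genuine obstacle remains; the argument is a short recentering-plus-doubling computation, and the main thing to get right is the choice to expand around $p$ rather than $q$ and the verification of the disjointness and containment of $\bar B(p,r)$.
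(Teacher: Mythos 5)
Your proof is correct and follows what is essentially the standard (and the paper's) argument for this growth bound: recenter at $p$, note $\bar{B}(q,r)\subseteq\bar{B}(p,4r)$, apply the doubling inequality twice at $p$ to get $|\bar{B}(p,r)|\geq c(A)^{-2}|\bar{B}(q,r)|$, and observe that $\bar{B}(p,r)$ and $\bar{B}(q,r)$ are disjoint subsets of $\bar{B}(q,4r)$. Your handling of the degenerate case $r\leq 0$ and the verification that both applications of Definition~\ref{dfn:expansion_constant} are at the legitimate center $p\in A$ are also correct.
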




Lemma~\ref{lem:growth_bound_extension} is a generalization of Lemma~\ref{lem:growth_bound} and will be used to estimate the number of iterations in compressed cover tree construction algorithm, Lemma~\ref{lem:construction_depth_bound} and in the $k$-nearest neighbors algorithm, Lemma~\ref{lem:knn_depth_bound}.

\begin{restatable}[extended growth bound]{lem}{lemgrowthboundextension}\label{lem:growth_bound_extension}
	Let $(A,d)$ be a finite metric space, let $q \in A$ be an arbitrary point. Let $p_1, ..., p_n$ be a sequence of distinct points in $R$, in such a way that  for all $i \in \{2,...,n\}$ we have $4 \cdot d(p_i, q) \leq d(p_{i+1},q)$.
	Then $$|\bar{B}(q,\frac{4}{3} \cdot d(q,p_n))| \geq (1+\frac{1}{c(A)^2})^{n} \cdot |\bar{B}(q,\frac{1}{3} \cdot d(q,p_1))|.$$
\end{restatable}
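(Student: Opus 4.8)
The plan is to obtain the extended growth bound by iterating the ordinary growth bound of Lemma~\ref{lem:growth_bound}, with the geometric spacing $4\,d(q,p_i)\le d(q,p_{i+1})$ used only to nest consecutive balls. The statement is read with $d(q,p_1)>0$; since distinctness together with the spacing hypothesis forces $p_i\ne q$ for every $i\ge 2$, this just amounts to assuming $p_1\ne q$, and it is genuinely needed because for $p_1=q$ the right-hand side already exceeds the left when $n=1$.

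\emph{One step.} For a fixed index $i$ set $r_i:=\tfrac13 d(q,p_i)>0$; then $2r_i<d(q,p_i)=3r_i$, so $p_i$ lies in the annulus demanded by Lemma~\ref{lem:growth_bound}. Applying that lemma at the point $q$ with witness $p_i$ and radius $r_i$ gives
\[
|\bar{B}(q,\tfrac{4}{3} d(q,p_i))| \;=\; |\bar{B}(q,4r_i)| \;\ge\; \Bigl(1+\tfrac{1}{c(A)^2}\Bigr)\,|\bar{B}(q,r_i)| \;=\; \Bigl(1+\tfrac{1}{c(A)^2}\Bigr)\,|\bar{B}(q,\tfrac{1}{3} d(q,p_i))|.
\]

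\emph{Chaining.} The hypothesis $4\,d(q,p_i)\le d(q,p_{i+1})$ is exactly $\tfrac{4}{3}d(q,p_i)\le\tfrac{1}{3}d(q,p_{i+1})$, so $\bar{B}(q,\tfrac{4}{3}d(q,p_i))\subseteq\bar{B}(q,\tfrac{1}{3}d(q,p_{i+1}))$ and hence $|\bar{B}(q,\tfrac{1}{3}d(q,p_{i+1}))|\ge|\bar{B}(q,\tfrac{4}{3}d(q,p_i))|$. Combined with the one-step inequality this yields, for $i=1,\dots,n-1$,
\[
|\bar{B}(q,\tfrac{1}{3}d(q,p_{i+1}))| \;\ge\; \Bigl(1+\tfrac{1}{c(A)^2}\Bigr)\,|\bar{B}(q,\tfrac{1}{3}d(q,p_i))|,
\]
and telescoping over these $n-1$ indices gives $|\bar{B}(q,\tfrac{1}{3}d(q,p_n))|\ge(1+\tfrac{1}{c(A)^2})^{n-1}|\bar{B}(q,\tfrac{1}{3}d(q,p_1))|$; one final application of the one-step inequality at $i=n$ supplies the remaining factor and finishes the proof. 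The same computation can be packaged as a downward induction on $n$, whose base case $n=1$ is Lemma~\ref{lem:growth_bound} with $r=\tfrac13 d(q,p_1)$ and whose inductive step is the nesting above followed by one step at $p_1$.

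\emph{Main obstacle.} There is none of substance; the content is the observation that the constants line up — the dilation factor $4$ built into Lemma~\ref{lem:growth_bound} (passing from $\bar{B}(q,r)$ to $\bar{B}(q,4r)$) equals both the ratio $\tfrac{4/3}{1/3}$ of the two radii appearing in the conclusion and the spacing factor $4$ in the hypothesis, which is exactly what makes each nesting step lossless. The only things to watch are verifying the annulus condition $2r_i<d(q,p_i)\le 3r_i$ — which is where the requirement $p_i\ne q$, i.e. $r_i>0$, enters — and the bookkeeping of exponents in the telescoping product.
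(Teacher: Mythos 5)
Your proof is correct and takes essentially the same route as the paper: iterate Lemma~\ref{lem:growth_bound} with $r_i=\tfrac13 d(q,p_i)$ (so $p_i$ sits on the boundary of the annulus $2r_i<d(p_i,q)\le 3r_i$) and use the spacing hypothesis to nest $\bar{B}(q,\tfrac43 d(q,p_i))\subseteq\bar{B}(q,\tfrac13 d(q,p_{i+1}))$, telescoping to collect all $n$ factors. Your observation that $p_1\neq q$ (equivalently $r_1>0$) must be implicitly assumed is a fair point about an edge case the statement leaves unstated.
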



\begin{restatable}[the height of a compressed cover tree]{dfn}{dfndepth}\label{dfn:depth}
	For a compressed cover tree $\T(R)$ on a finite set $R$,
	the \emph{height set} is $H(\T(R))=\{ i \mid C_{i-1}\neq C_{i}\}\cup \{l_{\max},l_{\min}\}$.
	The size $|H(\T(R))|$ of this set is called the \emph{height} of $\T(R)$.
\end{restatable}
\begin{lem}
	\label{lem:text:depth_bound}
	Any finite set $R$ has the upper bound $|H(\T(R))|\leq 1+\log_2(\Delta(R))$.
\end{lem}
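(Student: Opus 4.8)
The plan is to sandwich every integer that can occur in $H(\T(R))$ between two quantities determined by $d_{\min}(R)$ and $\diam(R)$. The point is that the cover set $C_i$ equals all of $R$ for every $i\le l_{\min}$ and is contained in $\{r\}$ for every $i$ exceeding the largest non-root level; hence the indices $i$ with $C_{i-1}\ne C_i$, together with $l_{\min}$ and $l_{\max}$, all fall inside a short window of consecutive integers of width $l_{\max}-l_{\min}$ up to a fixed additive constant (the only index that can escape this window is the one coming from the root's own level, and it contributes at most one extra element since no other node sits at that height). So it suffices to show that $l_{\max}-l_{\min}$ is bounded by $\log_2\Delta(R)$ up to the appropriate constant.

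To bound $l_{\min}$ from below I would use the cover condition. Since $|R|\ge 2$, pick a node $p$ with $l(p)=l_{\min}$; it is not the root, so it has a parent $w$, and $d(p,w)\le 2^{l_{\min}+1}$ while $d(p,w)\ge d_{\min}(R)$, giving $2^{l_{\min}+1}\ge d_{\min}(R)$, i.e.\ $l_{\min}\ge\log_2 d_{\min}(R)-1$. (Applying the separation condition to $C_{l_{\min}}=R$ gives the matching upper estimate $2^{l_{\min}}<d_{\min}(R)$, so $l_{\min}$ is in fact pinned down by $d_{\min}(R)$; this is what makes the final constant sharp rather than merely $O(1)$.)

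To bound $l_{\max}$ from above I would use the separation condition. Let $M=\max_{p\in R\setminus\{r\}}l(p)$, so that $l_{\max}=M+1$. The cover set $C_M$ contains both the root and a node realising the maximum $M$, hence $|C_M|\ge 2$; the separation condition then forces $d_{\min}(C_M)>2^{M}=2^{\,l_{\max}-1}$, and since $d_{\min}(C_M)\le\diam(R)$ we obtain $\diam(R)>2^{\,l_{\max}-1}$, i.e.\ $l_{\max}<1+\log_2\diam(R)$.

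Combining the two estimates gives $l_{\max}-l_{\min}<\big(1+\log_2\diam(R)\big)-\big(\log_2 d_{\min}(R)-1\big)=2+\log_2\Delta(R)$, which together with the windowing of the first step yields the claimed bound $|H(\T(R))|\le 1+\log_2\Delta(R)$ after reconciling the additive constants. I expect this reconciliation to be the only real obstacle: the rough windowing leaves a couple of units of slack, and removing it requires keeping track of integrality of the levels (so that the logarithms are correctly replaced by their floors or ceilings) and of whether $l_{\min}$ and $l_{\max}$ are genuinely new elements of the set $\{i:C_{i-1}\ne C_i\}$ or already lie in it — bookkeeping rather than any new idea.
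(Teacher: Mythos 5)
Your approach is essentially the paper's: pin $l_{\min}$ from below via the cover condition ($d_{\min}(R)\le 2^{l_{\min}+1}$), pin $l_{\max}$ from above via the separation condition applied to the cover set containing the root and a deepest non-root node ($2^{l_{\max}-1}<\diam(R)$), and note that $H(\T(R))$ sits in the window $[l_{\min},l_{\max}]$, giving $|H(\T(R))|\lesssim l_{\max}-l_{\min}\le 2+\log_2\Delta(R)$. The additive slack you flag at the end is real — this argument (and the paper's) yields $\log_2\Delta(R)+O(1)$ rather than literally $1+\log_2\Delta(R)$ — but it is immaterial, since the lemma is only ever used inside $O(\cdot)$ bounds such as in Theorem~\ref{thm:construction_time}.
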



Intuitively $\Sd_i(p, \T(R))$ denotes all the descendants of pair $(p,i)$ in the explicit or implicit cover tree.

\begin{restatable}[Distinctive descendant sets]{dfn}{dfndistinctivedescendantset}\label{dfn:distinctive_descendant_set}
Let $R\subseteq X$ be a finite reference set with a compressed cover tree $\T(R)$. 
	For any node $p \in \T(R)$ and level $i \leq l(p) - 1$, set
	$V_{i}(p) =  \{u \in \Desc(p) \mid i \leq l(u)\leq l(p) - 1\}.$
	If $i \geq l(p)$, then set $V_i(p) = \emptyset$. 
	For any level $i \leq l(p) $, the \emph{distinctive descendant set} is
	$\Sd_i(p, \T(R)) =   \Desc(p) \setminus \bigcup\limits_{u \in V_{i}(p)} \Desc(u)$ and has the size $|\Sd_i(p, \T(R)) |$. 
\end{restatable}
 Lemma~\ref{lem:distinctive_descendant_child_level} shows that if $q \in \Sd_i(p, \T(R))$ then there is a node-to-node path $q = a_0 , ..., a_{m} = p$,
 so that $l(a_{m-1}) \leq i - 1$.

\begin{restatable}{lem}{lemdistinctivedescendantchildlevel}\label{lem:distinctive_descendant_child_level}
Let $R\subseteq X$ be a finite reference set with a cover tree $\T(R)$. 
	In the notations of Definition~\ref{dfn:distinctive_descendant_set}, let $p \in \T(R)$ be any node. 
	If $w \in \Sd_i(p,\T(R))$ then either $w = p$ or there exists $a \in \Child(p) \setminus \{p\}$ such that $l(a) < i$ and $w \in \Desc(a)$.
\end{restatable}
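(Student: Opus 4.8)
The plan is to unwind the definitions of $\Desc$, node-to-root path, and $V_i(p)$, and then argue by contradiction. The key observation is that if $w$ is a proper descendant of $p$, the child of $p$ sitting on the node-to-root path of $w$ is the natural candidate for the claimed node $a$, and the only thing left to verify is the level inequality $l(a)<i$.

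First I would dispose of the trivial case: if $w=p$, there is nothing to prove, so assume $w\neq p$. Since $w\in\Desc(p)$, the node $p$ lies on the node-to-root path $w=w_0,w_1,\dots,w_m=r$ of $w$, say $p=w_j$. Because $w\neq p$ and the levels strictly increase along this path (by the cover condition (\ref{dfn:cover_tree_compressed}b), each $w_{t+1}$ is the parent of $w_t$ with $l(w_t)<l(w_{t+1})$, so the $w_t$ are pairwise distinct), we have $j\geq 1$. Set $a:=w_{j-1}$. Then $a$ is the parent-child link below $p=w_j$, so $a\in\Child(p)$, and $a\neq p=w_j$ since all nodes on the path are distinct; hence $a\in\Child(p)\setminus\{p\}$. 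Moreover $a$ lies on the node-to-root path of $w$, so $w\in\Desc(a)$ by definition of descendant.

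Next I would prove $l(a)<i$ by contradiction. The cover condition gives $l(a)<l(p)$, i.e.\ $l(a)\leq l(p)-1$. Suppose $l(a)\geq i$. Then $i\leq l(a)\leq l(p)-1$, so in particular $i\leq l(p)-1$ and $V_i(p)$ is given by its first-branch definition $V_i(p)=\{u\in\Desc(p)\mid i\leq l(u)\leq l(p)-1\}$. Since $a\in\Desc(p)$ (indeed $a\in\Child(p)\subseteq\Desc(p)$) and $i\leq l(a)\leq l(p)-1$, we get $a\in V_i(p)$. But then $w\in\Desc(a)\subseteq\bigcup_{u\in V_i(p)}\Desc(u)$, which contradicts $w\in\Sd_i(p,\T(R))=\Desc(p)\setminus\bigcup_{u\in V_i(p)}\Desc(u)$. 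Therefore $l(a)<i$, and $a$ has all the required properties.

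There is no serious obstacle here; the proof is essentially a careful bookkeeping of definitions. The only point needing a moment of attention is the edge case $i\geq l(p)$, where $V_i(p)=\emptyset$: but this is handled automatically, because there the assumption $l(a)\geq i\geq l(p)$ already contradicts $l(a)<l(p)$, so the contradiction argument goes through verbatim.
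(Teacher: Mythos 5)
Your proof is correct and follows the natural route the paper itself takes: take the child of $p$ on the node-to-root path of $w$, and use the definition of $V_i(p)$ to rule out $l(a)\geq i$, with the edge case $i\geq l(p)$ handled by the cover condition $l(a)<l(p)$. No gaps.
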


Definition~\ref{dfn:implementation_compressed_cover_tree} explains the concrete implementation of a compressed cover tree.

\begin{restatable}[$\Child(p,i)$ and $\nxt(p,i,\T(R))$]{dfn} {dfnimplementation}\label{dfn:implementation_compressed_cover_tree}
	In a compressed cover tree $\T(R)$ on a set $R$, for any level $i$ and a node $p \in R$, set $\Child(p,i) = \{ a \in \Child(p) \mid l(a) = i \}$. 
	Let $\nxt(p,i,\T(R))$ be the maximal level $j$ satisfying $j < i$ and $\Child(p,i) \neq \emptyset$. 
	If such level does not exist, we set $j = l_{\min}(\T(R)) - 1$.
	For every node $p$, we store its set of children in a linked hash map so that
	 \begin{enumerate}[label=(\alph*)]	
	\item any key $i$ gives access to $\Child(p,i)$, 
	\item  $\Child(p,i)$  $\rightarrow$ $\Child(p,\nxt(p,i, \T(R)))$,	
	\item we can directly access $\max \{j \mid \Child(p,j) \neq \emptyset\}$.
	\end{enumerate}
\end{restatable}

\section{\capitalisewords{Construction of a compressed cover tree}}
\label{sec:text:ConstructionCovertree}

This section discusses a construction of a compressed cover tree. 
New Algorithm~\ref{alg:text:cover_tree_k-nearest_construction_whole} builds a compressed cover tree by using the  Insert() method from \citet[Algorithm~2]{beygelzimer2006cover}, which was specifically adapted for a compressed cover tree, see details in Appendix~\ref{sec:ConstructionCovertree}. 
\smallskip

The proof of  \citet[Theorem~6]{beygelzimer2006cover}, which estimated the time complexity of \citet[Algorithm~2]{beygelzimer2006cover}, was shown to be incorrect by \citet[Counterexample~4.2]{elkin2022counterexamples}. 
The main contribution of this section estimate  the time complexity of Algorithm \ref{alg:text:cover_tree_k-nearest_construction_whole}:
\begin{itemize}
    \item Theorem \ref{thm:construction_time} bounds the time complexity as 
    $O(c_m(R)^{10} \cdot \log_2(\Delta(R)) \cdot |R|)$ via the minimized expansion constant $c_m(R)$ and the aspect ratio $\Delta(R)$.
    \item  Theorem \ref{thm:construction_time_KR} bounds the time complexity as $O(c(R)^{12} \cdot \log_2|R| \cdot |R|)$ via the expansion constant $c(R)$.
\end{itemize}

\begin{dfn}[construction iteration set $L(\T(W),p)$]
	\label{dfn:text:cover_tree_construction_iteration_set}
	Let $W$ be a finite subset of a metric space $(X,d)$. 
	Let $\T(W)$ be a cover tree of Definition \ref{dfn:cover_tree_compressed} built on $W$ and let $p \in X \setminus W$ be an arbitrary point.
	Let $L(\T(W),p)$ be the set of all levels $i$ during iterations \ref{line:text:cof:loop_start}-\ref{line:text:cof:loop_end} of Algorithm \ref{alg:text:cover_tree_k-nearest_construction} launched with the inputs 
	$\T(W),p$. 
	   We set  $$\eta(i) = \min \{ t \in L(\T(W),p) \mid t > i\}.$$
\end{dfn}

\begin{restatable}[correctness of Algorithm \ref{alg:text:cover_tree_k-nearest_construction_whole}]{thmm} {thmconstructioncorrectness}\label{thm:construction_correctness}
	Algorithm~\ref{alg:text:cover_tree_k-nearest_construction_whole} builds a compressed cover tree in Definition~\ref{dfn:cover_tree_compressed}.
\end{restatable}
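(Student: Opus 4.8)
The plan is to prove correctness by induction on the number of points processed, reducing the whole statement to the correctness of a single call of the \emph{Insert()} subroutine adapted from \citet[Algorithm~2]{beygelzimer2006cover}. In the base case the tree consists of the first point $r$ alone, declared the root with an arbitrary level; then (\ref{dfn:cover_tree_compressed}a) holds because the maximum in the root condition is over the empty set, and (\ref{dfn:cover_tree_compressed}b)--(\ref{dfn:cover_tree_compressed}c) are vacuous. For the inductive step, assume the current tree $\T(W)$ satisfies Definition~\ref{dfn:cover_tree_compressed}, let $p \in X\setminus W$ be the next point, and show that the tree returned by \emph{Insert()} on input $(\T(W),p)$ is again a compressed cover tree on $W\cup\{p\}$. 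It also has to be checked that \emph{Insert()} always succeeds in placing $p$; termination of the descent is separately controlled by the iteration set $L(\T(W),p)$.

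I would first separate two cases. If $p$ lies outside $\bar{B}(r,2^{l_{\max}})$, so that no legal parent/level pair exists below the present root, the algorithm must act at the top: either it increments the root's level until $d(p,r)\le 2^{l(r)+1}$ and attaches $p$ directly to $r$, or it promotes $p$ to be the new root with a level $\ge 1+\max_{q\in W}l(q)$. In either sub-case I would verify (\ref{dfn:cover_tree_compressed}a) directly and observe that every cover set $C_i$ with $i\le$ the old $l_{\max}$ is unchanged, so (\ref{dfn:cover_tree_compressed}c) is preserved, while the single new cover edge satisfies $d\le 2^{l+1}$ by construction. Otherwise the algorithm descends through levels $i=l_{\max}-1,l_{\max}-2,\dots$ carrying a candidate set $Q_i\subseteq C_i$, and the crux is the loop invariant: when iteration $i$ begins, $Q_i$ contains every $q\in C_i$ with $d(p,q)\le 2^{i+1}$, that is, every node of $C_i$ that could be the parent of $p$ were $p$ inserted at level $i$. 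This is proved by downward induction on $i$: if $q\in C_i$ has $d(p,q)\le 2^{i+1}$ and $l(q)>i$ then $q\in C_{i+1}$ already and was kept; if $l(q)=i$, then by (\ref{dfn:cover_tree_compressed}b) its parent $q'$ has $l(q')>i$, hence $q'\in C_{i+1}$ and $d(q,q')\le 2^{l(q)+1}=2^{i+1}$, so $d(p,q')\le d(p,q)+d(q,q')\le 2^{i+1}+2^{i+1}=2^{i+2}$, meaning $q'\in Q_{i+1}$ and therefore $q$ was examined. This is the triangle-inequality argument of \citet{beygelzimer2006cover}, now stated cleanly for the compressed tree.

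Granting the invariant, I would verify the remaining two conditions at the level $i^*$ at which the descent stops and $p$ is inserted with some parent $q\in Q_{i^*}$ and level $l(p)$ read off from the stopping rule. Condition (\ref{dfn:cover_tree_compressed}b) follows because $q\in Q_{i^*}$ gives $d(p,q)\le 2^{i^*+1}$ while $l(q)\ge i^*+1>l(p)$. For (\ref{dfn:cover_tree_compressed}c): every cover set $C_i$ with $i>l(p)$ is untouched, so $d_{\min}(C_i)>2^i$ still holds there; for $i\le l(p)$ the new cover set is $C_i\cup\{p\}$, and the reason the descent stopped at $i^*$ rather than going lower is precisely that no node of $C_{i^*}$ lies within $2^{i^*}$ of $p$ — by the invariant this is the same as "no node of $Q_{i^*}$ within $2^{i^*}$", so $d(p,q)>2^{i^*}\ge 2^i$ for all $q\in C_i$; since the outer algorithm feeds in distinct points, $p$ differs from every node of $W$, so $d_{\min}$ stays strictly positive. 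Putting the root-raising case and the descent case together closes the induction.

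The step I expect to be the main obstacle is the bookkeeping that links the abstract stopping level with the level the pseudocode actually assigns to $p$ and with the empty-level jumps $\nxt(\cdot,\cdot,\T(R))$ of Definition~\ref{dfn:implementation_compressed_cover_tree}. Because a compressed cover tree skips empty levels, the descent does not visit every integer, so "the candidate set one level down is empty" is really a statement about a whole band of skipped levels; one must confirm that assigning $p$ any level inside that band keeps both required inequalities simultaneously satisfiable — the cover inequality $d(p,\mathrm{parent})\le 2^{l(p)+1}$ only loosens as $l(p)$ grows, the separation inequality $d(p,C_{l(p)})>2^{l(p)}$ only loosens as $l(p)$ shrinks — so the band must be shown non-empty exactly when an insertion is called for. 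Getting this gap argument right, in tandem with the root-raising case, is where the real care lies; the rest is the invariant argument inherited from \citet{beygelzimer2006cover}.
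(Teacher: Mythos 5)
Your overall strategy --- induction over the insertions, with a downward loop invariant that the candidate set at level $i$ contains every $q\in C_i$ with $d(p,q)\le 2^{i+1}$, maintained via the triangle inequality through parent links --- is the right skeleton and is the same one the paper uses. However, your verification of the separation condition (\ref{dfn:cover_tree_compressed}c) contains an invalid inference. You deduce ``no node of $C_{i^*}$ lies within $2^{i^*}$ of $p$'' and conclude $d(p,q)>2^{i^*}\ge 2^i$ \emph{for all} $q\in C_i$ with $i\le l(p)$. But for $i<i^*$ the cover set $C_i$ strictly contains $C_{i^*}$: it also holds every node with $l(q)\in[i,i^*)$, which the descent never examined and about which your invariant says nothing. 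The missing idea is a telescoping estimate along node-to-root paths: such a $q$ has an ancestor $a$ with $l(a)\ge i^*$ and $d(q,a)\le\sum_{s=l(q)}^{i^*-1}2^{s+1}=2^{i^*+1}-2^{l(q)+1}$, and since $R_{i^*}=\emptyset$ together with the invariant gives $d(p,a)>2^{i^*+1}$, one gets $d(p,q)>2^{l(q)+1}$, which is what separation at level $l(q)$ actually requires. Symmetrically, for levels $j$ with $i^*<j\le l(p)$ the bound $2^{i^*}$ is too weak; there one must use that the chosen parent $v$ minimizes $d(p,\cdot)$ over $R_{\eta(i^*)}$ and that $l(p)$ is defined so that $2^{l(p)+1}\le d(p,v)$, combined with the invariant at level $\eta(i^*)$. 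Without these two estimates condition (\ref{dfn:cover_tree_compressed}c) is not established; you correctly flag this band-of-levels issue as ``the main obstacle'' but leave it unresolved, and it is precisely where the proof's work lies.

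A second, structural mismatch: the root case you analyse does not occur in Algorithm~\ref{alg:text:cover_tree_k-nearest_construction_whole}. The root $r$ is fixed as an input, carries $l(r)=+\infty$ during all insertions (so $R_{l_{\max}}=\{r\}$ is always a nonempty fallback and every point admits a legal parent), and its final level $1+\max_{p\ne r}l(p)$ is assigned only in the last line of the outer loop; there is no root-raising and no promotion of $p$ to a new root, so that branch of your case analysis verifies a different (Beygelzimer-style) insertion routine. Relatedly, when Algorithm~\ref{alg:text:cover_tree_k-nearest_construction} stops because $R_{i^*}=\emptyset$, the parent is the nearest point $v$ of the \emph{previous} nonempty set $R_{\eta(i^*)}$ (line~\ref{line:text:cof:selectParentEnd}), not a point of the current candidate set, and $l(p)$ is computed from $d(p,v)$ rather than read off the stopping level; your check of (\ref{dfn:cover_tree_compressed}b), which assumes $l(p)\le i^*$ and a parent in $Q_{i^*}$, therefore does not apply to the pseudocode as written.
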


\begin{restatable}[time complexity of a key step for $\T(R)$]{lem}{lemgeneralconstructiontime}\label{lem:general_construction_time}
	Arbitrarily order all points of a finite reference set $R$ in a metric space $(X,d)$ starting from the root: $r=p_1$, $p_2,\dots,p_{|R|}$.
	Set $W_1=\{r\}$ and $W_{y+1}=W_y\cup\{p_y\}$ for  
	$y=1,...,|R|-1$. 
	Then Algorithm~\ref{alg:text:cover_tree_k-nearest_construction_whole} builds a compressed cover tree $\T(R)$ in time $$O\Big((c_m(R))^8 \cdot \max\limits_{y=1,...,|R|-1}L(\T(W_{y}),p_{y}) \cdot |R|\Big),$$
	where $c_m(R)$ is the minimized expansion constant from Definition \ref{dfn:expansion_constant}.
\end{restatable}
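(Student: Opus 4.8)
The plan is to charge the entire running time to the $|R|-1$ successive calls of the adapted Insert() method, one for each of the points $p_2,\dots,p_{|R|}$, plus an $O(|R|)$ overhead for attaching each new node and for the occasional increase of the root level. Correctness of the output is already guaranteed by Theorem~\ref{thm:construction_correctness}, so only the time bound remains. Hence it suffices to show that the call inserting $p_y$ into the current tree $\T(W_y)$ runs in time $O\big((c_m(R))^8\cdot|L(\T(W_y),p_y)|\big)$; summing over $y=1,\dots,|R|-1$ and replacing each term by the maximum then gives the stated bound, since $\sum_{y}|L(\T(W_y),p_y)|\le |R|\cdot\max_{y}|L(\T(W_y),p_y)|$.

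Fix one such insertion. The algorithm walks through the levels of $L(\T(W_y),p_y)$ in decreasing order, maintaining at each visited level $i$ a candidate set $Q_i$ of tree nodes. The key structural observation — to be read off from the pseudo-code of Algorithm~\ref{alg:text:cover_tree_k-nearest_construction} — is the invariant that every $q\in Q_i$ not only lies in the cover set $C_i$ but also satisfies $d(p_y,q)\le 2^{i+1}$. Since the separation condition (\ref{dfn:cover_tree_compressed}c) makes $C_i$ a $2^i$-sparse set, Lemma~\ref{lem:packing} applied with $\delta=2^i$ and $t=2^{i+1}$ (so that $\mu=\lceil\log_2 9\rceil=4$) yields $|Q_i|\le (c_m(C_i))^4\le(c_m(R))^4$, the last inequality by Lemma~\ref{lem:expansion_constant_property}.

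Next I would bound the work done between two consecutive visited levels. To form the candidate set at the next level, the algorithm expands each $q$ in the current candidate set to its children, reading them off in $O(1)$ time apiece from the linked hash map of Definition~\ref{dfn:implementation_compressed_cover_tree}, computes the distance from $p_y$ to every such child, and keeps those passing the new proximity test. By the width bound of Lemma~\ref{lem:compressed_cover_tree_width_bound}, each node has at most $(c_m(R))^4$ children on any one level, so this transition performs $O\big((c_m(R))^4\cdot(c_m(R))^4\big)=O\big((c_m(R))^8\big)$ distance computations and hash-map look-ups. Summed over the $|L(\T(W_y),p_y)|$ visited levels, and adding the $O(1)$ amortized cost of finally linking $p_y$ to its chosen parent, the insertion of $p_y$ costs $O\big((c_m(R))^8\cdot|L(\T(W_y),p_y)|\big)$, which is exactly what was needed.

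The step I expect to be the main obstacle is establishing the invariant used above, namely that the candidate set maintained at level $i$ simultaneously sits inside the cover set $C_i$ and inside the ball $\bar B(p_y,2^{i+1})$: this is precisely what keeps the packing exponent a constant rather than growing with $i$, and it must be extracted carefully from the pruning rule of Insert() together with the cover and separation conditions of Definition~\ref{dfn:cover_tree_compressed}. A secondary technical point is the amortized $O(1)$ accounting for the hash-map operations of Definition~\ref{dfn:implementation_compressed_cover_tree} and for the globally $O(|R|)$ number of root-level increases, which has to be checked not to spoil the per-level $O((c_m(R))^8)$ estimate.
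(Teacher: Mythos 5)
Your proposal is correct and follows essentially the same route as the paper: per iteration the candidate set $R_{\eta(i)}\subseteq C_{\eta(i)}\cap\bar B(p_y,2^{\eta(i)+1})$ has size at most $(c_m(R))^4$ by Lemma~\ref{lem:packing} (with $\delta=2^{\eta(i)}$, $t=2\delta$, $\mu=4$) together with Lemma~\ref{lem:expansion_constant_property}, each candidate contributes at most $(c_m(R))^4$ children per level by Lemma~\ref{lem:compressed_cover_tree_width_bound}, giving $O((c_m(R))^8)$ work per iteration, times $|L(\T(W_y),p_y)|$ iterations per insertion, summed over the $|R|-1$ insertions. The invariant you flag as the main obstacle is in fact immediate from line~\ref{line:text:cof:defRim1} of Algorithm~\ref{alg:text:cover_tree_k-nearest_construction}, which defines $R_i$ exactly as the points of $\C_i(R_{\eta(i)})\subseteq C_i$ within distance $2^{i+1}$ of $p_y$, so no extra argument is needed there.
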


\begin{algorithm}
	\caption{Building a compressed cover tree $\T(R)$ from Definition \ref{dfn:cover_tree_compressed}.
	}
	\label{alg:text:cover_tree_k-nearest_construction_whole}
	\begin{algorithmic}[1]
		\STATE \textbf{Input} : a finite subset $R$ of $(X,d)$, root $r \in R$
		\STATE \textbf{Output} : a compressed cover tree $\T(R)$. 
		\STATE Build the initial compressed cover tree $\T = \T(\{r\})$ consisting of the root node $r$ by setting $l(r) = +\infty$. 
		\FOR{$p \in R \setminus \{r\}$} \alglinelabel{line:text:con:for:begin}
		\STATE $\T \leftarrow $ run AddPoint$(\T , p )$, Algorithm \ref{alg:text:cover_tree_k-nearest_construction}.
		\ENDFOR  \alglinelabel{line:text:con:for:end}
		\STATE For the root $r$ of $\T$ set $l(r) = 1 +  \max_{p \in R \setminus \{r\}}l(p)$
	\end{algorithmic}
\end{algorithm}

\begin{algorithm}
	\caption{Building $\T(W \cup \{p\})$ in 
		lines \ref{line:text:con:for:begin}-\ref{line:text:con:for:end} of Algorithm \ref{alg:text:cover_tree_k-nearest_construction_whole}.}
	\label{alg:text:cover_tree_k-nearest_construction}
	\begin{algorithmic}[1]
		\STATE \textbf{Function} AddPoint(a compressed cover tree $\T(W)$ with a root $r$, a point $p\in X$)
		\STATE \textbf{Output} : compressed cover tree $\T(W \cup \{p\})$. 
		\STATE Set $i \leftarrow l_{\max}(\T(W)) - 1$ and $\eta(l_{\max} - 1) = l_{\max}$ \\
		 \COMMENT{If the root $r$ has no children then $ i \leftarrow -\infty$}
		\STATE Set $R_{l_{\max}} \leftarrow \{r\}$ 
        \STATE initialize sorted dictionary $M$ with $M[l_{\max}] = \{r\}$
		\WHILE{$i \geq l_{\min}$} \alglinelabel{line:text:cof:loop_start}
		 \STATE $V = \cup_{q \in R_{\eta(i)}} \{a \in \Child(q) \mid l(a) = i \} $.
		\STATE Assign $\mathcal{C}_i(R_{\eta(i)}) \leftarrow  R_{\eta(i)} \cup V$.
		\alglinelabel{line:text:cof:dfn_C}
		\STATE Set $R_{i} = \{a \in \C_i(R_{\eta(i)}) \mid d(p,a) \leq 2^{i+1} \}$ 
        \STATE Assign $M[i] = R_{i}$.
		\alglinelabel{line:text:cof:defRim1}
		\IF {$R_i$ is empty}\alglinelabel{line:text:cof:inner_loop:begin}
		\STATE  Launch Algorithm \ref{alg:text:construction_parent_assign} with parameters $(p, M)$ and \textbf{exit this algorithm}.\alglinelabel{line:text:cof:inner_loop:mid}
		\ENDIF \alglinelabel{line:text:cof:inner_loop:end}
		\STATE $t = \max_{ a \in R_{i}} \nxt(a,i,\T(W)) $  \alglinelabel{line:text:cof:dfn_t} \\
		\COMMENT{If $R_{i}$ has no children we set $t = l_{\min} - 1$}
		\STATE $\eta(i) \leftarrow i$ and $i \leftarrow t$ 
		\ENDWHILE \alglinelabel{line:text:cof:loop_end}
        \STATE Launch Algorithm \ref{alg:text:construction_parent_assign} with parameters $(p, M)$. \alglinelabel{line:text:cof:selectParentEnd}
        
	\end{algorithmic}
\end{algorithm}

\begin{algorithm}
\caption{Assign node subprocedure}
\label{alg:text:construction_parent_assign}
\begin{algorithmic}[1]
\STATE \textbf{Function} AssignParent(Point $p$,  dictionary $M$)
\STATE \textbf{Output:} Compressed cover tree $\T(W \cup \{p\})$
\STATE Set $i$ to be the lowest key of $M$.
\WHILE{$i \leq l_{max}$}
\IF {$d(p,R_i) \leq 2^{i}$}\alglinelabel{line:text:assignparent:assgiment:ifstatement}
\STATE Let $q \in R_{i}$ such that $d(q,p) = d(R_{i},p)$, let $x$ maximal integer for which $d(p,q) > 2^{x}$. \alglinelabel{line:text:assignparent:assgiment:first}
 \STATE Set $l(p) = x$ and $q$ to be the parent of $p$.  \alglinelabel{line:text:assignparent:assgiment}
\ENDIF
\STATE Find next key $j > i$ of $M$ and set $i = j$
\ENDWHILE
\end{algorithmic}
\end{algorithm}

\begin{restatable}[time complexity of $\T(R)$ via aspect ratio]{thmm}{thmconstructiontime}\label{thm:construction_time}
	Let $R$ be a finite subset of a metric space $(X,d)$ having the aspect ratio $\Delta(R)$.
	Algorithm~\ref{alg:text:cover_tree_k-nearest_construction_whole} builds
	a compressed cover tree $\T(R)$ in time $O((c_m(R))^8 \cdot \log_2(\Delta(R)) \cdot |R|),$
	where $c_m(R)$ is the minimized expansion constant from Definition~\ref{dfn:expansion_constant}.
\end{restatable}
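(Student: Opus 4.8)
The plan is to read the statement off Lemma~\ref{lem:general_construction_time}, which already isolates all the dimension-dependent work: it bounds the construction time of Algorithm~\ref{alg:text:cover_tree_k-nearest_construction_whole} by $O\big((c_m(R))^8 \cdot \max_{y} L(\T(W_y),p_y) \cdot |R|\big)$, the factor $(c_m(R))^8$ coming from the width bound of Lemma~\ref{lem:compressed_cover_tree_width_bound} that controls how many nodes are touched on each processed level. Hence the only remaining task is a uniform bound $\max_{y} L(\T(W_y),p_y) = O(\log_2 \Delta(R))$ on the number of while-loop iterations (lines~\ref{line:text:cof:loop_start}--\ref{line:text:cof:loop_end} of Algorithm~\ref{alg:text:cover_tree_k-nearest_construction}) performed during a single insertion.

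To obtain this bound I would track the values taken by the loop variable $i$. The loop starts at $i = l_{\max}(\T(W)) - 1$, and after each iteration that does not exit early it resets $i$ to $t = \max_{a \in R_i}\nxt(a,i,\T(W))$, which by Definition~\ref{dfn:implementation_compressed_cover_tree} is either $l_{\min}-1$ or a level at which some node $a \in R_i \subseteq W$ genuinely has a child. Consequently every element of $L(\T(W),p)$ other than the initial value $l_{\max}-1$ and the possible terminal value $l_{\min}-1$ is a level occupied by a node of $\T(W)$, i.e. a level $j$ with $C_j \neq C_{j+1}$ among the cover sets of $\T(W)$. The number of such levels equals $|\{i \mid C_{i-1}\neq C_i\}| \leq |H(\T(W))|$, the height of $\T(W)$ from Definition~\ref{dfn:depth}, so $|L(\T(W),p)| \leq |H(\T(W))| + 2$.

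Finally I would apply Lemma~\ref{lem:text:depth_bound} to $W = W_y$, giving $|H(\T(W_y))| \leq 1 + \log_2 \Delta(W_y)$. Since $W_y \subseteq R$ we have $\diam(W_y) \leq \diam(R)$ and $d_{\min}(W_y) \geq d_{\min}(R)$, hence $\Delta(W_y) \leq \Delta(R)$ and $|L(\T(W_y),p_y)| \leq 3 + \log_2 \Delta(R) = O(\log_2 \Delta(R))$. Substituting this into Lemma~\ref{lem:general_construction_time} yields the claimed complexity $O\big((c_m(R))^8 \cdot \log_2(\Delta(R)) \cdot |R|\big)$.

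The main obstacle is making the middle step fully rigorous: one must verify that the jump $i \leftarrow t$ never skips a level that would have spawned its own iteration — that there is really nothing to process strictly between $t$ and the current $i$ — and must handle cleanly the degenerate cases, namely an empty $R_i$ that triggers an early exit to line~\ref{line:text:cof:selectParentEnd}, an $R_i$ with no children that forces $t = l_{\min}-1$, and a very small $W_y$ (such as $W_y = \{r\}$) for which $\Delta(W_y)$ is vacuous but $L(\T(W_y),p_y)$ is trivially $O(1)$. Once the iteration count is pinned down, everything else is routine substitution on top of Lemma~\ref{lem:general_construction_time}.
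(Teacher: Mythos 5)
Your proposal is correct and follows essentially the same route as the paper: the paper's proof is exactly the one-line substitution of the height bound from Lemma~\ref{lem:text:depth_bound} into Lemma~\ref{lem:general_construction_time}, bounding $\max_y|L(\T(W_y),p_y)|$ by the tree height and hence by $1+\log_2\Delta(R)$. Your extra care in justifying that each non-trivial loop iteration lands on a level of the height set, and in passing from $\Delta(W_y)$ to $\Delta(R)$ via monotonicity, only makes explicit what the paper leaves implicit.
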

\begin{proof}
In Lemma \ref{lem:general_construction_time}, use the bounds from Lemma \ref{lem:text:depth_bound}:
$$\max\limits_{y=2,...,|R|}|L(\T(W_{y-1}),p_{y})| \leq H(\T(R))\leq 1 + \log_2\Delta(R).$$
\end{proof}

\begin{restatable}{lem}{lemknnnextlevelfinder}\label{lem:knn_next_level_finder_for_log_depth}
	Let $(X,d)$ be a metric space and let $W \subseteq X$ be its finite subset. Let $q \in X \setminus W$ be an arbitrary point.
	Let $i \in L(\T(W),q)$ be arbitrarily iteration of Definition \ref{dfn:text:cover_tree_construction_iteration_set}. Assume that $t = \eta(\eta(i+1))$ is defined. Then there exists $p \in W$ satisfying $2^{i+1} < d(p,q) \leq 2^{t+1}$.
\end{restatable}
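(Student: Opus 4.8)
The plan is to argue by contradiction: assume that $W$ contains no point $p$ with $2^{i+1}<d(p,q)\le 2^{t+1}$, which is exactly the statement that $W\cap\bar B(q,2^{t+1})=W\cap\bar B(q,2^{i+1})$. Put $m=\eta(i+1)$, so that $t=\eta(m)$; since $\eta$ strictly increases the level, $m\ge i+2$ and hence $t\ge m+1\ge i+3$. I will follow the sets $R_j$ produced by Algorithm~\ref{alg:text:cover_tree_k-nearest_construction} on input $(\T(W),q)$ for $j\in\{t,m\}$ and extract a contradiction from the separation condition of Definition~\ref{dfn:cover_tree_compressed}(c).

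First I claim that $R_t$ is a singleton. Because $m\in L(\T(W),q)$ and $m<t$, the iteration does not terminate before it reaches level $m$; in particular $R_t\neq\emptyset$ (in the degenerate case $t=l_{\max}$ one has $R_t=\{r\}$ by the initialisation of AddPoint, and then $d(q,r)\le 2^{t+1}$ as well). By construction $R_t\subseteq\bar B(q,2^{t+1})$, so the contradiction hypothesis forces $R_t\subseteq\bar B(q,2^{i+1})$. Moreover $R_t\subseteq C_t$, and by the separation condition $C_t$ is $2^{t}$-separated. If $u\neq v$ both lay in $C_t\cap\bar B(q,2^{i+1})$ then $d(u,v)\le 2^{i+2}$ while $d(u,v)>2^{t}\ge 2^{i+3}$, which is impossible; hence $|C_t\cap\bar B(q,2^{i+1})|\le 1$ and $R_t=\{w\}$ for a single node $w$ with $d(q,w)\le 2^{i+1}$.

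Next, level $m$ is the iteration that immediately follows level $t$ in the decreasing iteration sequence, so the algorithm moves on with $\nxt(w,t,\T(W))=\max_{x\in R_t}\nxt(x,t,\T(W))=m$; since $m\ge l_{\min}$, Definition~\ref{dfn:implementation_compressed_cover_tree} then shows that $w$ really has a child at level $m$ (and the same is true directly when $t=l_{\max}$ and $w=r$). Therefore $\C_m(R_t)=\{w\}\cup\{a\in\Child(w):l(a)=m\}$ has at least two elements. However, each such child $a$ satisfies $d(w,a)\le 2^{m+1}$ by the cover condition, so $d(q,a)\le d(q,w)+d(w,a)\le 2^{i+1}+2^{m+1}<2^{m+2}\le 2^{t+1}$, using $i+1\le m-1$ and $m+1\le t$; the contradiction hypothesis then gives $d(q,a)\le 2^{i+1}$. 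Since also $d(q,w)\le 2^{i+1}$, we conclude $\C_m(R_t)\subseteq\bar B(q,2^{i+1})$. But $\C_m(R_t)\subseteq C_m$ and $C_m$ is $2^{m}$-separated with $m\ge i+2$, so the same packing argument as above yields $|C_m\cap\bar B(q,2^{i+1})|\le 1$, contradicting $|\C_m(R_t)|\ge 2$. This contradiction proves the lemma.

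I expect the only real subtlety to be the two nested uses of the separation condition — once to collapse $R_t$ to a single node $w$ (for which the comfortable gap $t\ge i+3$ supplied by the double $\eta$ is exactly what is needed), and once to bound $\C_m(R_t)$ — together with the elementary but essential step of combining the cover condition with the no-annulus hypothesis to confine all of $\C_m(R_t)$ inside the small ball $\bar B(q,2^{i+1})$. The bookkeeping of the algorithm's iteration sequence (which level follows which, and why $R_t\neq\emptyset$) and the edge case $t=l_{\max}$ are routine but must be stated with care.
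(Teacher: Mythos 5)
Your proof is correct and relies on the same key ingredients as the paper's argument: since $t=\eta(m)$ with $m=\eta(i+1)$, the construction loop passes from level $t$ directly to level $m$, which forces some $w\in R_t$ (with $d(q,w)\le 2^{t+1}$) to have a child $a$ at level $m$, and the cover condition together with the separation of $C_m$ then places one of $w,a$ in the annulus $2^{i+1}<d(\cdot,q)\le 2^{t+1}$. The only stylistic difference is your contrapositive framing and the preliminary step collapsing $R_t$ to a singleton; a direct case split (either $d(q,w)>2^{i+1}$ and $w$ works, or $d(q,w)\le 2^{i+1}$ and then the child $a$ must satisfy $d(q,a)>2^{i+1}$ by $2^{m}$-separation while $d(q,a)\le 2^{i+1}+2^{m+1}\le 2^{t+1}$) reaches the same conclusion without that step.
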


\begin{restatable}[Construction iteration bound]{lem}{lemconstructiondepthbound}\label{lem:construction_depth_bound}
	Let $A, W$ be finite subsets of a metric space $X$ satisfying $W \subseteq A \subseteq X$.
Take a point $q \in A \setminus W$. 
Given a compressed cover tree $\T(W)$ on $W$, Algorithm~\ref{alg:text:cover_tree_k-nearest_construction} runs lines \ref{line:text:cof:loop_start}-\ref{line:text:cof:loop_end} this number of times: $|L(\T(W),q)| = O\big(c(A)^2 \cdot \log_2(|A|)\big)$.
\end{restatable}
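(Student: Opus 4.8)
I need to bound the number $|L(\T(W),q)|$ of iterations of the while loop in Algorithm~\ref{alg:text:cover_tree_k-nearest_construction}. The plan is to show that the levels visited during the loop cannot be too spread out relative to $\log_2|A|$, by extracting from the loop a long geometrically-decreasing sequence of distances and feeding it into the extended growth bound (Lemma~\ref{lem:growth_bound_extension}). First I would recall the structure of the iteration set: at each visited level $i \in L(\T(W),q)$, the set $R_i$ (line~\ref{line:text:cof:defRim1}) consists of the points $a$ in the current candidate cover set with $d(q,a) \le 2^{i+1}$, and the loop jumps from $i$ down to $t = \eta(i)$ via line~\ref{line:text:cof:dfn_t}. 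The key combinatorial fact I would use is Lemma~\ref{lem:knn_next_level_finder_for_log_depth}: whenever $t = \eta(\eta(i+1))$ is defined, there is a point $p \in W$ with $2^{i+1} < d(p,q) \le 2^{t+1}$. This pins down a witness point for each "double step" of the iteration, and the witnesses at well-separated levels have distances to $q$ that differ by large multiplicative factors.

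**Main argument.** List $L(\T(W),q) = \{i_1 > i_2 > \dots > i_N\}$ in decreasing order, where $N = |L(\T(W),q)|$. I would select a sub-sequence of these levels spaced far enough apart — taking every fourth one, say $i_1, i_5, i_9, \dots$ — so that consecutive selected levels $i_j, i_{j'}$ satisfy $i_j \ge i_{j'} + \text{(a constant)}$ and, more importantly, the triple-nesting $\eta(\eta(\cdot))$ needed by Lemma~\ref{lem:knn_next_level_finder_for_log_depth} applies. Applying that lemma repeatedly produces points $p_1, p_2, \dots, p_{\lfloor N/c \rfloor} \in W \subseteq A$ whose distances to $q$ form a geometric-type progression; after possibly thinning once more I can arrange $4\, d(p_\ell, q) \le d(p_{\ell+1}, q)$ for all $\ell$, which is exactly the hypothesis of the extended growth bound. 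Then Lemma~\ref{lem:growth_bound_extension} gives
$$|A| \ge \Big|\bar B\big(q, \tfrac{4}{3} d(q,p_{n})\big)\Big| \ge \Big(1 + \tfrac{1}{c(A)^2}\Big)^{n} \cdot \Big|\bar B\big(q, \tfrac{1}{3} d(q,p_1)\big)\Big| \ge \Big(1 + \tfrac{1}{c(A)^2}\Big)^{n},$$
where $n = \Theta(N)$. Taking logarithms, $n \log\!\big(1 + \tfrac{1}{c(A)^2}\big) \le \log_2|A|$, and since $\log\!\big(1+\tfrac{1}{c(A)^2}\big) \ge \tfrac{1}{2 c(A)^2}$ for $c(A)\ge 2$ (using $\log(1+x) \ge x/2$ for $x \in (0,1]$), we get $n = O(c(A)^2 \log_2|A|)$, hence $N = O(c(A)^2 \log_2|A|)$ as claimed.

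**Anticipated obstacles.** The delicate part is the bookkeeping that connects the abstract levels in $L(\T(W),q)$ to the hypotheses of Lemma~\ref{lem:knn_next_level_finder_for_log_depth} and then of Lemma~\ref{lem:growth_bound_extension}. Specifically, I must be careful that (i) the "$t = \eta(\eta(i+1))$ is defined" precondition really does hold for all but boundedly many of the chosen levels — the first and last few iterations, where the root or $l_{\min}-1$ edge cases intervene, have to be discarded; (ii) the witness points produced at distinct selected levels are genuinely distinct and their distances to $q$ are strictly increasing with enough slack to force the factor-$4$ gap after a fixed-size thinning; and (iii) the loop really does visit each level at most once, so that $N$ legitimately counts distinct iterations. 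None of these requires new ideas, but getting the constant in "every fourth level" right, and handling the boundary iterations cleanly, is where the proof needs care. A secondary subtlety is that Lemma~\ref{lem:growth_bound_extension} is stated for a sequence in "$R$" while we apply it with the ambient finite set $A$; this is harmless since we only use that $\bar B(q, \cdot) \cap A$ grows, but I would state it explicitly to avoid confusion.
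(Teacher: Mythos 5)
Your proposal follows essentially the same route as the paper's own proof: invoke Lemma~\ref{lem:knn_next_level_finder_for_log_depth} to extract witness points in annuli $\bar B(q,2^{t+1})\setminus\bar B(q,2^{i+1})$, thin the iteration set by a constant factor so the witnesses satisfy the factor-$4$ hypothesis of Lemma~\ref{lem:growth_bound_extension}, and then take logarithms of $|A|\ge(1+1/c(A)^2)^{n}$ to conclude. The only difference is bookkeeping: the paper's thinning constant is $12$ rather than your proposed $4$, which is absorbed by the $O(\cdot)$ in any case.
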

\begin{hproof}
Assume that Algorithm~\ref{alg:text:cover_tree_k-nearest_construction} was launched with parameters $(q, \T(W))$
Lemma~\ref{lem:knn_next_level_finder_for_log_depth} showed that for any iterations $i \in L(\T(W),q)$, if $t = \eta(\eta(i+1))$ exists, then there exists $p \in W$ which belongs to annulus $\bar{B}(q,2^{t+1}) \setminus \bar{B}(q, 2^{i+1})$.
We can select a subsequence $S$ of iterations $L(\T(W),q)$, in such a way that for every $i \in S$ there exists point $p_i \in \bar{B}(q,2^{t+1}) \setminus \bar{B}(q, 2^{i+1})$. It can be shown that the size of $S$ selected this way is $12 \cdot |S|\geq |L(\T(W),q) |$
\smallskip 

Denote by $P = (p_1, ..., p_n)$ the sequence of points $p_i$ obtained from $S$. Using Lemma~\ref{lem:growth_bound_extension} we obtain $$|\bar{B}(q,\frac{4}{3}  d(q,p_n))| \geq (1+\frac{1}{c(R)^2})^{n} \cdot |\bar{B}(q,\frac{1}{3} d(q,p_1))|$$
which can be written as $$|A| \geq \frac{|\bar{B}(q,\frac{4}{3} \cdot d(q,p_n))|}{|\bar{B}(q,\frac{1}{3} \cdot d(q,p_1))|} \geq (1+\frac{1}{c(A)^2})^{|S|}$$
Lemma~\ref{lem:hard_function_bound} gives $c(A)^2\log(A) \geq |S|$. Combining this with the fact that $12 \cdot |S|\geq |L(\T(W),q) |$ we finally conclude that $|L(\T(W),q) | \leq 12 \cdot c(A)^2 \cdot \log_2(|A|)$.
\end{hproof}

\begin{restatable}[time for $\T(R)$ via expansion constants]{thmm}{thmconstructiontimeKR}\label{thm:construction_time_KR}
	Let $R$ be a finite subset of a metric space $(X,d)$. Let $A$ be a finite subset of $X$ satisfying $R \subseteq A \subseteq X$. 
	Then Algorithm~\ref{alg:text:cover_tree_k-nearest_construction_whole} builds a compressed cover tree $\T(R)$ in time $O((c_m(R))^8 \cdot c(A)^2 \cdot \log_2(|A|) \cdot |R|)$, see the expansion constants $c(A),c_m(R)$ in Definition \ref{dfn:expansion_constant}.
\end{restatable}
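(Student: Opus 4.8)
The plan is to combine the two intermediate bounds that the excerpt has already set up. Lemma~\ref{lem:general_construction_time} gives the running time of Algorithm~\ref{alg:text:cover_tree_k-nearest_construction_whole} as
$$O\Big((c_m(R))^8 \cdot \max_{y=1,\dots,|R|-1} |L(\T(W_y),p_y)| \cdot |R|\Big),$$
where $r=p_1,p_2,\dots,p_{|R|}$ is the insertion order, $W_1=\{r\}$ and $W_{y+1}=W_y\cup\{p_y\}$. So the entire theorem reduces to bounding the single quantity $\max_y |L(\T(W_y),p_y)|$, the maximum number of times the while-loop in lines~\ref{line:text:cof:loop_start}--\ref{line:text:cof:loop_end} of Algorithm~\ref{alg:text:cover_tree_k-nearest_construction} executes over all insertions.

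The key observation is that Lemma~\ref{lem:construction_depth_bound} provides exactly this iteration bound, and it is stated for an arbitrary enclosing finite set. Concretely, for each $y$ we apply Lemma~\ref{lem:construction_depth_bound} with $W=W_y$, with the point $q=p_y$, and with the enclosing set taken to be the given $A$ (note $W_y\subseteq R\subseteq A$ and $p_y\in A\setminus W_y$, so the hypotheses $W\subseteq A\subseteq X$ and $q\in A\setminus W$ hold). This yields $|L(\T(W_y),p_y)| = O\big(c(A)^2\cdot \log_2(|A|)\big)$, and crucially the bound is uniform in $y$ because the right-hand side depends only on $A$, not on $W_y$ or $p_y$. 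Hence $\max_y |L(\T(W_y),p_y)| = O\big(c(A)^2 \cdot \log_2(|A|)\big)$.

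Substituting this into the formula from Lemma~\ref{lem:general_construction_time} gives the running time
$$O\big((c_m(R))^8 \cdot c(A)^2 \cdot \log_2(|A|) \cdot |R|\big),$$
which is the claimed bound. The proof is therefore essentially a two-line composition of Lemma~\ref{lem:general_construction_time} with Lemma~\ref{lem:construction_depth_bound}; there is no genuine obstacle at this level, since all the real work (the growth-bound telescoping argument behind Lemma~\ref{lem:construction_depth_bound}, and the width/packing estimates behind Lemma~\ref{lem:general_construction_time}) has already been carried out. The only point requiring a sentence of care is checking that Lemma~\ref{lem:construction_depth_bound} may legitimately be invoked with the same external set $A$ at every insertion step, so that the maximum over $y$ inherits the uniform bound — and this is immediate from $W_y\subseteq R\subseteq A$.
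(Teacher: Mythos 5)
Your proposal is correct and matches the paper's own proof, which is exactly the two-line composition you describe: the paper simply states that the theorem ``follows from Lemmas~\ref{lem:construction_depth_bound} and~\ref{lem:general_construction_time}.'' Your additional check that Lemma~\ref{lem:construction_depth_bound} applies uniformly with the same enclosing set $A$ at every insertion step (since $W_y\subseteq R\subseteq A$ and $p_y\in A\setminus W_y$) is a sensible detail the paper leaves implicit.
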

\begin{proof}
	It follows from Lemmas~\ref{lem:construction_depth_bound} and~\ref{lem:general_construction_time}.
\end{proof}


\begin{restatable}{cor}{corconstructiontimeKR}
\label{cor:construction_time_KR}
    Let $R$ be a finite subset of a metric space $(X,d)$. 
	Then Algorithm~\ref{alg:text:cover_tree_k-nearest_construction_whole} builds
	a compressed cover tree $\T(R)$ in time $O((c_m(R))^8 \cdot c(R)^2 \cdot \log_2(|R|)) \cdot |R|),$
	where the constants $c(R),c_m(R)$ appeared in Definition \ref{dfn:expansion_constant}.
\end{restatable}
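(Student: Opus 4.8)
The plan is to obtain Corollary~\ref{cor:construction_time_KR} as the special case $A=R$ of Theorem~\ref{thm:construction_time_KR}, so the whole proof is essentially a one-line substitution together with a check that the substitution is legitimate. First I would note that $R$ is a finite subset of the metric space $(X,d)$, hence the chain of inclusions $R \subseteq A \subseteq X$ required by Theorem~\ref{thm:construction_time_KR} is satisfied trivially by taking $A = R$; there is no hidden hypothesis on $A$ beyond finiteness and the inclusions, so this choice is admissible.

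Next I would invoke Theorem~\ref{thm:construction_time_KR} directly with this choice of $A$. That theorem asserts that Algorithm~\ref{alg:text:cover_tree_k-nearest_construction_whole} builds a compressed cover tree $\T(R)$ in time $O\big((c_m(R))^8 \cdot c(A)^2 \cdot \log_2(|A|) \cdot |R|\big)$. Substituting $A = R$ turns $c(A)$ into $c(R)$, $|A|$ into $|R|$, and yields the claimed bound $O\big((c_m(R))^8 \cdot c(R)^2 \cdot \log_2(|R|) \cdot |R|\big)$. The expansion constants $c(R)$ and $c_m(R)$ are both well-defined because $R$ is finite, as remarked after Definition~\ref{dfn:expansion_constant}, so the statement is not vacuous.

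There is essentially no obstacle here: the corollary is stated precisely so that the reader does not have to pick an auxiliary superset $A$ and can work with the reference set itself. The only thing worth a sentence is the observation that specializing $A=R$ does \emph{not} necessarily give the sharpest bound obtainable from Theorem~\ref{thm:construction_time_KR} — a cleverly chosen larger $A$ can have $c(A) \ll c(R)$ (as the outlier example in Section~\ref{sec:text:intro} shows) while $\log_2|A|$ grows only mildly — but the corollary deliberately records the simplest, purely intrinsic-to-$R$ consequence, which is what most applications will use.

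\begin{proof}
Since $R$ is a finite subset of $(X,d)$, we may apply Theorem~\ref{thm:construction_time_KR} with the choice $A = R$, which clearly satisfies $R \subseteq A \subseteq X$. Then $c(A) = c(R)$ and $|A| = |R|$, and both $c(R)$ and $c_m(R)$ are finite because $R$ is finite (see the remark after Definition~\ref{dfn:expansion_constant}). Hence Algorithm~\ref{alg:text:cover_tree_k-nearest_construction_whole} builds a compressed cover tree $\T(R)$ in time
$$O\big((c_m(R))^8 \cdot c(R)^2 \cdot \log_2(|R|) \cdot |R|\big),$$
as claimed.
\end{proof}
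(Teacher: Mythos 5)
Your proof is correct and is exactly the paper's own argument: the paper likewise proves this corollary by setting $A=R$ in Theorem~\ref{thm:construction_time_KR}. Your additional remarks about admissibility of the choice and about sharper bounds from larger $A$ are sound but not needed.
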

\begin{proof}
	In Theorem~\ref{thm:construction_time_KR} set $A = R$.
\end{proof}

\section{\capitalisewords{New $k$-nearest neighbor search algorithm}}
\label{sec:text:better_approach_knn_problem}

This section is motivated by
\citet[Counterexample~5.2]{elkin2022counterexamples}, which showed that the proof of past time complexity claim in  \citet[Theorem~5]{beygelzimer2006cover} for the nearest neighbor search algorithm contained gaps.  For extra details and all proofs, see 
Appendix~\ref{sec:better_approach_knn_problem}.
 \smallskip

 \noindent
The gaps are filled by new Algorithm \ref{alg:text:cover_tree_k-nearest} for all $k$-nearest neighbors, which generalizes and improves the original method 
in \citet[Algorithm~1]{beygelzimer2006cover}. 
\smallskip

The first improvement is the $\lambda$-point in line~\ref{line:text:knnu:dfnLambda}, which helps find all $k$-nearest neighbors of a given query point for any $k \geq 1$. 
The second improvement is a new break condition for the loop in line \ref{line:text:knnu:qtoofar:condition}. 
This condition is used in the proof of Lemma \ref{lem:knn_depth_bound} to conclude that the total number of performed iterations is bounded by  $O(c(R)^2\log(|R|))$  during the whole run-time of the algorithm. 
\smallskip

The latter improvement corrects the past gap in proof of \citet[Theorem~5]{beygelzimer2006cover} by bounding the number of iterations independently from the explicit depth \citet[Definition~3.2]{elkin2022counterexamples}.  

 \smallskip

\noindent 
Assuming that we have already constructed a compressed cover tree on a reference set $R$, the two main results estimate the time complexity of a new $k$-nearest neighbor method in Algorithm~\ref{alg:text:cover_tree_k-nearest}m which finds all $k$-nearest neighbors of any query point $q \in X$ in a reference set $R \subseteq X$ as follows:

\begin{itemize}
\item Corollary~\ref{cor:cover_tree_knn_miniziminzed_constant_time} bounds the time complexity as $O\Big ( \log_2(k) \cdot (\log_2(\Delta(R)) + |\bar{B}(q, 5d_k(q,R))|) \Big ),$ where $\Delta(R)$ is the aspect ratio and $c_m(R)$ is considered fixed (hence hidden).
\item Theorem \ref{thm:knn_KR_time} bounds the time complexity as $O\Big (\log_2(k) \cdot \big(\log_2(|R|) + k\big) \Big)$, where the expansion constant $c(R \cup \{q\})$ is considered  fixed (hence hidden). 
\end{itemize}

 \begin{dfn}[$\la$-point]
	\label{dfn:text:lambda-point}
	Fix a query point $q$ in a metric space $(X,d)$ and fix any level $i \in \Z$. 
	Let $\T(R)$ be its compressed cover tree on a finite reference set $R \subseteq X$. 
	Let $C$ be a subset of a cover set $C_i$ from Definition~\ref{dfn:cover_tree_compressed} satisfying $\sum\limits_{p \in C}|\Sd_i(p, \T(R))| \geq k$, where $\Sd_i(p, \T(R))$ is the distinctive descendant set from Definition \ref{dfn:distinctive_descendant_set}.
	For any $k\geq 1$, define $\la_k(q,C)$ as a point $\la\in C$ that minimizes $d(q,\la)$ subject to $\sum\limits_{p \in N(q;\la)}|\Sd_i(p, \T(R)) |\geq k$. 
\end{dfn}

\begin{dfn}
	\label{dfn:text:knn_iteration_set}
	Let $R$ be a finite subset of a metric space $(X,d)$. 
	Let $\T(R)$ be a cover tree of Definition \ref{dfn:cover_tree_compressed} built on $R$ and let $q \in X$ be arbitrary point.
	Let $L(\T(R),q)$ be the set of all levels $i$ during iterations of lines~\ref{line:text:knnu:loop_begin}-\ref{line:text:knnu:loop_end} of Algorithm~\ref{alg:text:cover_tree_k-nearest} launched with inputs 
	$\T(R),q$. 
	If Algorithm~\ref{alg:text:cover_tree_k-nearest} reaches line \ref{line:text:knnu:qtoofar} at a level 
	$\varrho \in L(\T(R),q)$, then we say that $\varrho$ is \emph{special}. 
	Set $\eta(i) = \min \{ t \in L(\T(R),q) \mid t > i\}$. 
\end{dfn}

\begin{algorithm}
	\caption{$k$-nearest neighbor search by a compressed cover tree}
	\label{alg:text:cover_tree_k-nearest}
	\begin{algorithmic}[1]
		\STATE \textbf{Input} : compressed cover tree $\T(R)$, a query point $q\in X$, an integer $ k \in \Z_{+} $
		\STATE Set $i \leftarrow l_{\max}(\T(R)) - 1$ and $\eta(l_{\max}-1) = l_{\max}$
		\STATE  Let $r$ be the root node of $\T(R)$. Set $R_{l_{\max}}=\{r\}$.
		\WHILE{$i \geq l_{\min}$} \alglinelabel{line:text:knnu:loop_begin}
		\STATE $V = \cup_{q \in R_{\eta(i)}} \{a \in \Child(q) \mid l(a) = i \} $.
		\STATE Assign $\mathcal{C}_i(R_{\eta(i)}) \leftarrow  R_{\eta(i)} \cup V$. \alglinelabel{line:text:knnu:dfn_C}
		\STATE Compute $\lambda = \lambda_k(q,\C_{i}(R_{\eta(i)}))$ \alglinelabel{line:text:knnu:dfnLambda} by 
		Algorithm \ref{alg:lambda}.
		\STATE $R_{i} = \{p \in \C_i(R_{\eta(i)}) \mid d(q,p) \leq d(q,\lambda) + 2^{i+2}\}$ \alglinelabel{line:text:knnu:dfnRi}
		\IF {$d(q,\lambda) > 2^{i+2}$} \alglinelabel{line:text:knnu:qtoofar:condition}
		\STATE Collect the distinctive descendants $\Sd_i(p,\T(R))$ of all points $p \in R$ in set S, see Algorithm~\ref{alg:cover_tree_k-nearest_final_collection}.
		\STATE Compute and \textbf{output} $k$-nearest neighbors of the query point $q$ from set $S$.
		\alglinelabel{line:text:knnu:qtoofar}
		\ENDIF \alglinelabel{line:text:knnu:qtoofar:condition:endif}
		\STATE Set $j \leftarrow \max_{ a \in R_{i}} \nxt(a,i,\T(R))$ \\
		\COMMENT{If such $j$ is undefined, we set $j = l_{\min}-1$} \alglinelabel{line:text:knnu:dfnindexj}
		\STATE Set $\eta(j) \leftarrow i$ and $i \leftarrow j$.
		\ENDWHILE \alglinelabel{line:text:knnu:loop_end}
		\STATE Compute and \textbf{output} $k$-nearest neighbors of query point $q$ from the set $R_{l_{\min}}$.
		\alglinelabel{line:text:knnu:final_line}
	\end{algorithmic}
\end{algorithm}

\begin{restatable}[correctness of Algorithm~\ref{alg:text:cover_tree_k-nearest}]{thmm} {thmknncorrectness}\label{thm:cover_tree_knn_correct}
		Algorithm~\ref{alg:text:cover_tree_k-nearest} correctly finds all $k$-nearest neighbors of query point $q$ within the reference set $R$. 
\end{restatable}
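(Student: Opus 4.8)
The plan is to prove correctness of Algorithm~\ref{alg:text:cover_tree_k-nearest} by establishing an invariant maintained by the main loop (lines~\ref{line:text:knnu:loop_begin}--\ref{line:text:knnu:loop_end}): at the start of each iteration at level $i$, every point $u \in R$ whose distance to $q$ is at most $d_k(q,R)$ either already lies in $\bigcup_{p \in R_{\eta(i)}} \Sd_i(p,\T(R))$, or is a descendant of some node retained in $R_{\eta(i)}$. Equivalently, no true $k$-nearest neighbor can be "pruned away" by the filtering step in line~\ref{line:text:knnu:dfnRi}. The key quantitative fact is the standard cover-tree separation/covering estimate: if $a$ is an ancestor at level $\geq i$ of a node $u$, then $d(a,u) \leq \sum_{j \leq i} 2^{j+1} = 2^{i+2}$ (summing the cover-condition bounds along the node-to-root path, using that consecutive levels strictly decrease, Definition~\ref{dfn:cover_tree_compressed}b). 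Hence if $u \in \Desc(a)$ with $l(a) \geq i$ then $d(q,a) \leq d(q,u) + 2^{i+2}$, so any ancestor of a genuine neighbor survives the cutoff $d(q,\lambda) + 2^{i+2}$ in line~\ref{line:text:knnu:dfnRi} provided $d(q,\lambda) \geq d_k(q,R)$.

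So the first real step is to show $d(q,\lambda_k(q,\C_i(R_{\eta(i)}))) \geq d_k(q,R)$ at every iteration, or more precisely that the set of distinctive descendants hanging below $\lambda$ and the closer retained nodes already accounts for at least $k$ points whose distances are controlled. This is where Definition~\ref{dfn:text:lambda-point} and Lemma~\ref{lem:distinctive_descendant_child_level} enter: $\lambda$ is chosen so that $\sum_{p \in N(q;\lambda)} |\Sd_i(p,\T(R))| \geq k$, and every such descendant $w$ of a node $p$ with $d(q,p) \leq d(q,\lambda)$ satisfies, by Lemma~\ref{lem:distinctive_descendant_child_level} and the ancestor-distance bound just mentioned applied at the appropriate level, a bound tying $d(q,w)$ to $d(q,\lambda)$ plus a geometric tail. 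Taking $i \to l_{\min}$ (or hitting the break) the tail $2^{i+2}$ shrinks, forcing $d(q,\lambda)$ to be a valid surrogate for $d_k(q,R)$. I would formalize this as: the multiset $\{d(q,p) : p \in \C_i(R_{\eta(i)})\}$ together with the descendant counts is a "lower bound certificate" for the sorted distance list of $R$, monotone in $i$.

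The second step is the two exit points. For the normal exit at line~\ref{line:text:knnu:final_line}: when $i < l_{\min}$ every node of $R$ has been reached, each cover set $C_i$ with $i < l_{\min}$ equals all of $R$ as singletons, so $R_{l_{\min}}$ (after the final filtering with tail $2^{l_{\min}+2}$, which by the invariant cannot have discarded a true neighbor) contains $\NN_k(q;R)$, and we simply sort. For the break exit at line~\ref{line:text:knnu:qtoofar} triggered by $d(q,\lambda) > 2^{i+2}$: here I claim the distinctive descendant sets $\Sd_i(p,\T(R))$ over $p \in R_i$ — wait, over $p \in R$ as written, i.e. collected in Algorithm~\ref{alg:cover_tree_k-nearest_final_collection} — already contain all of $\NN_k(q;R)$, because any neighbor $u$ not yet "expanded" lies under some retained ancestor $a$ with $d(q,a) \leq d(q,u) + 2^{i+2} < d(q,u) + d(q,\lambda)$, and combined with $d(q,\lambda) \geq d_k(q,R) \geq d(q,u)$ this forces $a \in R_i$; then $u \in \Sd_i(a,\T(R))$ or deeper, and the collection routine gathers it.

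The main obstacle I anticipate is the bookkeeping around the distinctive descendant sets: showing that the $\lambda$-point mechanism never undercounts, i.e. that $\sum_{p}|\Sd_i(p,\T(R))|$ over the retained set really is a faithful lower bound on how many points of $R$ lie within a given radius of $q$ as the level descends, and that the union $\bigsqcup_p \Sd_i(p,\T(R))$ over the frontier $R_{\eta(i)}$ is genuinely a partition of the "unexplored" part of $R$ (this is the content of Definition~\ref{dfn:distinctive_descendant_set} being a disjoint decomposition, which needs the levels to be handled carefully when children share a level with pruned siblings). Once that partition/counting invariant is nailed down, the geometric distance bounds are routine telescoping of the cover condition, and both exit cases follow by the same argument applied at the terminal level.
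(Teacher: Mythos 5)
Your overall strategy coincides with the paper's: maintain the loop invariant that the distinctive descendant sets of the retained frontier always cover $\NN_k(q;R)$, control ancestor--descendant distances by telescoping the cover condition (\ref{dfn:cover_tree_compressed}b), use the counting property of the $\lambda$-point to relate $d(q,\lambda)$ to $d_k(q,R)$, and verify the two exits separately. The concrete problem is that your key quantitative estimate is off by a factor of $2$, and with your constant the central step does not close. The bound that is actually needed is: if $w\in\Sd_i(p,\T(R))$ then $d(w,p)\le 2^{i+1}$, because Lemma~\ref{lem:distinctive_descendant_child_level} guarantees that the first step down from $p$ towards $w$ lands at a level $\le i-1$, so the telescoping sum is $\sum_{j\le i-1}2^{j+1}=2^{i+1}$. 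You instead assert $d(a,u)\le\sum_{j\le i}2^{j+1}=2^{i+2}$ for an arbitrary descendant $u$ of an ancestor $a$ at level $\ge i$; this is false for general descendants (the first child on the path may have level as large as $l(a)-1$, far above $i$), and even restricted to distinctive descendants it is weaker than required. With $2^{i+2}$ the chain becomes $d(q,a)\le d(q,u)+2^{i+2}\le d_k(q,R)+2^{i+2}\le d(q,\lambda)+2^{i+4}$ if one also uses your loose $\lambda$-bound, or $d(q,\lambda)+2^{i+3}$ at best, neither of which certifies that $a$ survives the cutoff $d(q,p)\le d(q,\lambda)+2^{i+2}$ of line~\ref{line:text:knnu:dfnRi}. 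The constant $2^{i+2}$ in that line is tuned to absorb exactly two terms of size $2^{i+1}$: one from $d_k(q,R)\le d(q,\lambda)+2^{i+1}$ and one from $d(u,a)\le 2^{i+1}$.

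A second, related issue: the statement you propose to establish first, $d(q,\lambda)\ge d_k(q,R)$, is neither provable nor needed. The $\lambda$-point of Definition~\ref{dfn:text:lambda-point} only certifies that at least $k$ points of $R$ lie within distance $d(q,\lambda)+2^{i+1}$ of $q$ (every counted $w\in\Sd_i(p,\T(R))$ with $d(q,p)\le d(q,\lambda)$ satisfies $d(q,w)\le d(q,\lambda)+2^{i+1}$), hence $d_k(q,R)\le d(q,\lambda)+2^{i+1}$; the reverse inequality can fail, for instance when $\lambda$ is very close to $q$ but its $k$ distinctive descendants are not among the $k$ nearest points. Your hedged reformulation (``at least $k$ points whose distances are controlled'') is the correct one, and both exit cases go through with it directly, without the detour through $d(q,\lambda)\ge d_k(q,R)$. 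The remaining ingredient you rightly flag --- that $\{\Sd_i(p,\T(R))\}_{p\in C_i}$ partitions $R$, so the counts in Definition~\ref{dfn:text:lambda-point} never overstate --- does hold, since each $u\in R$ has a unique lowest ancestor of level $\ge i$. Once the telescoping constant is repaired to $2^{i+1}$ and the $\lambda$ step is phrased as the one-sided inequality above, your plan yields the paper's proof.
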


\begin{restatable}{lem}{lemknntime}\label{lem:knn:time}
		Algorithm~\ref{alg:text:cover_tree_k-nearest} has
	the following time complexities of its lines
	\smallskip
	
	\noindent
	(a) 
	$\max\{\li{\ref{line:text:knnu:loop_begin}-\ref{line:text:knnu:qtoofar:condition}}, \li{\ref{line:text:knnu:qtoofar:condition:endif}-\ref{line:text:knnu:loop_end}} , \li{\ref{line:text:knnu:final_line}}\} = O\big(c_m(R)^{10} \cdot \log_2(k)\big)$;
	\smallskip
	
	\noindent
	(b) 
	$\li{\ref{line:knnu:qtoofar:condition}-\ref{line:knnu:qtoofar:condition:endif} } = O\big(|\bar{B}(q, 5 d_k(q,R))| \cdot \log_2(k)\big).$
\end{restatable}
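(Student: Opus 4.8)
The plan is to bound, in each iteration of Algorithm~\ref{alg:text:cover_tree_k-nearest} at a level $i\in L(\T(R),q)$, the cardinalities of the sets $R_{\eta(i)}$ and $\C_i(R_{\eta(i)})$ handled there, and then to check that---apart from the $\log_2 k$ factor incurred by the $\la$-point routine (Algorithm~\ref{alg:lambda}) and by extracting $k$ nearest neighbors with a size-$k$ max-heap---every line of a single iteration costs $O(1)$ per element of these sets under the hash-map implementation of Definition~\ref{dfn:implementation_compressed_cover_tree}. First I would prove, by induction over the decreasing levels of $L(\T(R),q)$, that $|R_{\eta(i)}|\le (c_m(R))^6$. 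The base case is $R_{l_{\max}}=\{r\}$. For the step, in every non-special iteration the test in line~\ref{line:text:knnu:qtoofar:condition} fails, so $d(q,\la)\le 2^{i+2}$ and therefore $R_i\subseteq \bar{B}(q,2^{i+3})\cap C_i$ (using $\C_i(R_{\eta(i)})\subseteq C_i$); since $C_i$ is $2^i$-sparse by the separation condition of Definition~\ref{dfn:cover_tree_compressed}, Lemma~\ref{lem:packing} with $\delta=2^i$ and $t=2^{i+3}$ (so $\mu=\lceil\log_2 33\rceil=6$) together with $c_m(C_i)\le c_m(R)$ from Lemma~\ref{lem:expansion_constant_property} gives $|R_i|\le (c_m(R))^6$, and $R_i$ is precisely the set $R_{\eta(j)}$ used in the next iteration. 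Combining this with the width bound $|\Child(p,i)|\le (c_m(R))^4$ of Lemma~\ref{lem:compressed_cover_tree_width_bound} yields $|\C_i(R_{\eta(i)})|=|R_{\eta(i)}\cup V|=O((c_m(R))^{10})$; the same packing argument at level $l_{\min}$ (where $C_{l_{\min}}=R$) gives $|R_{l_{\min}}|\le (c_m(R))^6$.

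Part (a) then follows line by line. Building $V$ and $\C_i(R_{\eta(i)})$ costs $O(|R_{\eta(i)}|+|V|)=O((c_m(R))^{10})$ via an $O(1)$ look-up of $\Child(p,i)$ for each $p\in R_{\eta(i)}$; forming $R_i$ in line~\ref{line:text:knnu:dfnRi} is one scan of $\C_i(R_{\eta(i)})$; computing $j$ in line~\ref{line:text:knnu:dfnindexj} needs $O(1)$ access to each $\nxt(a,i,\T(R))$ through the structure of Definition~\ref{dfn:implementation_compressed_cover_tree}, hence costs $O(|R_i|)$. The only super-linear step is line~\ref{line:text:knnu:dfnLambda}: by the running time of Algorithm~\ref{alg:lambda}, computing $\la_k(q,\C_i(R_{\eta(i)}))$ takes $O(|\C_i(R_{\eta(i)})|\cdot\log_2 k)$. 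Hence each of the line ranges \ref{line:text:knnu:loop_begin}--\ref{line:text:knnu:qtoofar:condition} and \ref{line:text:knnu:qtoofar:condition:endif}--\ref{line:text:knnu:loop_end} runs in $O((c_m(R))^{10}\log_2 k)$, while line~\ref{line:text:knnu:final_line} extracts the $k$ nearest points of $R_{l_{\min}}$ with a size-$k$ max-heap in $O((c_m(R))^6\log_2 k)$; taking the maximum proves (a).

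For part (b), the body of the $\mathbf{if}$ at line~\ref{line:text:knnu:qtoofar:condition} is run only when the current level is special in the sense of Definition~\ref{dfn:text:knn_iteration_set}, which happens at most once before the algorithm outputs and stops; at that special level $\varrho$, Algorithm~\ref{alg:cover_tree_k-nearest_final_collection} gathers $S=\bigcup_{p\in R_\varrho}\Sd_\varrho(p,\T(R))$ and extracts the $k$ nearest points of $S$. I would show $S\subseteq \bar{B}(q,5 d_k(q,R))$. By Lemma~\ref{lem:distinctive_descendant_child_level}, every $w\in\Sd_\varrho(p,\T(R))$ is either $p$ or lies in $\Desc(a)$ for a child $a$ of $p$ with $l(a)\le\varrho-1$; a geometric-series estimate along node-to-root paths, using the cover condition $d(\cdot,\mathrm{parent})\le 2^{l(\cdot)+1}$, gives $d(w,a)<2^{l(a)+1}$ and hence $d(w,p)<2^{l(a)+2}\le 2^{\varrho+1}$. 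Since $p\in R_\varrho$, $d(q,p)\le d(q,\la)+2^{\varrho+2}$, so $d(q,w)\le d(q,\la)+2^{\varrho+2}+2^{\varrho+1}$. To bound $d(q,\la)$ I use that $\{\Sd_\varrho(p,\T(R))\}_{p\in C_\varrho}$ partitions $R$ and that---from the correctness analysis behind Theorem~\ref{thm:cover_tree_knn_correct}---the $C_\varrho$-representative $p'$ of each true $k$-nearest neighbor $u$ of $q$ lies in $\C_\varrho(R_{\eta(\varrho)})$; as $d(q,p')\le d(q,u)+2^{\varrho+1}\le d_k(q,R)+2^{\varrho+1}$, the farthest such representative is a point of $\C_\varrho(R_{\eta(\varrho)})$ within $d_k(q,R)+2^{\varrho+1}$ of $q$ whose prefix already accumulates $\ge k$ distinctive descendants, so $d(q,\la)\le d_k(q,R)+2^{\varrho+1}$. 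Then $d(q,w)\le d_k(q,R)+2^{\varrho+3}$, and the break condition $d(q,\la)>2^{\varrho+2}$ combined with $d(q,\la)\le d_k(q,R)+2^{\varrho+1}$ forces $2^{\varrho+1}<d_k(q,R)$, whence $d(q,w)<d_k(q,R)+4\,d_k(q,R)=5\,d_k(q,R)$. Consequently $|S|\le |\bar{B}(q,5 d_k(q,R))|$, the collection by Algorithm~\ref{alg:cover_tree_k-nearest_final_collection} runs in $O(|S|)$ time (the sets $\Sd_\varrho(p,\T(R))$ for $p\in R_\varrho\subseteq C_\varrho$ are pairwise disjoint subtrees), and the size-$k$ heap extraction from $S$ costs $O(|S|\log_2 k)$, which proves (b).

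The step I expect to be the main obstacle is the bound $d(q,\la)\le d_k(q,R)+2^{\varrho+1}$ in part (b): it relies on the invariant---to be imported from the correctness proof of Algorithm~\ref{alg:text:cover_tree_k-nearest}---that $\C_\varrho(R_{\eta(\varrho)})$ still retains the $C_\varrho$-representatives of all $k$ genuine nearest neighbors of $q$, and the constant-chasing between $2^{\varrho+1}$, $2^{\varrho+2}$ and $2^{\varrho+3}$ has to be done exactly to land on the stated factor $5$.
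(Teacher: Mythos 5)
Your proposal is correct and follows essentially the same route as the paper: the packing Lemma~\ref{lem:packing} applied to the $2^i$-sparse set $C_i$ with radius $2^{i+3}$ gives $|R_i|\le (c_m(R))^6$, the width bound of Lemma~\ref{lem:compressed_cover_tree_width_bound} multiplies this by $(c_m(R))^4$ to give the stated $c_m(R)^{10}$, and the combination of the break condition $d(q,\la)>2^{\varrho+2}$ with the bounds $d(q,\la)\le d_k(q,R)+2^{\varrho+1}$ and $d(w,p)\le 2^{\varrho+1}$ lands exactly on the radius $5d_k(q,R)$. The one dependency you correctly flag---that $\C_\varrho(R_{\eta(\varrho)})$ still contains the level-$\varrho$ ancestors of all true $k$-nearest neighbors---is indeed supplied by the auxiliary lemmas behind Theorem~\ref{thm:cover_tree_knn_correct}, so nothing is missing.
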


	
	


\begin{restatable}{thmm}{thmgeneraltime}\label{thm:cover_tree_knn_general_time}
		Let $R$ be a finite set in a metric space $(X,d)$, $c_m(R)$ be the minimized constant from Definition \ref{dfn:expansion_constant}.
	Given a compressed cover tree $\T(R)$, Algorithm~\ref{alg:text:cover_tree_k-nearest} finds all $k$-nearest neighbors of a query point $q\in X$ in time 
	$$O\Big ( \log_2(k) \cdot ((c_m(R))^{10}  \cdot |L(q,\T(R))| + |\bar{B}(q, 5 d_k(q,R)) |)\Big ),$$
	where $L(\T(R),q)$ is the set of all performer iterations (lines~\ref{line:text:knnu:loop_begin}-\ref{line:text:knnu:loop_end} ) of Algorithm~\ref{alg:text:cover_tree_k-nearest}. 
\end{restatable}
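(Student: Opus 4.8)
The plan is to obtain the stated bound by combining the correctness statement of Theorem~\ref{thm:cover_tree_knn_correct} with the per-line running times already recorded in Lemma~\ref{lem:knn:time}, so that the proof reduces to counting how many times each block of Algorithm~\ref{alg:text:cover_tree_k-nearest} is executed. First I would note that, by Definition~\ref{dfn:text:knn_iteration_set}, the body of the main loop in lines~\ref{line:text:knnu:loop_begin}--\ref{line:text:knnu:loop_end} is entered exactly $|L(\T(R),q)|$ times; the initialization preceding the loop, together with the updates of $\eta$ and $i$ in lines~\ref{line:text:knnu:dfnindexj}--\ref{line:text:knnu:loop_end}, take $O(1)$ per iteration and are absorbed into the other terms.

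Next I would split the cost of one loop iteration into the \emph{cheap} part, consisting of lines~\ref{line:text:knnu:loop_begin}--\ref{line:text:knnu:qtoofar:condition} together with lines~\ref{line:text:knnu:qtoofar:condition:endif}--\ref{line:text:knnu:loop_end} (this includes forming $\C_i(R_{\eta(i)})$, computing the $\lambda$-point $\lambda_k(q,\C_i(R_{\eta(i)}))$ via Algorithm~\ref{alg:lambda}, and forming the set $R_i$), and the \emph{expensive} special block, lines~\ref{line:text:knnu:qtoofar:condition}--\ref{line:text:knnu:qtoofar:condition:endif}. Lemma~\ref{lem:knn:time}(a) bounds the cost of the cheap part by $O(c_m(R)^{10}\log_2 k)$ per iteration, hence by $O(c_m(R)^{10}\log_2 k \cdot |L(\T(R),q)|)$ in total. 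The crucial observation for the special block is that it is executed at most once: the break test $d(q,\lambda) > 2^{i+2}$ in line~\ref{line:text:knnu:qtoofar:condition} succeeds at most once, namely at the unique \emph{special} level of Definition~\ref{dfn:text:knn_iteration_set}, after which line~\ref{line:text:knnu:qtoofar} produces the output and the algorithm halts. Thus Lemma~\ref{lem:knn:time}(b) bounds the whole contribution of this block by $O(|\bar{B}(q, 5 d_k(q,R))| \cdot \log_2 k)$.

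Finally I would treat the terminal line~\ref{line:text:knnu:final_line}, which runs only in the complementary scenario where the break condition never triggers; Lemma~\ref{lem:knn:time}(a) bounds its cost by $O(c_m(R)^{10}\log_2 k)$, which is dominated by the loop contribution already counted. Summing the three estimates and invoking Theorem~\ref{thm:cover_tree_knn_correct} for correctness yields the claimed complexity $O\big(\log_2(k)\cdot(c_m(R)^{10}\,|L(\T(R),q)| + |\bar{B}(q, 5 d_k(q,R))|)\big)$.

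The part I expect to require the most care is the ``at most once'' claim for the expensive block: one must verify from the pseudocode and from Definition~\ref{dfn:text:knn_iteration_set} that reaching line~\ref{line:text:knnu:qtoofar} terminates the algorithm, so that lines~\ref{line:text:knnu:qtoofar:condition}--\ref{line:text:knnu:qtoofar:condition:endif} and line~\ref{line:text:knnu:final_line} are mutually exclusive and each is run at most once. Beyond that, the argument is a routine aggregation of the bounds of Lemma~\ref{lem:knn:time}; the substantive work --- the width bound of Lemma~\ref{lem:compressed_cover_tree_width_bound}, the analysis of the $\lambda$-point routine, and the ball-size estimate $|\bar{B}(q, 5 d_k(q,R))|$ forced by the break condition --- is what goes into proving that lemma in the appendix.
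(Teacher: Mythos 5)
Your proposal is correct and follows essentially the same route as the paper: the paper's proof is exactly the aggregation $O\big(|L(\T(R),q)|\cdot(\li{\ref{line:text:knnu:loop_begin}-\ref{line:text:knnu:qtoofar:condition}}+\li{\ref{line:text:knnu:qtoofar:condition:endif}-\ref{line:text:knnu:loop_end}}+\li{\ref{line:text:knnu:final_line}})+\li{\ref{line:text:knnu:qtoofar:condition}-\ref{line:text:knnu:qtoofar:condition:endif}}\big)$ combined with Lemma~\ref{lem:knn:time}, with the special block counted once because reaching line~\ref{line:text:knnu:qtoofar} outputs the answer and terminates. Your explicit justification of the ``at most once'' point is a detail the paper leaves implicit, but the decomposition and the lemmas invoked are identical.
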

\begin{proof}
	Apply Lemma~\ref{lem:knn:time} to estimate the time complexity of Algorithm~\ref{alg:text:cover_tree_k-nearest}: \\ 
	$O\big( |L(\T(R),q)| \cdot 
	(\li{\ref{line:text:knnu:loop_begin}-\ref{line:text:knnu:qtoofar:condition}}
	+ \li{\ref{line:text:knnu:qtoofar:condition:endif}-\ref{line:text:knnu:loop_end}}  
	+ \li{\ref{line:text:knnu:final_line}}) 
	+\li{\ref{line:text:knnu:qtoofar:condition}-\ref{line:text:knnu:qtoofar:condition:endif} }\big)$.
\end{proof}
Corollary \ref{cor:cover_tree_knn_miniziminzed_constant_time} gives a run-time bound using only minimized expansion constant $c_m(R)$,
where if $R \subset \R^{m}$, then $c_m(R) \leq 2^{m}$. Recall that $\Delta(R)$ is the aspect ratio of $R$ introduced in 
Definition \ref{dfn:radius+d_min}.


\begin{restatable}{cor}{corminimizedexpansionconstanttime}\label{cor:cover_tree_knn_miniziminzed_constant_time}
		Let $R$ be a finite set in a metric space $(X,d)$. 
	Given a compressed cover tree $\T(R)$, Algorithm~\ref{alg:text:cover_tree_k-nearest} finds all $k$-nearest neighbors of $q$ in time  $O\Big ((c_m(R))^{10} \cdot \log_2(k) \cdot \log_2(\Delta(R)) + |\bar{B}(q, 5d_k(q,R))| \cdot \log_2(k) \Big ).$
\end{restatable}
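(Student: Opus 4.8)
The plan is to obtain this corollary as an immediate specialization of Theorem~\ref{thm:cover_tree_knn_general_time}, whose bound already reads $O\big(\log_2(k)\cdot((c_m(R))^{10}\cdot|L(\T(R),q)| + |\bar B(q,5d_k(q,R))|)\big)$ and already folds in the cost of the distinctive-descendant collection step via Lemma~\ref{lem:knn:time}(b). Hence the only remaining task is to replace the iteration count $|L(\T(R),q)|$ by $O(\log_2(\Delta(R)))$, and then distribute the $\log_2(k)$ factor to match the stated form. This is the exact analogue of how Theorem~\ref{thm:construction_time} is deduced from Lemma~\ref{lem:general_construction_time} together with the height bound Lemma~\ref{lem:text:depth_bound}.

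The key step is therefore to show $|L(\T(R),q)| \le |H(\T(R))| + O(1)$, where $H(\T(R))$ is the height set of Definition~\ref{dfn:depth}. First I would note that the levels visited by the \textbf{while} loop of Algorithm~\ref{alg:text:cover_tree_k-nearest} are strictly decreasing, since each pass sets $i\leftarrow j$ with $j = \max_{a\in R_i}\nxt(a,i,\T(R)) < i$ (or $j = l_{\min}-1$, which ends the loop); so no level is visited twice. Next, apart from the initial level $l_{\max}(\T(R))-1$ and the terminal level, every visited level $j$ arises as $\nxt(a,i,\T(R))$ for some node $a$ and its predecessor level $i$; by Definition~\ref{dfn:implementation_compressed_cover_tree} this forces $\Child(a,j)\neq\emptyset$, so some node of $\T(R)$ sits at level exactly $j$, whence $C_j\neq C_{j+1}$ and $j+1\in H(\T(R))$. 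Adding back the two boundary levels (and the special break level $\varrho$ at which line~\ref{line:text:knnu:qtoofar} is reached, which is one of the visited levels anyway) contributes only an additive constant, giving $|L(\T(R),q)| \le |H(\T(R))| + O(1)$. Finally I would apply Lemma~\ref{lem:text:depth_bound}, $|H(\T(R))| \le 1 + \log_2(\Delta(R))$, to conclude $|L(\T(R),q)| = O(\log_2(\Delta(R)))$, and substitute into Theorem~\ref{thm:cover_tree_knn_general_time}.

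The main obstacle is the bookkeeping in the step above: one must verify carefully that each $\nxt$-jump genuinely lands on a level occupied by at least one node of $\T(R)$ (so that it is charged to a distinct element of $H(\T(R))$), and that the loop can never stall at a level without descending — which holds precisely because $\nxt(a,i,\T(R))$ is by definition strictly below $i$, or else equals $l_{\min}-1$ and terminates the loop. Once the injection of visited levels into $H(\T(R))$ (up to $O(1)$ slack) is established, the rest is a one-line substitution. Everything else — the $(c_m(R))^{10}$ per-iteration cost and the $|\bar B(q,5d_k(q,R))|$ collection cost — is already accounted for inside Theorem~\ref{thm:cover_tree_knn_general_time}, so no further estimates are needed.
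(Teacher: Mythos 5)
Your proposal matches the paper's own proof: the paper likewise deduces the corollary by substituting the bound $|L(\T(R),q)| \leq |H(\T(R))| \leq \log_2(\Delta(R))$ from Lemma~\ref{lem:text:depth_bound} into Theorem~\ref{thm:cover_tree_knn_general_time}. Your extra bookkeeping showing that each visited level injects (up to $O(1)$ slack) into the height set is a correct elaboration of a step the paper asserts without comment.
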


\begin{proof}
	Replace $|L(q,\T(R))|$ in the time complexity of Theorem \ref{thm:cover_tree_knn_general_time} by its upper bound in Lemma \ref{lem:text:depth_bound}: $|L(q,\T(R))| \leq |H(\T(R))| \leq \log_2(\Delta(R)).$
\end{proof}

Lemma~\ref{lem:knn_depth_bound} is proved similarly to Lemma~\ref{lem:construction_depth_bound}.
For full details see \hyperlink{proof:lem:knn_depth_bound}{Appendix~\ref*{sec:approxknearestneighbor}}.

\begin{restatable}{lem}{lemknndepthbound}\label{lem:knn_depth_bound}
		Algorithm \ref{alg:text:cover_tree_k-nearest} executes lines \ref{line:text:knnu:loop_begin}-\ref{line:text:knnu:loop_end} the following number of times: $|L(\T(R),q)| = O(c(R \cup \{q\})^2 \cdot \log_2(|R|))$.
\end{restatable}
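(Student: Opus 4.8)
The plan is to follow the same template as the proof of Lemma~\ref{lem:construction_depth_bound}, transporting the argument from the construction algorithm to the search algorithm. First I would establish the analogue of Lemma~\ref{lem:knn_next_level_finder_for_log_depth}: for a ``generic'' iteration $i \in L(\T(R),q)$ (one that is not special and for which $t = \eta(\eta(i+1))$ is defined), the refinement set $R_i$ was nonempty and consisted of points $p$ with $d(q,p) \leq d(q,\lambda) + 2^{i+2}$, while the break condition in line~\ref{line:text:knnu:qtoofar:condition} did not fire, so $d(q,\lambda) \leq 2^{i+2}$, hence $d(q,p) \leq 2^{i+3}$ for all $p \in R_i$. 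On the other hand, the next level $j$ chosen in line~\ref{line:text:knnu:dfnindexj} is the largest child level strictly below $i$ among points of $R_i$; the fact that we descend to $t = \eta(\eta(i+1))$ rather than to a child of level $i$ means no point of $R_{\eta(i)}$ has a child whose level lies in the gap, which forces the existence of a node $p$ with $2^{i+1} < d(q,p)$ (it was excluded from $R_i$ or from $R_{i}$'s children) but still $d(q,p) \leq 2^{t+1}$. The upshot I want is: for a suitable constant-fraction subsequence $S$ of the iterations, each $i \in S$ witnesses a point $p_i$ in the annulus $\bar B(q, 2^{t+1}) \setminus \bar B(q, 2^{i+1})$, exactly as in the construction proof.

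Next, from the subsequence $S = \{i_1 > i_2 > \dots\}$ and its witness points I would extract a sparse sub-sequence $P = (p_1, \dots, p_n)$ of points satisfying the ratio hypothesis $4\,d(q,p_\ell) \leq d(q,p_{\ell+1})$ of Lemma~\ref{lem:growth_bound_extension}; this costs another constant factor, so that $n \geq |S|/c'$ for an absolute constant $c'$, and combined with $12\,|S| \geq |L(\T(R),q)|$ we get $n \geq |L(\T(R),q)| / C$ for an absolute constant $C$. Applying Lemma~\ref{lem:growth_bound_extension} with $A = R \cup \{q\}$ gives
$$ |\bar B(q, \tfrac{4}{3} d(q,p_n))| \geq \Big(1 + \tfrac{1}{c(R\cup\{q\})^2}\Big)^{n} \cdot |\bar B(q, \tfrac{1}{3} d(q,p_1))|, $$
and since the left ball is contained in $R \cup \{q\}$ while the right ball contains $q$ (so has size $\geq 1$), we obtain $|R| + 1 \geq (1 + c(R\cup\{q\})^{-2})^{n}$. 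Then Lemma~\ref{lem:hard_function_bound} (the elementary bound used already in Lemma~\ref{lem:construction_depth_bound}) converts this into $n \leq c(R\cup\{q\})^2 \log_2(|R|+1)$, and unwinding the constant-factor losses yields $|L(\T(R),q)| = O(c(R\cup\{q\})^2 \log_2|R|)$.

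The one genuinely new ingredient, compared with Lemma~\ref{lem:construction_depth_bound}, is handling the $\lambda$-point and the new break condition: the search algorithm's loop can terminate early at a special level $\varrho$, and the refinement radius $d(q,\lambda) + 2^{i+2}$ is larger than the construction algorithm's radius $2^{i+1}$. I expect the main obstacle to be showing that the break condition has not yet fired on the iterations of $S$, so that $d(q,\lambda) \leq 2^{i+2}$ can be used to keep all distances in $R_i$ within a constant multiple of $2^i$; once a special level is reached the loop stops, contributing only one extra iteration, so it is harmless to discard it. The slightly enlarged radius only changes the annulus constants (e.g.\ $\bar B(q, 2^{t+c})$ for a fixed small $c$ in place of $\bar B(q, 2^{t+1})$) and the separation between the witness points, which is absorbed into the constant $C$; I would make sure Lemma~\ref{lem:growth_bound} is applied with enough slack (its hypothesis needs a point strictly outside radius $2r$ but inside $3r$, so the witness-point spacing must be tuned to match) and then cite Lemma~\ref{lem:growth_bound_extension} verbatim. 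Everything else is a routine transcription of the construction-algorithm argument, so I would present it as an outline proof deferring the bookkeeping to the appendix.
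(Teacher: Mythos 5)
Your proposal follows the paper's own route: the paper proves Lemma~\ref{lem:knn_depth_bound} "similarly to Lemma~\ref{lem:construction_depth_bound}," i.e.\ by establishing a search-algorithm analogue of Lemma~\ref{lem:knn_next_level_finder_for_log_depth} to place witness points in annuli around $q$, extracting a constant-fraction sparse subsequence, and applying Lemma~\ref{lem:growth_bound_extension} with $A = R \cup \{q\}$ followed by Lemma~\ref{lem:hard_function_bound}. Your treatment of the two new features (the enlarged radius $d(q,\lambda)+2^{i+2}$ only perturbing the annulus constants, and the special level contributing at most one final iteration) matches the paper's handling, so the proposal is correct and essentially identical in approach.
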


\begin{restatable}{thmm}{thmknnkrtime}\label{thm:knn_KR_time}
	Let $R$ be a finite reference set in a metric space $(X,d)$. Let $q\in X$ be a query point, $c(R \cup \{q\})$ be the expansion constant of $R \cup \{q\}$ and $c_m(R)$ be the minimized expansion constant from Definition \ref{dfn:expansion_constant}. Given a compressed cover tree $\T(R)$, Algorithm~\ref{alg:text:cover_tree_k-nearest} finds all $k$-nearest neighbors of $q$ in time 
	$O\Big ( c(R \cup \{q\})^2 \cdot \log_2(k) \cdot \big((c_m(R))^{10}  \cdot \log_2(|R|) + c(R \cup \{q\}) \cdot k\big) \Big).$
\end{restatable}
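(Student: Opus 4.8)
The plan is to treat this as a corollary of the ``master'' running-time estimate of Theorem~\ref{thm:cover_tree_knn_general_time}, into which I will feed (i) the iteration bound of Lemma~\ref{lem:knn_depth_bound} and (ii) a packing estimate for the ball $\bar B(q,5d_k(q,R))$ in terms of the expansion constant $c(R\cup\{q\})$. Theorem~\ref{thm:cover_tree_knn_general_time} already gives total time $O\big(\log_2(k)\cdot(c_m(R)^{10}\cdot|L(\T(R),q)|+|\bar B(q,5d_k(q,R))|)\big)$, so all that remains is to bound the two quantities inside.

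For the first, I would simply substitute $|L(\T(R),q)|=O\big(c(R\cup\{q\})^2\log_2(|R|)\big)$ from Lemma~\ref{lem:knn_depth_bound}; this turns the navigation part of the cost into $O\big(\log_2(k)\cdot c_m(R)^{10}\cdot c(R\cup\{q\})^2\cdot\log_2(|R|)\big)$, which after pulling $c(R\cup\{q\})^2$ to the front is exactly the first summand in the claimed formula. The nontrivial part is the second summand, i.e.\ showing $|\bar B(q,5d_k(q,R))|=O\big(c(R\cup\{q\})^3\cdot k\big)$; this is acceptable because the stated bound carries an outer factor $c(R\cup\{q\})^2$ and an inner term $c(R\cup\{q\})\cdot k$, so it has room for $c(R\cup\{q\})^3 k$.

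The key point for this ball estimate is that one must \emph{not} try to bound $|\bar B(q,d_k(q,R))\cap R|$ directly by $k$: ties at distance $d_k$ can make this ball much larger, which is precisely the situation forcing $c(R\cup\{q\})$ to be large. Instead I would use a radius strictly below $d_k(q,R)$. By Definition~\ref{dfn:kNearestNeighbor}, for any $\rho<d_k(q,R)$ the number of points of $R$ within distance $\rho$ of $q$ is $\#\{i:d_i\le\rho\}\le k-1$. Fix $\rho=\tfrac{7}{8}d_k(q,R)$ (for $k\ge 2$ the distances $d_1,\dots,d_k$ are positive since the points of $R$ are distinct, so $d_k(q,R)>0$; the degenerate case $d_k=0$, i.e.\ $k=1$ and $q\in R$, is trivial). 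Then $|\bar B(q,\rho)\cap(R\cup\{q\})|\le k$. Since $5d_k(q,R)=\tfrac{40}{7}\rho<8\rho=2^3\rho$ and $q\in R\cup\{q\}$, applying the defining doubling inequality of $c(R\cup\{q\})$ three times with center $q$ yields
\[
|\bar B(q,5d_k(q,R))|\le|\bar B(q,8\rho)\cap(R\cup\{q\})|\le c(R\cup\{q\})^3\cdot|\bar B(q,\rho)\cap(R\cup\{q\})|\le c(R\cup\{q\})^3\cdot k .
\]

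Finally I would plug both bounds into Theorem~\ref{thm:cover_tree_knn_general_time}, obtaining total time $O\big(\log_2(k)\cdot(c_m(R)^{10}c(R\cup\{q\})^2\log_2(|R|)+c(R\cup\{q\})^3 k)\big)$, and factor out $c(R\cup\{q\})^2$ to recover the stated expression. The only real obstacle is the ball-size step above; everything else is substitution and regrouping of constants. (Alternatively one could bypass Theorem~\ref{thm:cover_tree_knn_general_time} and assemble the estimate directly from Lemma~\ref{lem:knn:time}(a),(b) and Lemma~\ref{lem:knn_depth_bound}, but the argument is identical.)
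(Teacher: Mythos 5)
Your proposal is correct and follows essentially the same route as the paper's own proof: both feed Lemma~\ref{lem:knn_depth_bound} into Theorem~\ref{thm:cover_tree_knn_general_time} and bound $|\bar B(q,5d_k(q,R))|\le c(R\cup\{q\})^3\cdot k$ by applying the doubling inequality three times from a radius strictly below $d_k(q,R)$ (the paper uses $\tfrac{5}{8}d(q,\beta)$ where you use $\tfrac{7}{8}d_k(q,R)$, an immaterial difference). Your explicit attention to ties at distance $d_k$ and to the degenerate case $d_k=0$ is slightly more careful than the paper's one-line assertion that the small ball has at most $k$ points.
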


\begin{proof}
	By Theorem~\ref{thm:cover_tree_knn_general_time} the required time complexity is
	$O\Big ((c_m(R))^{10} \cdot \log_2 (k) \cdot |L(q,\T(R))| + |\bar{B}(q, 5d(q,\beta)) | \cdot \log_2(k) \Big )$, 
	for some point $\beta$ among the first $k$-nearest neighbors of $q$.
	Apply Definition \ref{dfn:expansion_constant} to get the upper bound
	\begin{ceqn}
		\begin{align}
			|B(q,5d(q,\beta))| \leq (c(R \cup \{q\}))^3 \cdot |B(q,\frac{5}{8}d(q,\beta))|
		\end{align}
	\end{ceqn}
	Since $|B(q,\frac{5}{8}d(q,\beta))| \leq k$, we have $|B(q,5d(q,\beta))| \leq (c(R \cup \{q\}))^3  \cdot k$.
	It remains to apply Lemma \ref{lem:knn_depth_bound}: $|L(q,\T(R))| = O(c(R \cup \{q\})^2 \cdot \log_2|R|)$.
\end{proof}

\section{Discussion Of Contributions and Next Steps}
\label{sec:discussion}

This paper rigorously proved the time complexity of the exact $k$-nearest neighbor search.
The submission to ICML is strongly motivated by the past gaps in the proofs of time complexities in the highly cited \citet[Theorem~5]{beygelzimer2006cover} at ICML, \citet[Theorem~3.1]{ram2009linear} at NIPS, and \citet[Theorem~5.1]{march2010fast} at KDD.
\smallskip

Though \citet{elkin2022counterexamples} provided concrete counterexamples, no corrections were published.
Main Theorem~\ref{thm:knn_KR_time} and Corollary~\ref{cor:construction_time_KR} finally filled all the gaps.
\smallskip

Since the past obstacles were caused by unclear descriptions and missed proofs, often without pseudo-codes, this paper necessarily fills in all technical details.
Otherwise, future generations would continue citing unreliable results.
\smallskip

\noindent
To overcome the discovered challenges, first Definition~\ref{dfn:kNearestNeighbor} and Problem~\ref{pro:knn} rigorously dealt with a potential ambiguity of $k$-nearest neighbors at equal distances.
This singular case was unfortunately not discussed in the past work at all.
\smallskip

A new compressed cover tree in Definition~\ref{dfn:cover_tree_compressed} substantially simplified the navigating net \citet{krauthgamer2004navigating} and original cover tree \citet{beygelzimer2006cover} by avoiding repetitions of given data points.
This compression clarified the construction and search in Algorithms~\ref{alg:text:cover_tree_k-nearest_construction_whole} and~\ref{alg:text:cover_tree_k-nearest}. 
\smallskip

Sections~\ref{sec:text:ConstructionCovertree} and~\ref{sec:text:better_approach_knn_problem} corrected the approach of \citet{beygelzimer2006cover} as follows.
Assuming that the expansion constants and aspect ratio of a reference set $R$ are fixed, Corollaries~\ref{cor:construction_time_KR} and~\ref{thm:knn_KR_time} rigorously showed that the time complexities are linear in the maximum size of $R,Q$ and near-linear $O(k\log k)$ in the number $k$ of neighbors. 
\smallskip

The library MLpack \cite{Curtin2013a} implemented a version of an explicit cover tree, which was later defined in \citet[Counterexample~4.2]{elkin2022counterexamples}. 
The implementation of a compressed cover tree is similar but conceptually simpler due to its easier structure in Fig.~\ref{fig:text:tripleexample}. 
\smallskip

The new results justify that the MLpack implementations of the $k$-nearest neighbors search now have proved theoretical guarantees for a near-linear time complexity, which was practically important for the recent advances below.
\smallskip

Main Theorem~\ref{thm:knn_KR_time} helped justify a near-linear time complexity for several invariants based on computing $k$-nearest neighbors in a new area of \emph{Geometric Data Science}, whose aim is to build continuous geographic-style maps for moduli space of real data objects parametrized by complete invariants under practically important equivalence relations.
 
The key example is a finite cloud of unlabeled points up to isometry maintaining all inter-point distances.
The most general isometry invariant SDD (\emph{Simplexwise Distance Distribution} \cite{kurlin2023simplexwise}) is conjectured to be complete for any fintie point clouds in any metric space.

In a Euclidean space $\R^n$, the SDD was adapted to the stronger invariant SCD (\emph{Simplexwise Centered Distribution} \cite{widdowson2023recognizing}), whose completeness and polynomial complexity (in the number $m$ of points for a fixed dimension $n$) was proved in \cite{kurlin2023strength}.
\smallskip

The related and much harder problem is for periodic sets of unlabeled points, which model all solid crystalline materials (periodic crystals). 
The first generically complete invariant using $k$-nearest neighbors was the sequence of \emph{density functions} $\psi_k(t)$ measuring a fractional volume of $k$-fold intersections of balls with a variable radius $t$ and centers at all atoms of a crystal \cite{edelsbrunner2021density}.
\smallskip

These density functions have efficient algorithms in the low dimensions $n=2,3$ through higher-degree Voronoi domains \cite{smith2022practical} of periodic point sets. 

The first continuous and complete invariant for periodic point sets in $\R^n$ is the \emph{isoset} of local atomic environments up to a justified stable radius \cite{anosova2021isometry}.
The first continuous metric on isosets was introduced in \cite{anosova2022algorithms} with an approximate algorithm that has a polynomial time complexity (for a fixed dimension $n$) and a small approximation factor (about $4$ in $\R^3$).

The much faster generically complete isometry invariant for both finite and periodic sets of points is the PDD (Pointwise Distance Distribution \cite{widdowson2021pointwise}) consisting of distances to $k$ nearest neighbors per point.

The implemented search for atomic neighbors was so fast that all (more than 660 thousand) periodic crystals in the world's largest database of real materials were hierarchically compared by the PDD and its simplified version AMD (Average Minimum Distance \cite{widdowson2022average}).
\smallskip

Due to the ultra-fast running time, more than 200 billion pairwise comparisons were completed over two days on a modest desktop while past tools were estimated to require over 34 thousand years \cite{widdowson2022resolving}. 
\smallskip

The most important conclusion from the search results is the \emph{Crystal Isometry Principle} saying that any real periodic crystal has a uniquely defined location in a single continuous space of all isometry classes of periodic point sets \cite{widdowson2022resolving}.
\smallskip

This \emph{Crystal Isometry Space} contains all known and not yet discovered crystals similar to the much simpler and discrete Mendeleev's table of chemical elements. 
\smallskip

The next step is to improve the complexity of the $k$-nearest neighbor search to a purely linear time $O(c(R)^{O(1)}|R|)$ with no other extra hidden parameters by using a new compressed cover tree on both sets $Q,R$. 
\smallskip

Since a similar approach \citet{ram2009linear} was shown to have incorrect proof in \citet[Counterexample~6.5]{elkin2022counterexamples} and \citet{curtin2015plug} used some additional parameters $I, \theta$, this goal will require significantly more effort to understand if $O(c(R)^{O(1)}|R|)$ is achievable by using a compressed cover tree.




We thank all reviewers for their time and helpful suggestions.
This work was supported by the EPSRC grants EP/R018472/1, EP/X018474/1, and the Royal Academy Engineering fellowship IF2122/186 of the second author.

\bibliography{ICML2023Knn}

\begin{thebibliography}{48}
\providecommand{\natexlab}[1]{#1}
\providecommand{\url}[1]{\texttt{#1}}
\expandafter\ifx\csname urlstyle\endcsname\relax
  \providecommand{\doi}[1]{doi: #1}\else
  \providecommand{\doi}{doi: \begingroup \urlstyle{rm}\Url}\fi

\bibitem[Andoni et~al.(2018)Andoni, Indyk, and
  Razenshteyn]{andoni2018approximate}
Andoni, A., Indyk, P., and Razenshteyn, I.
\newblock Approximate nearest neighbor search in high dimensions.
\newblock In \emph{Proceedings of the International Congress of Mathematicians:
  Rio de Janeiro 2018}, pp.\  3287--3318, 2018.

\bibitem[Anosova \& Kurlin(2021)Anosova and Kurlin]{anosova2021isometry}
Anosova, O. and Kurlin, V.
\newblock An isometry classification of periodic point sets.
\newblock In \emph{Lecture Notes in Computer Science (Proceedings of DGMM)},
  volume 12708, pp.\  229--241, 2021.

\bibitem[Anosova \& Kurlin(2022)Anosova and Kurlin]{anosova2022algorithms}
Anosova, O. and Kurlin, V.
\newblock Algorithms for continuous metrics on periodic crystals.
\newblock \emph{arxiv:2205.15298}, 2022.

\bibitem[Arya \& Mount(1993)Arya and Mount]{arya1993approximate}
Arya, S. and Mount, D.~M.
\newblock Approximate nearest neighbor queries in fixed dimensions.
\newblock In \emph{SODA}, volume~93, pp.\  271--280, 1993.

\bibitem[Beckmann et~al.(1990)Beckmann, Kriegel, Schneider, and
  Seeger]{beckmann1990r}
Beckmann, N., Kriegel, H.-P., Schneider, R., and Seeger, B.
\newblock The r*-tree: An efficient and robust access method for points and
  rectangles.
\newblock In \emph{Proceedings of the ACM SIGMOD International Conf. on
  Management of Data}, pp.\  322--331, 1990.

\bibitem[Bentley \& Friedman(1978)Bentley and Friedman]{bentley1978fast}
Bentley, J. and Friedman, J.
\newblock Fast algorithms for constructing minimal spanning trees in coordinate
  spaces.
\newblock \emph{IEEE Transactions on Computers}, 27\penalty0 (02):\penalty0
  97--105, 1978.

\bibitem[Bentley(1975)]{bentley1975multidimensional}
Bentley, J.~L.
\newblock Multidimensional binary search trees used for associative searching.
\newblock \emph{Communications of the ACM}, 18\penalty0 (9):\penalty0 509--517,
  1975.

\bibitem[Berchtold et~al.(1996)Berchtold, Keim, and Kriegel]{berchtold1996x}
Berchtold, S., Keim, D., and Kriegel, H.
\newblock The x-tree: An index structure for high-dimensional data.
\newblock In \emph{Very Large Data-Bases}, pp.\  28--39, 1996.

\bibitem[Beygelzimer et~al.(2006{\natexlab{a}})Beygelzimer, Kakade, and
  Langford]{beygelzimer2006cover}
Beygelzimer, A., Kakade, S., and Langford, J.
\newblock Cover trees for nearest neighbor.
\newblock In \emph{Proceedings of ICML}, pp.\  97--104, 2006{\natexlab{a}}.

\bibitem[Beygelzimer et~al.(2006{\natexlab{b}})Beygelzimer, Kakade, and
  Langford]{beygelzimer2006coverExtend}
Beygelzimer, A., Kakade, S., and Langford, J.
\newblock Extended version of "cover trees for nearest neighbor".
\newblock 2006{\natexlab{b}}.
\newblock URL \url{https://hunch.net/~jl/projects/cover_tree/paper/paper.pdf}.

\bibitem[Cole \& Gottlieb(2006)Cole and Gottlieb]{cole2006searching}
Cole, R. and Gottlieb, L.-A.
\newblock Searching dynamic point sets in spaces with bounded doubling
  dimension.
\newblock In \emph{Proceedings of the thirty-eighth annual ACM symposium on
  Theory of computing}, pp.\  574--583, 2006.

\bibitem[Cormen(1990)]{Cormen1990}
Cormen, T.
\newblock \emph{Introduction to algorithms}.
\newblock MIT Press McGraw-Hill, Cambridge, Mass. New York, 1990.

\bibitem[Curtin(2015)]{curtin2015improving}
Curtin, R.~R.
\newblock \emph{Improving dual-tree algorithms}.
\newblock PhD thesis, Georgia Institute of Technology, 2015.

\bibitem[Curtin et~al.(2013)Curtin, Cline, Slagle, March, Ram, Mehta, and
  Gray]{Curtin2013a}
Curtin, R.~R., Cline, J.~R., Slagle, N.~P., March, W.~B., Ram, P., Mehta,
  N.~A., and Gray, A.~G.
\newblock Mlpack: A scalable c++ machine learning library.
\newblock \emph{Journal of Machine Learning Research}, 14\penalty0
  (Mar):\penalty0 801--805, 2013.

\bibitem[Curtin et~al.(2015)Curtin, Lee, March, and Ram]{curtin2015plug}
Curtin, R.~R., Lee, D., March, W.~B., and Ram, P.
\newblock Plug-and-play dual-tree algorithm runtime analysis.
\newblock \emph{J. Mach. Learn. Res.}, 16:\penalty0 3269--3297, 2015.

\bibitem[Edelsbrunner et~al.(2021)Edelsbrunner, Heiss, Kurlin, Smith, and
  Wintraecken]{edelsbrunner2021density}
Edelsbrunner, H., Heiss, T., Kurlin, V., Smith, P., and Wintraecken, M.
\newblock The density fingerprint of a periodic point set.
\newblock In \emph{Proceedings of Symposium on Computational Geometry}, pp.\
  32:1--32:16, 2021.

\bibitem[Elkin \& Kurlin(2022{\natexlab{a}})Elkin and
  Kurlin]{elkin2022counterexamples}
Elkin, Y. and Kurlin, V.
\newblock Counterexamples expose gaps in the proof of time complexity for cover
  trees introduced in 2006.
\newblock In \emph{Topological Data Analysis and Visualization (TopoInVis,
  arxiv:2208.09447)}, 2022{\natexlab{a}}.

\bibitem[Elkin \& Kurlin(2022{\natexlab{b}})Elkin and Kurlin]{elkin2022paired}
Elkin, Y. and Kurlin, V.
\newblock Paired compressed cover trees guarantee a near linear parametrized
  complexity for all $ k $-nearest neighbors search in an arbitrary metric
  space.
\newblock \emph{arXiv:2201.06553}, 2022{\natexlab{b}}.

\bibitem[Finkel \& Bentley(1974)Finkel and Bentley]{finkel1974quad}
Finkel, R.~A. and Bentley, J.~L.
\newblock Quad trees a data structure for retrieval on composite keys.
\newblock \emph{Acta informatica}, 4\penalty0 (1):\penalty0 1--9, 1974.

\bibitem[Fukunaga \& Narendra(1975)Fukunaga and Narendra]{fukunaga1975branch}
Fukunaga, K. and Narendra, P.~M.
\newblock A branch and bound algorithm for computing k-nearest neighbors.
\newblock \emph{IEEE transactions on computers}, 100\penalty0 (7):\penalty0
  750--753, 1975.

\bibitem[Fussell \& Subramanian(1988)Fussell and Subramanian]{fussell1988fast}
Fussell, D. and Subramanian, K.~R.
\newblock \emph{Fast ray tracing using kd trees}.
\newblock University of Texas at Austin, Department of Computer Sciences, 1988.

\bibitem[Har-Peled \& Mendel(2006)Har-Peled and Mendel]{har2006fast}
Har-Peled, S. and Mendel, M.
\newblock Fast construction of nets in low-dimensional metrics and their
  applications.
\newblock \emph{SIAM Journal on Computing}, 35\penalty0 (5):\penalty0
  1148--1184, 2006.

\bibitem[Holmes et~al.(2008)Holmes, Gray, and Isbell~Jr]{holmes2008quic}
Holmes, M.~P., Gray, A.~G., and Isbell~Jr, C.~L.
\newblock Quic-svd: Fast svd using cosine trees.
\newblock In \emph{NIPS}, pp.\  673--680, 2008.

\bibitem[Izbicki \& Shelton(2015)Izbicki and Shelton]{izbicki2015faster}
Izbicki, M. and Shelton, C.
\newblock Faster cover trees.
\newblock In \emph{International Conference on Machine Learning}, pp.\
  1162--1170. PMLR, 2015.

\bibitem[Jahanseir \& Sheehy(2016)Jahanseir and
  Sheehy]{jahanseir2016transforming}
Jahanseir, M. and Sheehy, D.
\newblock Transforming hierarchical trees on metric spaces.
\newblock In \emph{Canadian Conference on Computational Geometry}, pp.\
  107--113, 2016.

\bibitem[Jones(2000)]{Jones2000-db}
Jones, F.
\newblock \emph{Lebesgue integration on {E}uclidean space}.
\newblock Jones and Bartlett, Sudbury, MA, November 2000.

\bibitem[Kollar(2006)]{kollar2006fast}
Kollar, T.
\newblock \emph{Fast nearest neighbors}.
\newblock Technical report, Computer Science and Artificial Intelligence Lab,
  2006.

\bibitem[Krauthgamer \& Lee(2004)Krauthgamer and
  Lee]{krauthgamer2004navigating}
Krauthgamer, R. and Lee, J.~R.
\newblock Navigating nets: Simple algorithms for proximity search.
\newblock In \emph{Proceedings of the fifteenth annual ACM-SIAM symposium on
  Discrete algorithms}, pp.\  798--807, 2004.

\bibitem[Kurlin(2023{\natexlab{a}})]{kurlin2023simplexwise}
Kurlin, V.
\newblock Simplexwise distance distributions for finite spaces with metrics and
  measures.
\newblock \emph{arXiv:2303.14161}, 2023{\natexlab{a}}.

\bibitem[Kurlin(2023{\natexlab{b}})]{kurlin2023strength}
Kurlin, V.
\newblock The strength of a simplex is the key to a continuous isometry
  classification of {E}uclidean clouds of unlabelled points.
\newblock \emph{arXiv:2303.13486}, 2023{\natexlab{b}}.

\bibitem[Lin et~al.(1994)Lin, Jagadish, and Faloutsos]{lin1994tv}
Lin, K., Jagadish, H., and Faloutsos, C.
\newblock The tv-tree: An index structure for high-dimensional data.
\newblock \emph{The VLDB Journal}, 3\penalty0 (4):\penalty0 517--542, 1994.

\bibitem[Liu et~al.(2004)Liu, Moore, Gray, and Yang]{liu2004investigation}
Liu, T., Moore, A.~W., Gray, A.~G., and Yang, K.
\newblock An investigation of practical approximate nearest neighbor
  algorithms.
\newblock In \emph{NIPS}, volume~12, pp.\  2004, 2004.

\bibitem[Manocha \& Girolami(2007)Manocha and Girolami]{manocha2007empirical}
Manocha, S. and Girolami, M.~A.
\newblock An empirical analysis of the probabilistic k-nearest neighbour
  classifier.
\newblock \emph{Pattern Recognition Letters}, 28\penalty0 (13):\penalty0
  1818--1824, 2007.

\bibitem[March et~al.(2010)March, Ram, and Gray]{march2010fast}
March, W.~B., Ram, P., and Gray, A.~G.
\newblock Fast euclidean minimum spanning tree: algorithm, analysis, and
  applications.
\newblock In \emph{Proceedings of SIG KDD: Knowledge discovery and data
  mining}, pp.\  603--612, 2010.

\bibitem[McNames(2001)]{mcnames2001fast}
McNames, J.
\newblock A fast nearest-neighbor algorithm based on a principal axis search
  tree.
\newblock \emph{IEEE Transactions on pattern analysis and machine
  intelligence}, 23\penalty0 (9):\penalty0 964--976, 2001.

\bibitem[Omohundro(1989)]{omohundro1989five}
Omohundro, S.~M.
\newblock \emph{Five balltree construction algorithms}.
\newblock International Computer Science Institute Berkeley, 1989.

\bibitem[Pelleg \& Moore(1999)Pelleg and Moore]{pelleg1999accelerating}
Pelleg, D. and Moore, A.
\newblock Accelerating exact k-means algorithms with geometric reasoning.
\newblock In \emph{Proceedings of the SIGKDD Int. Conf. on Knowledge Discovery
  and Data Mining}, pp.\  277--281, 1999.

\bibitem[Ram \& Gray(2012)Ram and Gray]{ram2012maximum}
Ram, P. and Gray, A.~G.
\newblock Maximum inner-product search using cone trees.
\newblock In \emph{Proceedings of the 18th ACM SIGKDD international conference
  on Knowledge discovery and data mining}, pp.\  931--939, 2012.

\bibitem[Ram et~al.(2009)Ram, Lee, March, and Gray]{ram2009linear}
Ram, P., Lee, D., March, W., and Gray, A.
\newblock Linear-time algorithms for pairwise statistical problems.
\newblock \emph{Advances in Neural Information Processing Systems},
  22:\penalty0 1527--1535, 2009.

\bibitem[Ram et~al.(2012)Ram, Lee, and Gray]{ram2012nearest}
Ram, P., Lee, D., and Gray, A.~G.
\newblock Nearest-neighbor search on a time budget via max-margin trees.
\newblock In \emph{Proceedings of the 2012 SIAM International Conference on
  Data Mining}, pp.\  1011--1022. SIAM, 2012.

\bibitem[Rudin(1990)]{Rudin1990-tp}
Rudin, W.
\newblock \emph{Functional Analysis}.
\newblock International Series in Pure \& Applied Mathematics. McGraw Hill
  Higher Education, Maidenhead, England, 2 edition, October 1990.

\bibitem[Smith \& Kurlin(2022)Smith and Kurlin]{smith2022practical}
Smith, P. and Kurlin, V.
\newblock A practical algorithm for degree-k voronoi domains of
  three-dimensional periodic point sets.
\newblock In \emph{Lecture Notes in Computer Science (Proceedings of ISVC)},
  volume 13599, pp.\  377--391, 2022.

\bibitem[Wang et~al.(2021)Wang, Xu, Yue, and Wang]{wang2021comprehensive}
Wang, M., Xu, X., Yue, Q., and Wang, Y.
\newblock A comprehensive survey and experimental comparison of graph-based
  approximate nearest neighbor search.
\newblock \emph{arXiv preprint arXiv:2101.12631}, 2021.

\bibitem[Widdowson \& Kurlin(2021)Widdowson and Kurlin]{widdowson2021pointwise}
Widdowson, D. and Kurlin, V.
\newblock Pointwise distance distributions of periodic sets.
\newblock \emph{arXiv:2108.04798}, 2021.

\bibitem[Widdowson \& Kurlin(2022)Widdowson and Kurlin]{widdowson2022resolving}
Widdowson, D. and Kurlin, V.
\newblock Resolving the data ambiguity for periodic crystals.
\newblock \emph{Advances in Neural Information Processing Systems (Proceedings
  of NeurIPS 2022)}, 35, 2022.

\bibitem[Widdowson \& Kurlin(2023)Widdowson and
  Kurlin]{widdowson2023recognizing}
Widdowson, D. and Kurlin, V.
\newblock Recognizing rigid patterns of unlabeled point clouds by complete and
  continuous isometry invariants with no false negatives and no false
  positives.
\newblock In \emph{Proceedings of CVPR (arxiv:2303.15385)}, 2023.

\bibitem[Widdowson et~al.(2022)Widdowson, Mosca, Pulido, Kurlin, and
  Cooper]{widdowson2022average}
Widdowson, D., Mosca, M., Pulido, A., Kurlin, V., and Cooper, A.
\newblock Average minimum distances of periodic point sets.
\newblock \emph{MATCH Communications in Mathematical and in Computer
  Chemistry}, 87:\penalty0 529--559, 2022.

\bibitem[Yianilos(1993)]{yianilos1993data}
Yianilos, P.~N.
\newblock Data structures and algorithms for nearest neighbor search in general
  metric spaces.
\newblock In \emph{Symposium on Discrete Algorithms}, volume~93, pp.\  311--21,
  1993.

\end{thebibliography}
\bibliographystyle{icml2023}

\newpage
\appendix
\onecolumn

The appendices below contain the full version of the paper
with detailed proofs and pseudo codes

\section{The $k$-nearest neighbor search and overview of results}
\label{sec:intro_knn}

In the modern formulation, $k$-nearest neighbors problem intends to discover all $k\geq 1$ nearest neighbors in a given reference set $R$ for all points from another given query set $Q$.
Both sets belong to a common ambient space $X$ with a distance $d$ satisfying all metric axioms.
The simplest example of $X$ is $\R^n$ with the Euclidean metric. A query set $Q$ can be a single point or a subset of a larger reference set $R$. 
\medskip

\noindent
The \emph{exact} $k$-nearest neighbor problem asks for all true (non-approximate) $k$-nearest neighbors in $R$ for every query point $q\in Q$. 
Another probabilistic version of the $k$-nearest neighbor search \citet{har2006fast,manocha2007empirical} aims to find exact $k$-nearest neighbors with a given probability. 
The probabilistic $k$-nearest neighbor problem can be simplified to $k$ instances of 1-nearest-neighbors problem by splitting $R$ into $k$ subsets $R_{1}, ..., R_{k}$ and searching for nearest neighbors in each subset.
The approximate version \citet{arya1993approximate,krauthgamer2004navigating,andoni2018approximate, wang2021comprehensive} of the nearest neighbor search looks for an $\epsilon$-approximate neighbor $r\in R$ of every query point $q \in Q$ such that $d(q,r) \leq (1+\epsilon)d(q,\NN(q))$ , where $\epsilon>0$ is fixed and $\NN(q)$ is the exact first nearest neighbor of $q$. 

\medskip
\noindent
\textbf{Spacial data structures.}
It is well known that the time complexity of a brute-force approach of finding all 1st nearest neighbors of points from $Q$ within $R$ is proportional to the product $|Q|\cdot|R|$ of the sizes of $Q, R$.
Already in the 1970s real data was big enough to motivate faster algorithms and sophisticated data structures.
One of the first spacial data structures, a \emph{quadtree} \citet{finkel1974quad}, hierarchically splits a reference set $R\subset\R^2$ by subdividing its bounding box (a root) into four smaller boxes (children), which are recursively subdivided until final boxes (leaf nodes) contain only a small number of reference points.
A generalization of the quadtree to $\R^n$ exposes an exponential dependence of its computational complexity on the dimension $n$, because the $n$-dimensional box is subdivided into $2^n$ smaller boxes.
\medskip

\noindent
The first attempt to overcome this curse of dimensionality was the $kd$-tree \citet{bentley1975multidimensional} that subdivides a subset of $R$ at every step into two subsets instead of $2^n$ subsets.
Many more advanced algorithms utilizing spatial data structures have positively impacted various related research areas such as a minimum spanning tree \citet{bentley1978fast}, range search \citet{pelleg1999accelerating},  $k$-means clustering  \citet{pelleg1999accelerating}, and ray tracing \citet{fussell1988fast}.
The spacial data structures for finding nearest neighbors in the chronological order are $k$-means tree \citet{fukunaga1975branch}, $R$ tree \citet{beckmann1990r}, ball tree \citet{omohundro1989five}, $R^*$ tree \citet{beckmann1990r}, vantage-point tree \citet{yianilos1993data}, TV trees \citet{lin1994tv}, X trees \citet{berchtold1996x}, principal axis tree \citet{mcnames2001fast}, spill tree \citet{liu2004investigation}, cover tree \citet{beygelzimer2006cover}, cosine tree \citet{holmes2008quic}, max-margin tree \citet{ram2012nearest}, cone tree \citet{ram2012maximum} and others.


\dfnaspectratio*

\dfnkNearestNeighbor*

\noindent
For $Q=R=\{0,1,2,3\}$, the point $q=1$ has ordered distances $d_1=0<d_2=1=d_3<d_4=2$. 
The nearest neighbor sets are $\NN_1(1;R)=\{1\}$,
$\NN_2(1;R)=\{0,1,2\}=\NN_3(1;R)$, $\NN_4(1;R)=R$. 
So 0 can be a 2nd neighbor of 1, then 2 becomes a 3rd neighbor of 1, or these neighbors of $0$ can be found in a different order.


\proknn*


\noindent 
In a metric space, let $\bar B(p,t)$ be the closed ball with a center $p$ and a radius $t\geq 0$.
The notation $|\bar B(p,t)|$ denotes the number (if finite) of points in the closed ball. 
Definition~\ref{dfn:expansion_constant} recalls the expansion constant $c$ from \citet{beygelzimer2006cover} and introduces the new minimized expansion constant $c_m$, which is a discrete analog of the doubling dimension \citet{cole2006searching}. 

\dfnexpansionconstant*

\lemexpansionconstantproperty*
\begin{proof}
Let us first prove that $c_m(R) \leq c_m(U)$. Let $\epsilon > 0$ be arbitrary real number. 
By definition of $c_m(U)$ there exists set $\xi > 0$ and set $A$ satisfying $U \subseteq A$ for which
\begin{equation}
\label{eqa:expconstantproperty}
\sup\limits_{p \in A,t > \xi} |\dfrac{|\bar{B}(p,2t) \cap A|}{|\bar{B}(p,t) \cap A|} - c_m(U)| \leq \epsilon 
\end{equation}
Since $R \subseteq U$ we have $R \subseteq A$ therefore we can choose the same $\xi$ and set $U$ which satisfy inequality (\ref{eqa:expconstantproperty}). Therefore it follows $c_m(R) \leq c_m(U) + \epsilon$. Since $\epsilon$ was chosen arbitrarily it follows that $c_m(R) \leq c_m(U)$. 

\medskip
\noindent 

To prove that $c_m(R) \leq c(R)$, note that 
$ \sup\limits_{p \in A,t > \xi} |\dfrac{|\bar{B}(p,2t) \cap A|}{|\bar{B}(p,t) \cap A|} \leq
 \sup\limits_{p \in A,t > 0} |\dfrac{|\bar{B}(p,2t) \cap A|}{|\bar{B}(p,t) \cap A|}$. Then by choosing $\xi = \frac{d_{\min}(R)}{4}$ and $A = R$ we have:
$$c_m(R) \leq  \sup\limits_{p \in R,t > 0}|\dfrac{|\bar{B}(p,2t) \cap R|}{|\bar{B}(p,t) \cap R|} - c_m(U)| = c(R)$$

\end{proof}


\noindent
Note that both $c(R), c_m(R)$ are always defined when $R$ is finite. 
We will show that a single outlier can make the expansion constant $c(R)$ as large as $O(|R|)$.
The set $R=\{1,2,\dots,n,2n+1\}$ of $|R|=n+1$ points has $c(R)=n+1$ because $\bar B(2n+1;n)=\{2n+1\}$ is a single point, while $\bar B(2n+1;2n)=R$ is the full set of $n+1$ points. 
On the other hand the same set $R$ can be extended to a larger uniform set $A=\{1,2,\dots,2n-1,2n\}$ whose expansion constant $c(A)=2$, therefore the minimized constant of the original set $R$ becomes much smaller: $c_m(R) \leq c(A)=2<c(R)=n+1$.
\medskip

\noindent
The constant $c$ from \citet{beygelzimer2006cover} equals to $2^{\text{dim}_{KR}}$ from \citet[Section~2.1]{krauthgamer2004navigating}. 
In \citet[Section~1.1]{krauthgamer2004navigating} the doubling dimension $2^{\text{dim}}$ is defined as a minimum value $\rho$ such that any set $X$ can be covered by $2^{\rho}$ sets whose diameters are half of the diameter of $X$.
The past work \citet{krauthgamer2004navigating} proves that  $2^{\text{dim}} \leq 2^{n}$ for any subset of $\R^n$.
Theorem \ref{thm:normed_space_exp_constant} will prove that $c_m(R) \leq 2^{n}$ for any a finite subset $R\subset\R^{n}$, so $c_m(R)$ mimics $2^{\text{dim}}$.
\medskip

\noindent
\textbf{Navigating nets}.
In 2004, \citet[Theorem~2.7]{krauthgamer2004navigating} claimed that a navigating net can be constructed in time 
$O\big(2^{O(\text{dim}_{KR}(R)} |R| (\log|R|) \log(\log|R|)\big)$ and
all $k$-nearest neighbors of a query point $q$ can be found in time $O(2^{O(\text{dim}_{KR}(R \cup \{q\})}(k + \log|R|)$, where $\text{dim}_{KR}(R \cup \{q\})$ is the expansion constant defined above. 
All proofs and pseudo-codes were omitted.
The authors didn't reply to our request for details. 
\medskip

\noindent
\textbf{Modified navigating nets} \citet{cole2006searching} were used in 2006 to claim the time $O(\log(n) + (1/\epsilon)^{O(1)})$ for the $(1+\epsilon)$-approximate neighbors.
All proofs and pseudo-codes were left out, also for the construction of the modified navigating net for the claimed time $O(|R| \cdot \log(|R|))$. 
\medskip

\noindent
\textbf{Cover trees}. 
In 2006, \citet{beygelzimer2006cover} introduced a cover tree inspired by the navigating nets \citet{krauthgamer2004navigating}. 
This cover tree was designed to prove a worst-case bound for the nearest neighbor search in terms of the size $|R|$ of a reference set $R$ and the expansion constant $c(R)$ of Definition \ref{dfn:expansion_constant}. 
Assume that a cover tree is already constructed on set $R$. Then \citet[Theorem~5]{beygelzimer2006cover} claims that nearest neighbor of any query point $q \in Q$ could be found in time $O(c(R)^{12} \cdot \log|R|)$. In 2015, \citet[Section~5.3]{curtin2015improving} pointed out that the proof of \citet[Theorem~5]{beygelzimer2006cover} contains a crucial gap, now have been confirmed by a specific dataset in \citet[Counterexample~5.2]{elkin2022counterexamples}. The time complexity result of the cover tree construction algorithm \citet[Theorem~6]{beygelzimer2006cover} had a similar issue, the gap of which is exposed rigorously in \citet[Counterexample~4.2]{elkin2022counterexamples}.
\medskip

\begin{figure}
	\centering
	\begin{tikzpicture}[align=center, node distance = 1.0cm, scale = 0.45]
    
	\node (scaleminftext) {$C_{-\infty}$};
	\node [blockz, right of =scaleminftext] (scaleminf1) {1};
	\node [blockz, right of =scaleminf1] (scaleminf2) {2};
	\node [blockz, right of =scaleminf2] (scaleminf3) {3};
	\node [blockz, right of =scaleminf3] (scaleminf4) {4};
	\node [blockz, right of =scaleminf4] (scaleminf5) {5};

	\node [above of =scaleminftext](scalem1text) {$C_{-1}$};
	\node [blockz, above of =scaleminf1] (scalem11) {1};
	\node [blockz, above of =scaleminf2] (scalem12) {2};
	\node [blockz, above of =scaleminf3] (scalem13) {3};
	\node [blockz, above of =scaleminf4] (scalem14) {4};
	\node [blockz, above of =scaleminf5] (scalem15) {5};
	
	\node [above of =scalem1text](scale0text) {$C_{0}$};
	\node [blockz, above of =scalem11] (scale01) {1};
	\node [blockz, above of =scalem13] (scale03) {3};
	\node [blockz, above of =scalem15] (scale05) {5};	
	
	\node [above of =scale0text](scale1text) {$C_{1}$};
	\node [blockz, above of =scale01] (scale11) {1};
	\node [blockz, above of =scale05] (scale15) {5};
	
	\node [above of =scale1text](scale2text) {$C_{2}$};
	\node [blockz, above of =scale11] (scale21) {1};

	\node [above of =scale2text](scaleinftext) {$C_{\infty}$};
	\node [blockz, above of =scale21] (scaleinf1) {1};
    
      \draw[dashed,->] (scalem11) -> (scaleminf1);
      \draw[dashed,->] (scalem12) -> (scaleminf2);
      \draw[dashed,->] (scalem13) -> (scaleminf3);
      \draw[dashed,->] (scalem14) -> (scaleminf4);
      \draw[dashed,->] (scalem15) -> (scaleminf5);

	  \draw[->] (scale01) -> (scalem11);

	  \draw[->] (scale03) -> (scalem12); 
     \draw[dashed, ->] (scale03) -> (scalem13);

	  \draw[->] (scale05) -> (scalem15);

	  \draw[->] (scale03) -> (scalem14);
	  
	  \draw[dashed,->] (scaleinf1) -> (scale21);
	  \draw[->] (scale21) -> (scale11);
	  \draw[->] (scale15) -> (scale05);
	  
	  \draw[->] (scale11) -> (scale01);
	  \draw[->] (scale11) -> (scale03);
	  \draw[->] (scale21) -> (scale15);
      
\end{tikzpicture} 
	\hspace{0.5cm}
	\begin{tikzpicture}[align=center, node distance = 1.0cm, scale = 0.45]
    

	\node (scalem2text) {Level 2};
	\node[below of =scalem2text] (scalem1text) {Level 1};
	\node[below of =scalem1text] (scalem0text) {Level 0};
	\node[below of =scalem0text] (scalemm1text) {Level -1};
	\node [blockz,  right=25pt of scalem2text ] (node1) {1};
	\node [blockz,  right=100pt of scalem1text] (node5) {5};
	\node [blockz,  right=25pt of scalem0text] (node3) {3};
	\node [blockz,  right=5pt of scalemm1text] (node2) {2};
	\node [blockz,  right=50pt of scalemm1text] (node4) {4};

	  \draw[->] (node1) -> (node5);
	  \draw[->] (node1) -> (node3);
	  \draw[->] (node3) -> (node2);
	  \draw[->] (node3) -> (node4);

	
	
	

    

	  
	  
	  
      
\end{tikzpicture} 
	\caption{\textbf{Left:} an implicit cover tree from \citet[Section~2]{beygelzimer2006cover} at ICML 2006 for a finite set of reference points $R = \{1,2,3,4,5\}$ with the Euclidean distance $d(x,y) = |x-y|$. 
		\textbf{Right:} a new compressed cover tree in Definition~\ref{dfn:cover_tree_compressed} corrects the past worst-case complexity for $k$-nearest neighbors search in $R$.}
	\label{fig:implicitcompressed}
\end{figure}
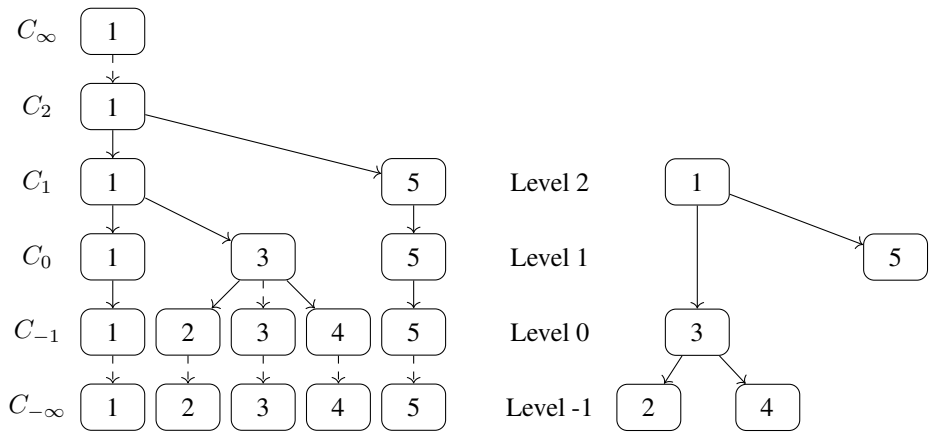

\noindent 
\textbf{Further studies in cover trees.} A noteworthy paper on cover trees \citet{kollar2006fast} introduced a new probabilistic algorithm for the nearest neighbor search, as well as corrected the pseudo-code of the cover tree construction algorithm of \citet[Algorithm~2]{beygelzimer2006cover}.
Later in 2015, a new, more efficient implementation of cover tree was introduced in \citet{izbicki2015faster}. However, no new time-complexity results were proven. A study \citet{jahanseir2016transforming} explored connections between modified navigating nets \citet{cole2006searching} and cover trees \citet{beygelzimer2006cover}.
 Multiple papers \citet{beygelzimer2006coverExtend, ram2009linear, curtin2015plug} studied possibility of solving $k$-nearest neighbor problem ( Problem \ref{pro:knn} ) by using cover tree on both, the query set and the reference set, for further details see \citet[Section~6]{elkin2022counterexamples}.
\medskip 

\noindent
\textbf{The main contributions} are the following.
\begin{itemize}
    \item Definition~\ref{dfn:cover_tree_compressed} introduces a compressed cover tree.
    \item Theorem \ref{thm:construction_time}  and Corollary \ref{cor:construction_time_KR} estimate the time to build a compressed cover tree.
    \item Theorem \ref{thm:knn_KR_time} and Corollary~\ref{cor:cover_tree_knn_miniziminzed_constant_time} estimate the time to find all $k$-nearest neighbors as in Problem~\ref{pro:knn}.
    \item Theorem~\ref{thm:approximate_k_nearestneighbors} estimates the time complexity of approximate $k$-nearest neighbor search.
\end{itemize}

\medskip

\noindent 
This work corrects the past gaps of the single-tree approach via an original  cover tree \citet{beygelzimer2006cover} by using a new compressed cover tree $\T(R)$ from Definition \ref{dfn:cover_tree_compressed}, which can be constructed on any finite reference set $R$ with a metric $d$. 
Theorem~\ref{thm:construction_time_KR} will prove that a compressed cover tree $\T(R)$ can be built in time $O(c_m(R)^8 \cdot c(R)^2 \cdot \log_2(|R|) \cdot |R|)$.
\medskip

 \noindent
The past gap in the proof of the time complexity \citet[Theorem~1]{beygelzimer2006cover} for nearest neighbor search is tackled by new Algorithm \ref{alg:cover_tree_k-nearest}, which add an essential block to the original code in \citet[Algorithm~1]{beygelzimer2006cover}.
The extra block eliminates the issue of having too many successive iterations when a query point $q$ is disproportionately far away from the remaining candidate set $R_i$ on some level $i$.
Then Lemma \ref{lem:knn_depth_bound} shows that the number of iterations of Algorithm \ref{alg:cover_tree_k-nearest} is bounded by $O(c(R)^2\log_2(|R|))$. 
This new lemma replaces the old result \citet[Lemma~4.3]{beygelzimer2006cover}, which had a similar bound for the number of explicit levels of a cover tree, for further information see \citet[Definition~3.2]{elkin2022counterexamples}
The old result cannot be used to estimate the number of iterations of \citet[Algorithm~1]{beygelzimer2006cover} due to \citet[Counterexample~5.2]{elkin2022counterexamples}. 
\medskip

\noindent
Assume that a compressed cover tree $\T(R)$ is already constructed on a reference set $R$. 
Our first main Theorem \ref{thm:knn_KR_time} shows that $k$-nearest neighbors of a query node $q$ can be found in time of
	$$O\Big ( c(R \cup \{q\})^2 \cdot \log_2(k) \cdot \big((c_m(R))^{10}  \cdot \log_2(|R|) + c(R \cup \{q\}) \cdot k\big) \Big).$$
Recall that $c(R)$ can potentially become as large as $O(|R|)$ when $R$ is not uniformly distributed.
Our second main Corollary~\ref{cor:cover_tree_knn_miniziminzed_constant_time} estimates the time complexity of the new $k$-nearest neighbor search by using only the minimized expansion constant $c_m(R)$ of Definition \ref{dfn:expansion_constant} and the aspect ratio $\Delta(R)$ of Definition \ref{dfn:radius+d_min} as parameters. These parameters are less dependent on the point distribution (or noise) in the sets $R,Q$.
In many cases, $\Delta(R)$ is relatively small and $c_m(R)$ depends mostly on the dimension of the ambient space $X$. It is shown that $k$-nearest neighbors of $q$ in a reference set $R$ can be found in time 
$$O\Big ((c_m(R))^{10} \cdot \log_2(k) \cdot \log_2(\Delta(R)) + |\bar{B}(q, 5d_k(q,R))| \cdot \log_2(k) \Big ), \text{ where }$$
 $d_k(q,R)$ is the distance from $q$ to its $k$th nearest neighbor.
Tables ~\ref{table:dim:construction}-\ref{table:dim:knearest} summarize past and new results.

\begin{table}[H]
	\label{table:KR:construction}
	\centering
	\caption{Results for building data structures with hidden classic expansion constant $c(R)$ of Definition \ref{dfn:expansion_constant} or KR-type constant $2^{\text{dim}_{KR}}$ \citet[Section~2.1]{krauthgamer2004navigating}}
             \vskip 0.15in
	\begin{tabular}{|V{3.0cm}|V{6cm}|V{25mm}|V{35mm}|}
		\hline
		Data structure    & claimed time complexity   & space & proofs \\
		\hline
		Navigating nets \citet{krauthgamer2004navigating} & $O\big(2^{O(\text{dim}_{KR})} \cdot |R| \log(|R|) \log(\log|R| )\big)$, \citet[Theorem~2.6]{krauthgamer2004navigating} & $O(2^{O(\text{dim})}|R|)$ & Not available \\
		\hline
		Cover tree \citet{beygelzimer2006cover}   &  $O(c(R)^{O(1)} \cdot |R| \cdot \log|R|)$, \citet[Theorem~6]{beygelzimer2006cover} & $O(|R|)$ & \citet[Counterexample~4.2]{elkin2022counterexamples} shows that the past proof is incorrect \\ 
		\hline
		Compressed cover tree [dfn \ref{dfn:cover_tree_compressed}] &     $O\big(c(R)^{O(1)} \cdot |R| \cdot \log(R) \big)$  &  $O(|R|)$ Lemma~\ref{lem:linear_space_cover_tree}     & Corollary \ref{cor:construction_time_KR} \\
		\hline       
	\end{tabular}
\end{table}
\begin{table}[H]
	\label{table:KR:knearest}
	\centering
             \vskip 0.15in
	\caption{Results for exact $k$-nearest neighbors of one query point $q \in X$ using hidden classic expansion constant $c(R)$ of Definition \ref{dfn:expansion_constant} or KR-type constant $2^{\text{dim}_{KR}}$  \citet[Section~2.1]{krauthgamer2004navigating} and assuming that all data structures are already built. Note that the dimensionality factor $2^{\text{dim}_{KR}}$ is equivalent  to $c(R)^{O(1)}$. }
	\begin{tabular}{|V{3.0cm}|V{5cm}|V{22mm}|V{35mm}|}
		\hline
		Data structure     & claimed time complexity   & space  & proofs \\
		\hline
		Navigating nets \citet{krauthgamer2004navigating} & $O\big(2^{O(\text{dim}_{KR})}(\log(|R|) + k)\big)$ for $k\geq 1$ \citet[Theorem~2.7]{krauthgamer2004navigating} & $O(2^{O(\text{dim})} |R|)$ & Not available \\
		\hline
		Cover tree \citet{beygelzimer2006cover}   &  $O\big(c(R)^{O(1)}\log|R|\big)$ for $k=1$ \citet[Theorem~5]{beygelzimer2006cover} & $O(|R|)$ & \citet[Counterexample~5.2]{elkin2022counterexamples} shows that the past proof is incorrect \\ 
		\hline
		Compressed cover tree, Definition \ref{dfn:cover_tree_compressed} &     $O\big(c(R)^{O(1)} \cdot \log(k) \cdot (\log(|R|) + k)\big)$  & $O(|R|)$, Lemma~\ref{lem:linear_space_cover_tree}      & Theorem \ref{thm:knn_KR_time} \\
		\hline            
	\end{tabular}
\end{table}
\begin{table}[H]
	\centering
	\caption{Building data structures with hidden $c_m(R)$ or dimensionality constant $2^{\text{dim}}$ \citet[Section~1.1]{krauthgamer2004navigating}}
             \vskip 0.15in
	\begin{tabular}{|V{3.0cm}|V{5.5cm}|V{25mm}|V{35mm}|}
		\hline
		Data structure    & claimed time complexity   & space & proofs \\
		\hline
		Navigating nets \citet{krauthgamer2004navigating} & $O\big(2^{O(\text{dim})} \cdot |R| \cdot \log(\Delta) \cdot \log(\log((\Delta ))\big)$
		& $O(2^{O(\text{dim})}|R|)$  & \citet[Theorem~2.5]{krauthgamer2004navigating} \\
		\hline
		Compressed cover tree [dfn \ref{dfn:cover_tree_compressed}] & $O\big(c_m(R)^{O(1)}\cdot|R|\log(\Delta(|R|))\big)$    & $O(|R|)$ Lemma~\ref{lem:linear_space_cover_tree}     & Theorem \ref{thm:construction_time}  \\
		\hline           
	\end{tabular}
	\label{table:dim:construction}  
\end{table}
\begin{table}[H]
	\centering
      \vskip 0.15in
	\caption{Results for exact $k$-nearest neighbors of one point $q$ using hidden $c_m(R)$ or dimensionality constant $2^{\text{dim}}$ \citet[Section~1.1]{krauthgamer2004navigating}  assuming that all structures are built.}
	\begin{tabular}{|V{3.0cm}|V{55mm}|V{22mm}|V{3.0cm}|}
		\hline
		Data structure     & claimed time complexity & space & proofs \\
		\hline
		Navigating nets \citet{krauthgamer2004navigating} & $O\big(2^{O(\text{dim})} \cdot \log(\Delta) + |\bar{B}(q,O(d(q,R))|\big)$ for $k = 1$  & $O(2^{O(\text{dim})}|R|)$  & a proof outline in \citet[Theorem~2.3]{krauthgamer2004navigating} \\
		\hline
		Compressed cover tree, Definition~\ref{dfn:cover_tree_compressed} &     $O\big(c_m(R)^{O(1)} \cdot \log(k) \cdot (\log(|\Delta|) + |\bar{B}(q,O(d_k(q,R))| ) \big )$  & $O(|R|)$, Lemma~\ref{lem:linear_space_cover_tree}     & Corollary~\ref{cor:cover_tree_knn_miniziminzed_constant_time} \\
		\hline          
	\end{tabular}
	\label{table:dim:knearest}
\end{table}




\section{Compressed cover tree}
\label{sec:cover_tree}
This section introduces in Definition \ref{dfn:cover_tree_compressed} a new compressed cover tree, which will be used to solve Problem \ref{pro:knn}. 
Other important results are Lemmas~\ref{lem:packing} and~\ref{lem:growth_bound}.
Given a $\delta$-sparse finite metric space $R$, Lemma~\ref{lem:packing} shows that the number of points of $R$ in the closed ball $\bar{B}(p,t)$ has the upper bound $c_m(S)^{\mu}$, where $\mu$ depends on $\frac{t}{\delta}$.
Lemma~\ref{lem:growth_bound} will imply that if there are points $p,q$ in a finite metric space $R$ satisfying $2r < d(p,q) \leq 3r$ for some $r \in \R$, then $|\bar{B}(q,4r)| \geq (1 + \frac{1}{c(R)^2})|\bar{B}(q,r)|$.

 \dfncovertreecompressed*

\begin{lem}[Linear space of $\T(R)$]
\label{lem:linear_space_cover_tree}
Let $(R,d)$ be a finite metric space. Then any cover tree $\T(R)$ from Definition \ref{dfn:cover_tree_compressed} takes $O(|R|)$ space. 
\end{lem}
\begin{proof}
Since $\T(R)$ is a tree , both its vertex set and its edge set contain at most $|R|$ nodes. Therefore $\T(R)$ takes at most $O(|R|)$ space. 
\end{proof}

\begin{figure}[h]
	\centering
	\begin{subfigure}{.30\textwidth}
		\centering
		\begin{tikzpicture}[align=center, node distance = 1.0cm, scale = 0.45]

	\node (scalem2text) {Level $i$};
	\node[below of =scalem2text] (scalem1text) {Level $i-1$};
	\node[below of =scalem1text] (scalem0text) {Level -1};
	\node [blockz,  right=30pt of scalem2text ] (node1) {$2^{i}$};
    \node [blockz,  right=25pt of scalem1text ] (node2) {1};
	\node [blockz,  right=32pt of scalem0text ] (node3) {0};
	

	  \draw[->] (node1) -> (node2);
	   \draw[->] (node2) -> (node3);

\end{tikzpicture}
		\label{fig:cover_tree_variant_one}
		
	\end{subfigure}
	\begin{subfigure}{.30\textwidth}
		\centering
		\begin{tikzpicture}[align=center, node distance = 1.0cm, scale = 0.45]

	\node (scalem2text) {Level $i$};
	\node[below of =scalem2text] (scalem1text) {Level $i-1$};
	\node[below of =scalem1text] (scalem0text) {Level -1};
	\node [blockz,  right=30pt of scalem2text ] (node1) {$2^{i}$};
    \node [blockz,  right=25pt of scalem1text ] (node2) {0};
	\node [blockz,  right=32pt of scalem0text ] (node3) {1};
	

	  \draw[->] (node1) -> (node2);
	   \draw[->] (node2) -> (node3);

\end{tikzpicture}
		\label{fig:cover_tree_variant_two}
	\end{subfigure}
	\begin{subfigure}{.30\textwidth}
		\centering
		\begin{tikzpicture}[align=center, node distance = 1.0cm, scale = 0.45]

	\node (scalem2text) {Level $i$};
	\node[below of =scalem2text] (scalem1text) {Level $i-1$};
	\node[below of =scalem1text] (scalem0text) {Level -1};
	\node [blockz,  right=30pt of scalem2text ] (node1) {0};
    \node [blockz,  right=50pt of scalem1text ] (node2) {$2^{i}$};
	\node [blockz,  right=32pt of scalem0text ] (node3) {1};
	

	  \draw[->] (node1) -> (node2);
	   \draw[->] (node1) -> (node3);

\end{tikzpicture}
		\label{fig:cover_tree_variant_three}
	\end{subfigure}
	\caption{
		Compressed cover trees $\T(R)$ from Definition~\ref{dfn:cover_tree_compressed} for $R = \{0,1,2^{i}\}$. 
	}
	\label{fig:cover_tree_easy_example}
\end{figure}
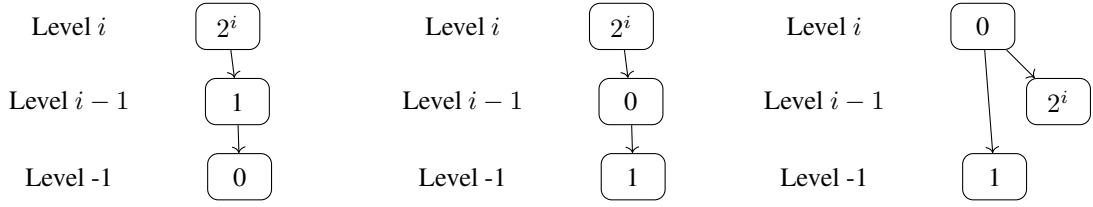

\begin{figure}
	\centering
	\begin{tikzpicture}[align=center, node distance = 1.0cm, scale = 0.45]

	\node (scale3) {Level $2$};
	\node[below of =scale3] (scale2) {Level 1};
	\node[below of =scale2] (scale1) {Level 0};
	\node[below of =scale1] (scale0) {Level $-1$};
	\node [blockzm1,  right=1pt of scale0 ] (node1) {1};
	\node [blockzm1,  right=26pt of scale0 ] (node3) {3};
	\node [blockzm1,  right=51pt of scale0 ] (node5) {5};
	\node [blockzm1,  right=76pt of scale0 ] (node7) {7};
	\node [blockzm1,  right=101pt of scale0 ] (node9) {9};
	\node [blockzm1,  right=126pt of scale0 ] (node11) {11};
	\node [blockzm1,  right=151pt of scale0 ] (node13) {13};
	\node [blockzm1,  right=176pt of scale0 ] (node15) {15};
		\node [blockzm1,  right=13pt of scale1 ] (node2) {2};
		\node [blockzm1,  right=63pt of scale1 ] (node6) {6};
		\node [blockzm1,  right=113pt of scale1 ] (node10) {10};
		\node [blockzm1,  right=163pt of scale1 ] (node14) {14};

		\node [blockzm1,  right=38pt of scale2 ] (node4) {4};
		\node [blockzm1,  right=138pt of scale2 ] (node12) {12};
		\node [blockzm1,  right=88pt of scale3 ] (node8) {8};
	

	
    \draw[->] (node2) -> (node1);
    \draw[->] (node2) -> (node3);
     \draw[->] (node6) -> (node5);
    \draw[->] (node6) -> (node7);
     \draw[->] (node10) -> (node9);
    \draw[->] (node10) -> (node11);
     \draw[->] (node14) -> (node13);
    \draw[->] (node14) -> (node15);

    \draw[->] (node8) -> (node4);
    \draw[->] (node8) -> (node12);

      \draw[->] (node4) -> (node2);
     \draw[->] (node4) -> (node6);
         
         \draw[->] (node12) -> (node10);
     \draw[->] (node12) -> (node14);

\end{tikzpicture}
	\caption{Compressed cover tree $\T(R)$ on the set $R$ in Example \ref{exa:cover_tree_big} with root $16$. }
	\label{fig:cover_tree_big}
\end{figure}
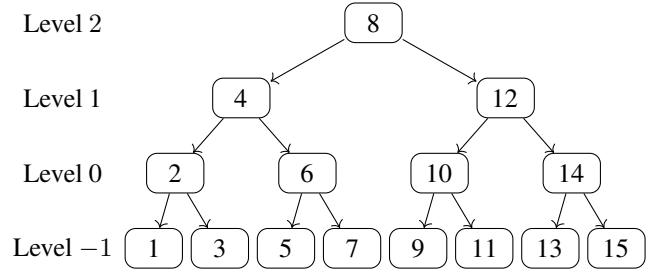

\begin{exa}[$\T(R)$ in Fig.~\ref{fig:cover_tree_big}]
	\label{exa:cover_tree_big}
	Let $(\R, d = |x-y|)$ be the real line with euclidean metric.
	Let $R = \{1,2,3,...,15\} $ be its finite subset. 
	Fig.~\ref{fig:cover_tree_big} shows a compressed cover tree on the set $R$ with the root $r=8$. 
	The cover sets of $\T(R)$ are $ C_{-1} = \{1,2,3,...,15\}$, $C_0 = \{2,4,6,8,10,12,14\}$, $C_{1} = \{4,8,12\}$ and $C_{2} = \{8\}$.
	We check the conditions of Definition \ref{dfn:cover_tree_compressed}.
	\begin{itemize}
		\item Root condition $(\ref{dfn:cover_tree_compressed}a)$: 
		since $\max_{p \in R \setminus \{8\}}d(p, 8) = 7$ and $\ceil{\log_2(7)} - 1= 2$, the root can have the level $l(8) = 2$.
		\item Covering condition (\ref{dfn:cover_tree_compressed}b) : for any $i \in {-1,0,1,2}$, let $p_i$ be arbitrary point having $l(p_i) = i$. Then we have 
		$d(p_{-1}, p_{0}) = 1 \leq 2^{0}$, 
		$d(p_0, p_1) = 2 \leq 2^{1}$ and  
		$d(p_1, p_2) = 4 \leq 2^{2}$.
		\item  Separation condition (\ref{dfn:cover_tree_compressed}c) : $d_{\min}(C_{-1}) = 1 > \frac{1}{2} = 2^{-1}$, $d_{\min}(C_{0}) = 2 > 1 = 2^{0}, d_{\min}(C_{1}) = 4 > 2 = 2^{1}$. 
		\bs
	\end{itemize}
\end{exa}

\noindent
A cover tree was defined in \citet[Section~2]{beygelzimer2006cover} as a tree version of a navigating net from \citet[Section ~ 2.1]{krauthgamer2004navigating}. 
For any index $i \in \Z\cup \{\pm\infty\}$, the level $i$ set of this cover tree coincides with the cover set $C_i$ above, which can have nodes at different levels in Definition~\ref{dfn:cover_tree_compressed}. 
Any point $p \in C_i$ has a single parent in the set $C_{i+1}$, which satisfied conditions (\ref{dfn:cover_tree_compressed}b,c). 
\citet[Section~2]{beygelzimer2006cover} referred to this original tree as an implicit representation of a cover tree.
Such a tree in Figure \ref{fig:tripleexample} (left) contains infinitely many repetitions of every point $p\in R$ in long branches and will be called an \emph{implicit cover tree}.
\medskip

\noindent
Since an implicit cover tree is formally infinite, for practical implementations, the authors of \citet{beygelzimer2006cover} had to use another version that they named an explicit representation of a cover tree. 
We call this version an \emph{explicit cover tree}. 
Here is the full defining quote at the end of \citet[Section~2]{beygelzimer2006cover}: "The explicit representation of the tree coalesces all nodes in which the only child is a self-child". 
In an explicit cover tree, if a subpath of every node-to-root path consists of all identical nodes without other children, all these identical nodes collapse to a single node, see Figure \ref{fig:tripleexample} (middle). 
\medskip

\noindent
Since an explicit cover tree still contains repeated points, Definition~\ref{dfn:cover_tree_compressed} is well-motivated by the aim to include every point only once, which saves memory and simplifies all subsequent algorithms, see Fig.~\ref{fig:tripleexample} (right).

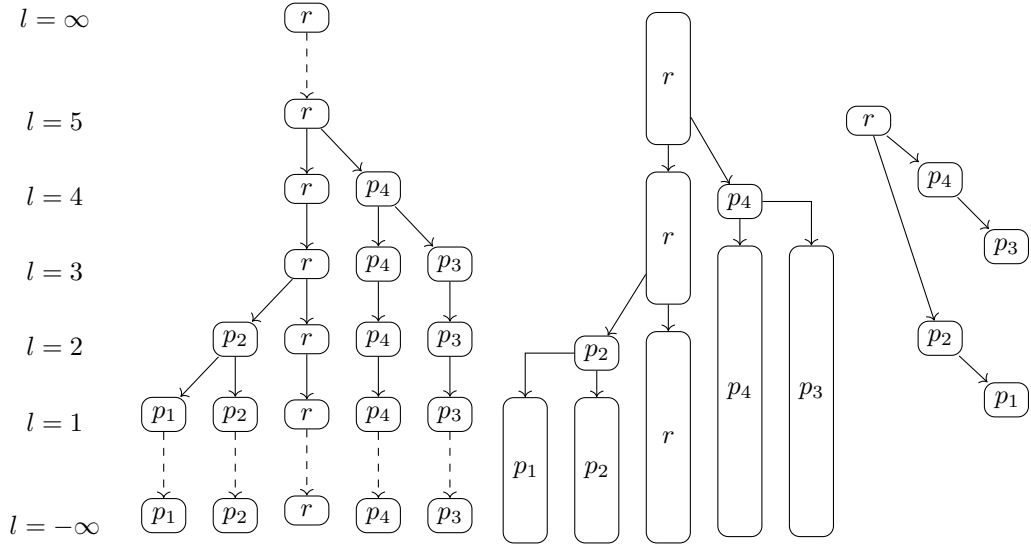
\begin{figure}
	\centering
	\begin{tikzpicture}[align=center, node distance = 1.0cm, scale = 0.45]

    \node (scaleinf) {$l = \infty$};
    \node [below = 12.5pt of scaleinf](scalemidinf) {};
     \node[below=25pt of scaleinf] (scale6) {$l = 5$};
    \node[below of =scale6] (scale5) {$l = 4$};
    \node[below of =scale5] (scale4) {$l = 3$};
	\node[below of =scale4] (scale3) {$l = 2$};
	\node[below of =scale3] (scale2) {$l = 1$};
	\node[below=25pt of scale2] (scale1) {$l = -\infty$};

 	\node [blockzm2,  right=70pt of scaleinf ] (node1x) {$r$};
 	\node [blockzm2, below=25pt of node1x ] (node1a) {$r$};
 	\node [blockzm2, below of = node1a ] (node1b) {$r$};
 	\node [blockzm2,  below of = node1b ] (node1c) {$r$};
 	\node [blockzm2,   below of = node1c ] (node1d) {$r$};
 	\node [blockzm2,  below of = node1d ] (node1e) {$r$};
 	\node [blockzm2,   below =25pt of node1e ] (node1f) {$r$};
 	
 	\draw[dashed, ->] (node1x) -> (node1a);
 	    \draw[->] (node1a) -> (node1b);
 	     \draw[->] (node1b) -> (node1c);
 	      \draw[->] (node1c) -> (node1d);
 	       \draw[->] (node1d) -> (node1e);
 	      \draw[dashed, ->] (node1e) -> (node1f);

 		\node [blockzm2, right=10pt of node1b ] (node2b) {$p_{4}$};
 	\node [blockzm2,  below of = node2b ] (node2c) {$p_{4}$};
 	\node [blockzm2,   below of = node2c ] (node2d) {$p_{4}$};
 	\node [blockzm2,  below of = node2d ] (node2e) {$p_{4}$};
 	\node [blockzm2,   below =25pt of node2e ] (node2f) {$p_{4}$};
 	
 	    \draw[->] (node1a) -> (node2b);
 	    
 	    \draw[->] (node2b) -> (node2c);
 	      \draw[->] (node2c) -> (node2d);
 	       \draw[->] (node2d) -> (node2e);
 	      \draw[dashed, ->] (node2e) -> (node2f);

 		\node [blockzm2, right = 10pt of node2c ] (node3c) {$p_{3}$};
 	\node [blockzm2,   below of = node3c ] (node3d) {$p_{3}$};
 	\node [blockzm2,  below of = node3d ] (node3e) {$p_{3}$};
 	\node [blockzm2,   below =25pt of node3e ] (node3f) {$p_{3}$};
 	
 	        \draw[->] (node2b) -> (node3c);
 	      \draw[->] (node3c) -> (node3d);
 	       \draw[->] (node3d) -> (node3e);
 	      \draw[dashed, ->] (node3e) -> (node3f);

 		\node [blockzm2,   left = 10 pt of  node1d ] (node4d) {$p_{2}$};
 	\node [blockzm2,  below of = node4d ] (node4e) {$p_{2}$};
 	\node [blockzm2,   below =25pt of node4e ] (node4f) {$p_{2}$};

            \draw[->] (node1c)   -> (node4d);
 	       \draw[->] (node4d) -> (node4e);
 	      \draw[dashed, ->] (node4e) -> (node4f);

 	\node [blockzm2,  left = 10 pt of node4e ] (node5e) {$p_{1}$};
 	\node [blockzm2,   below =25pt of node5e ] (node5f) {$p_{1}$};
 	
 	\draw[->] (node4d) -> (node5e);
 	 \draw[dashed, ->] (node5e) -> (node5f);
 	
 	\node[rectangle, draw, fill=white, 
    text width=1.0em, text centered, rounded corners, minimum height=50pt, minimum width = 1.0em, right = 220pt of scalemidinf] (ynode1a) {$r$};
    
    	\node[rectangle, draw, fill=white, 
    text width=1.0em, text centered, rounded corners, minimum height=50pt, minimum width = 1.0em, below = 10pt of ynode1a] (ynode1b) {$r$};
    
    	\node[rectangle, draw, fill=white, 
    text width=1.0em, text centered, rounded corners, minimum height=80pt, minimum width = 1.0em, below = 10pt of ynode1b] (ynode1c) {$r$};

    \node[blockzm2, above right = -18 pt and 10 pt of ynode1b] (ynode2a) {$p_4$};
    
    	\node[rectangle, draw, fill=white, 
    text width=1.0em, text centered, rounded corners, minimum height=110pt, minimum width = 1.0em, below = 10pt of ynode2a] (ynode2b) {$p_4$};
    
    \node[rectangle, draw, fill=white, 
    text width=1.0em, text centered, rounded corners, minimum height=110pt, minimum width = 1.0em, right = 10pt of ynode2b] (ynode3a) {$p_3$};

        \node[blockzm2, above left = -15 pt and 10 pt of ynode1c] (ynode4a) {$p_2$};
    
    	\node[rectangle, draw, fill=white, 
    text width=1.0em, text centered, rounded corners, minimum height=55pt, minimum width = 1.0em, below = 10pt of ynode4a] (ynode4b) {$p_2$};
    
    \node[rectangle, draw, fill=white, 
    text width=1.0em, text centered, rounded corners, minimum height=55pt, minimum width = 1.0em, left = 10pt of ynode4b] (ynode5a) {$p_1$};

    \draw[->] (ynode1a) -> (ynode1b);
    \draw[->] (ynode1b) -> (ynode1c);
    \draw[->] (ynode1a) -> (ynode2a);
    \draw[->] (ynode2a) -| (ynode3a);
    \draw[->] (ynode2a) -> (ynode2b);
    
      \draw[->] (ynode1b) -> (ynode4a);
     \draw[->] (ynode4a) -| (ynode5a);
     \draw[->] (ynode4a) -> (ynode4b);
    

 	\node[blockzm2, right = 285pt of scale6] (xnode1) {$r$};
	\node[blockzm2, below right = 10 pt and 10 pt of xnode1] (xnode2) {$p_4$};
 	\node[blockzm2, below right = 70 pt and 10 pt of xnode1] (xnode4) {$p_2$};
 	\node[blockzm2, below right = 10 pt and 8 pt of xnode4] (xnode5) {$p_1$};
 	\node[blockzm2, below right = 12 pt and 8 pt of xnode2] (xnode3) {$p_3$};
 	
 	\draw[->]	(xnode1) -> (xnode2);
 	\draw[->]	(xnode2) -> (xnode3);
 	\draw[->]	(xnode1) -> (xnode4);
 	\draw[->]	(xnode4) -> (xnode5);

\end{tikzpicture}
	\caption{A comparison of past cover trees and a new tree in Example \ref{exa:implicitexplicitexample}. \textbf{Left:} an implicit cover tree contains infinite repetitions. \textbf{Middle:} an explicit cover tree. \textbf{Right:} a compressed cover tree from Definition \ref{dfn:cover_tree_compressed} includes each point once.  }
	\label{fig:tripleexample}
\end{figure}

\begin{exa}[a short train line tree]
	\label{exa:implicitexplicitexample}
	Let $G$ be the unoriented metric graph consisting of two vertices $r,q$ connected by three different edges $e,h,g$ of lengths $|e| = 2^6$ , $|h| = 2^{3}$ , $|g| = 1$. Let $p_{4}$ be the middle point of the edge $e$. 
	Let $p_{3}$ be the middle point of the subedge $(p_4 , q)$. 
	Let $p_{2}$ be the middle point of the edge $h$.
	Let $p_{1}$ be the middle point of the subedge $(p_{2}, q)$. 
	Let $R = \{p_1, p_2,p_3,p_4,r\}$. 
	We construct a compressed cover tree $\T(R)$ by choosing the level $l(p_i) = i$ and by setting the root $r$ to be the parent of both $p_2$ and $p_4$, $p_4$ to be the parent of $p_{3}$, and $p_{2}$ to be the parent of $p_{1}$. 
	Then $\T(R)$ satisfies all the conditions of Definition \ref{dfn:cover_tree_compressed}, see a comparison of the three cover trees in Fig.~\ref{fig:tripleexample}.
	\bs
\end{exa}


\begin{figure}[h]
	\centering
	 \begin{tikzpicture}[scale = 1.0, node distance = 1cm]


\foreach \Point in {(-2,1), (-1,1), (-2,0), (-1,0), (-2,-1) , (-1,-1) , (0,-1), (0,0), (0,1)}{
    \node[circle,fill=black,inner sep=1.5pt] at \Point {};
}


\node [circle,fill=red,inner sep=1.5pt, red, label = {$p$}] at (2.5,0) {};
\draw (2.5,0) circle[radius = 1.9];

\node at (-1,-1.5) {$R \setminus \{p\}$};

\end{tikzpicture}
	\caption{Example~\ref{exa:outlierconstruction} describes a set $R$ with a big expansion constant $c(R)$. 
		Let $R\setminus \{p\}$ be a finite subset of a unit square lattice in $\R^2$, but a point $p$ is located far away from $R\setminus \{p\}$ at a distance larger than $\diam(R \setminus \{p\})$.
		Definition \ref{dfn:expansion_constant} implies that $c(R) = |R|$. }
	\label{fig:outlierconstruction}
\end{figure}
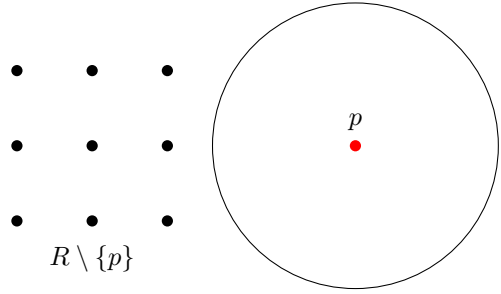

\noindent
Even a single outlier point can make the expansion constant big. 
Consider set $R = \{1,2,...,n-1,2n\}$ for some $n \in \Z_{+}$. 
Since $|\bar{B}(2n,n)| = 1$ and $|\bar{B}(2n,n)| = |R|$, we have $c(R) = |R|$. 
Example~\ref{exa:outlierconstruction} shows that expansion constant of a set $R$ can be as big as $|R|$. 

\begin{exa}[one outlier can make the expansion constant big]
	\label{exa:outlierconstruction}
	Let $R$ be a finite metric space and $p \in R$ satisfy $d(p,R \setminus \{t\}) > \diam(R \setminus \{p\})$. 
	Since $\bar{B}(p, 2d(p,R \setminus \{t\}) = R$ , $\bar{B}(p, d(p,R \setminus \{t\}) = \{p\}$, we get $c(R) = N$, see Fig.~\ref{fig:outlierconstruction}. 
	\bs
\end{exa}

\noindent
Example~\ref{exa:minimized_normal_expansion_constant} shows that the minimized expansion can be significantly smaller than the original expansion constant. 

\begin{exa}[minimized expansion constants]
	\label{exa:minimized_normal_expansion_constant}
	Let $(\R, d)$ be the Euclidean line. 
	For an integer $n>10$, consider the finite sets $R = \{2^{i} \mid i \in [1,n]\}$ and let $Q = \{i \mid i \in [1,2^n]\}$. 
	If $0<\epsilon < 10^{-9}$, then
	$\bar{B}(2^n, 2^{n-1} - \epsilon) = \{2^n\} $ and $\bar{B}(2^n, 2(2^{n-1} - \epsilon)) = R$, so $c(R) = n$.
	For any $q \in Q$ and any $t \in \R$, we have that $\bar{B}(q,t) = \mathbb{Z} \cap [q - t, q + t]$ and 
	$\bar{B}(q,2t) = \mathbb{Z} \cap [q - 2t, q + 2t]$, so $c(Q) \leq 4$.  
	Then $c_m(R) \leq c_m(Q) \leq c(Q) \leq 4$ by Lemma~\ref{lem:expansion_constant_property}.
	\bs 
\end{exa}

\noindent
Lemma~\ref{lem:compressed_cover_tree_descendant_bound} provides an upper bound for a distance between a node and its descendants.  

\begin{lem}[a distance bound on descendants]
	\label{lem:compressed_cover_tree_descendant_bound}
	Let $R$ be a finite subset of an ambient space $X$ with a metric $d$. 
	In a compressed cover tree $\T(R)$, let $q$ be any descendant of a node $p$. Let the node-to-root path $S$ of $q$ contain a node $u$ satisfying $u \in \Child(p) \setminus \{p\}$. Then $d(p,q) \leq 2^{l(u) + 2} \leq 2^{l(p) + 1}$. 
	\bs
\end{lem}
\begin{proof}
	Let $(w_0, ..., w_m)$ be a subpath of the node-to-root path for $w_0 = q$ , $w_{m-1} = u$, $w_m = p$. 
	Then $d(w_{i}, w_{i+1}) \leq 2^{l(w_i) + 1}$ for any $i$. 
	The first required inequality follows from the triangle inequality below: 
	$$    d(p,q) \leq \sum^{m-1}_{j = 0}d(w_j, w_{j+1})  \leq \sum^{m-1}_{j = 0}2^{l(w_j) + 1} \leq   \sum_{t = l_{\min}}^{l(u) + 1}2^{t}\leq  2^{l(u) + 2} $$
	Finally, $l(u) \leq l(p) - 1$ implies that $d(p,q) \leq 2^{l(p)+1}$.
\end{proof}

\noindent
Lemma~\ref{lem:packing} uses the idea of \citet[Lemma~1]{curtin2015plug} to show that if $S$ is a $\delta$-sparse subset of a metric space $X$, then $S$ has at most $(c_m(S))^\mu$ points in the ball $\bar{B}(p,r)$, where $c_m(S)$ is the minimized expansion constant of $S$, while $\mu$ depends on $\delta,r$. 

\begin{figure}
	\centering
	\input   \begin{tikzpicture}[scale = 1.0]




\node [circle,fill=red, red, inner sep=1pt, label = {$p$}] at (0,0) {};
\draw (0,0) circle[radius = 2];

\node [circle,fill=black, black, inner sep=1pt] at (-1,-1) {};
\draw[line width = .5pt, dash pattern=on 1pt off 2pt] (-1,-1) circle[ radius = 0.5];

\node [circle,fill=black, black, inner sep=1pt] at (1,0.75) {};
\draw[line width = .5pt, dash pattern=on 1pt off 2pt] (1,0.75) circle[ radius = 0.5];

 \draw[dashed] (1,0.75) -- (1.5,0.75);
 \node at (1.25,0.6) {\tiny $\delta / 2$}; 

\node [circle,fill=black, black, inner sep=1pt] at (-0.1,-0.4) {};
\draw[line width = .5pt, dash pattern=on 1pt off 2pt] (-0.1,-0.4) circle[ radius = 0.5];

\node [circle,fill=black, black, inner sep=1pt] at (0.75,-1.25) {};
\draw[line width = .5pt, dash pattern=on 1pt off 2pt] (0.75,-1.25) circle[ radius = 0.5];

\node [circle,fill=black, black, inner sep=1pt] at (-0.2,1.2) {};
\draw[line width = .5pt, dash pattern=on 1pt off 2pt] (-0.2,1.2) circle[ radius = 0.5];

\node [circle,fill=black, black, inner sep=1pt] at (-1.1,0.65) {};
\draw[line width = .5pt, dash pattern=on 1pt off 2pt] (-1.1,0.65) circle[ radius = 0.5];

 \draw[dashed] (0,0) -- node[below] {$t$} node[above] {} ++ (2,0);


\node at (-1.5,-0.25) {$S$};

\end{tikzpicture} 
	\caption{This volume argument proves Lemma~\ref{lem:packing}. By using an expansion constant, we can find an upper bound for the number of smaller balls of radius $\frac{\delta}{2}$ that can fit inside a larger $\bar{B}(p, t)$. }
	\label{fig:packingLemma}
\end{figure}
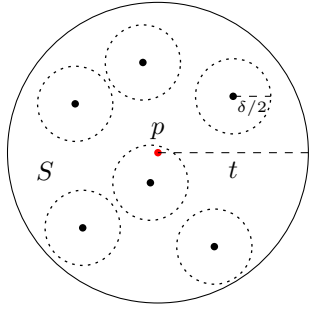

\lempacking*
\begin{proof}
	Assume that $d(p,q) > t$ for any point $q \in S$. 
	Then $\bar{B}(p, t) \cap S = \emptyset$ and the lemma holds trivially. Otherwise $\bar{B}(p, t) \cap S$ is non-empty. 
	By Definition~\ref{dfn:expansion_constant} of a minimized expansion constant, for any small enough $\epsilon > 0$, we can always find $\xi \leq \frac{2t + \frac{\delta}{2}}{2^{\mu}}$ and a set $A$ such that $S \subseteq A \subseteq X$  for which:
	\begin{ceqn}
		\begin{equation}
			\label{eqa:dfn_of_exp_constant}
			|B(q,2s) \cap A| \leq (c_m(S) + \epsilon) \cdot | B(q,s) \cap A|
		\end{equation}
	\end{ceqn}
	for any $q \in A$ and $s > \xi$. Note that for any $u \in \bar{B}(p,t) \cap S$ we have $\bar{B}(u, \frac{\delta}{2}) \subseteq  \bar{B}(p, t + \frac{\delta}{2})$. 
	Therefore, for any point $q \in \bar{B}(p,t) \cap S$, we get
	$$\bigcup_{u \in \bar{B}(p, t) \cap S}\bar{B}(u, \frac{\delta}{2}) \subseteq  \bar{B}(p,t + \frac{\delta}{2}) \subseteq \bar{B}(q, 2t +  \frac{\delta}{2})$$
	Since all the points of $S$ were separated by $\delta$, we have
	\begin{equation*}
		\label{eqa:packing_zero}
		| \bar{B}(p, t) \cap S| \cdot \min_{u \in \bar{B}(p, t) \cap S}| \bar{B}(u, \frac{\delta}{2}) \cap A|  \leq \sum_{u \in \bar{B}(p, t) \cap S} | \bar{B}(u, \frac{\delta}{2}) \cap A |\leq | \bar{B}(q, 2t + \frac{\delta}{2}) \cap A |
	\end{equation*}
	In particular, by setting $q = \mathrm{argmin}_{a \in S \cap \bar{B}(p,t)}| \bar{B}(a, \frac{\delta}{2})|$, we get:
	\begin{ceqn}
		\begin{equation}
			\label{eqa:packing_one}
			| \bar{B}(p, t) \cap S| \cdot |  \bar{B}(q, \frac{\delta}{2}) \cap A| \leq | \bar{B}(q, 2t + \frac{\delta}{2}) \cap A |
		\end{equation}
	\end{ceqn}
	Inequality~(\ref{eqa:dfn_of_exp_constant}) applied $\mu$ times 
	for the radii $s_i = \dfrac{2t + \frac{\delta}{2}}{2^{i}} $, $i = 1,...,\mu$, implies that:
	\begin{ceqn}
		\begin{equation} 
			\label{eqa:packing_two}
			|\bar{B}(q,2t + \frac{\delta}{2}) \cap A| \leq (c_m(S) + \epsilon)^{\mu}|\bar{B}(q, \dfrac{2t + \frac{\delta}{2}}{2^{ \mu}}) \cap A |  \leq  (c_m(S) + \epsilon)^{ \mu}|\bar{B}(q, \frac{\delta}{2}) \cap A|. 
		\end{equation}
	\end{ceqn}
	By combining inequalities (\ref{eqa:packing_one}) and (\ref{eqa:packing_two}), we get
	$$| \bar{B}(p,t) \cap S  |\leq \dfrac{|\bar{B}(q, 2t + \frac{\delta}{2})  \cap A |}{|\bar{B}(q, \frac{\delta}{2})  \cap A|} \leq (c_m(S)+\epsilon)^{\mu}.$$
	The required inequality is obtained by letting $\epsilon \rightarrow 0$.\end{proof}

\noindent
\citet[Section~1.1]{krauthgamer2004navigating} defined dim($X$) of a space $(X,d)$ as the minimum number $m$ such that every set $U \subseteq X$ can be covered by $2^{m}$ sets whose diameter is a half of the diameter of $U$. 
If $U$ is finite, an easy application of Lemma \ref{lem:packing} for $\delta = \frac{r}{2}$ shows that
$\text{dim}(X) \leq \sup_{A \subseteq X}(c_m(A))^4 \leq \sup_{A \subseteq X}\inf_{A \subseteq B \subseteq X}(c(B))^4,$
where $A$ and $B$ are finite subsets of $X$. 
\medskip

\noindent
Let $T(R)$ be an implicit cover tree of \citet{beygelzimer2006cover} on a finite set $R$. 
\citet[Lemma~4.1]{beygelzimer2006cover} showed that the number of children of any node $p \in T(R)$ has the upper bound $(c(R))^4$. 
Lemma~\ref{lem:compressed_cover_tree_width_bound} generalizes \citet[Lemma~4.1]{beygelzimer2006cover} for a compressed cover tree.

\lemwidthbound*
\begin{proof}
    \hypertarget{proof:lem:compressed_cover_tree_width_bound}{\empty}
	By the covering condition of $\T(R)$, any child $q$ of $p$ located on the level $i$ has $d(q,p) \leq 2^{i+1}$. 
	Then the number of children of the node $p$ at level $i$ at most $|\bar{B}(p,2^{i+1})|$. 
	The separation condition in Definition~\ref{dfn:cover_tree_compressed} implies that the set $C_i$ is a $2^{i}$-sparse subset of $X$. 
	We apply Lemma \ref{lem:packing} for $t = 2^{i+1}$ and $\delta = 2^{i}$. 
	Since $4 \cdot \frac{t}{\delta} + 1 \leq 4 \cdot 2 + 1 \leq 2^4$, we get $|\bar{B}(q,2^{i+1}) \cap C_i|  \leq (c_m(C_{i}))^4$. Lemma \ref{lem:expansion_constant_property} implies that $(c_m(C_{i}))^4  \leq (c_m(R))^4 $, so the upper bound is proved.
\end{proof}

%

\lemgrowthbound*
\begin{proof}
\hypertarget{proof:lem:growth_bound}{\empty}
	Since $\bar{B}(q,r) \subset \bar{B}(p,3r + r)$, we have
	$|\bar{B}(q,r)| \leq |\bar{B}(q,4r)| \leq c(A)^2 \cdot |\bar{B}(p,r)|$.
	And since $\bar{B}(p,r)$ and $\bar{B}(q,r)$ are disjoint and are subsets of $\bar{B}(q,4r)$, we have
	$$|\bar{B}(q, 4r)| \geq |\bar{B}(q,r)| + |\bar{B}(p,r)| \geq |\bar{B}(q,r)| + \frac{|\bar{B}(q,r)|}{c(A)^2} \geq (1 + \frac{1}{c(A)^2}) \cdot |\bar{B}(q,r)|,$$
	which proves the claim.
\end{proof}
\lemgrowthboundextension*
\begin{proof}
	\hypertarget{proof:lem:growth_bound_extension}{\empty} Let us prove this by induction. In basecase $n = 1 $ define $r = \frac{d(q,p_{m})}{3}$. 
	Now by Lemma \ref{lem:growth_bound} we have 
	$$|\bar{B}(q, \frac{4}{3}d(q,p_1)) | = |\bar{B}(q, 4r)|  \geq  (1+\frac{1}{c(A)^2}) \cdot |\bar{B}(q, r)| = |\bar{B}(q, \frac{1}{3}d(q,p_1)) |.$$

	\medskip
	
	Assume now that the claim holds for some $i = m$ and let $p_1, ..., p_{m+1}$ be a sequence satisfying the condition of Lemma \ref{lem:growth_bound_extension}. 
	By induction assumption we have $ |\bar{B}(q, \frac{4}{3}d(q,p_m)) | \geq (1+\frac{1}{c(A)^2})^m \cdot |\bar{B}(q, \frac{1}{3}d(q,p_1)) |$. Let us pick $r = \frac{d(q,p_{m+1})}{3}$. Then we have:
	\begin{ceqn}
		\begin{align*}
			|\bar{B}(q,\frac{4}{3} \cdot d(q,p_{m+1}))| & \geq (1+\frac{1}{c(A)^2}) \cdot |\bar{B}(q,\frac{1}{3} \cdot d(q,p_{m+1}))| \\
			&\geq (1+\frac{1}{c(A)^2}) \cdot |\bar{B}(q,\frac{4}{3} \cdot d(q,p_{m}))| \\
			&\geq (1+\frac{1}{c(A)^2}) \cdot (1+\frac{1}{c(A)^2})^m \cdot |\bar{B}(q,\frac{1}{3} \cdot d(q,p_{1}))| \\
			&\geq (1+\frac{1}{c(A)^2})^{m+1} \cdot |\bar{B}(q,\frac{1}{3} \cdot d(q,p_{1}))|
		\end{align*}
	\end{ceqn}
	which proves the claim. 
\end{proof}

\begin{lem}
	\label{lem:hard_function_bound}
	For every $x \in \R$ satisfying $x \geq 2$, the following inequality holds:
	$$x^2 \geq \frac{1}{\log_2(1 + \frac{1}{x^2})} .$$
\end{lem}
\begin{proof}
Let $\ln$ be natural logarithm. Note first that for any $u > 0$ we have:
$$\frac{u}{u+1} = \int^u_0 \frac{dt}{u+1} \leq \int^u_0 \frac{dt}{t+1} = \ln(u+1).$$
By setting $u = \frac{1}{x^2} > 0$ we get:
$$\log_2(1 + \frac{1}{x^2}) = \frac{\ln(\frac{1}{x^2})}{\ln(2)} \geq \frac{1}{\ln(2)} \cdot \frac{\frac{1}{x^2}}{\frac{1}{x^2}+1} 
= \frac{1}{\ln(2)} \cdot \frac{1}{x^2+1}. $$
Let us now show that for $x \geq 2$ we have: $\frac{1}{\ln(2)} \cdot \frac{1}{x^2+1} \geq \frac{1}{x^2} $.
Note first that $4 \geq \frac{\ln(2)}{1 - \ln(2)}$. Since $x \geq 2$ we have $x^2 \geq \frac{\ln(2)}{1 - \ln(2)}$.
Therefore $x^2 - \ln(2) \cdot x^2 \geq \ln(2)$ and $x^2 \geq \ln(2) \cdot (1 + x^2)$. 
It follows that $\frac{1}{\ln(2)}\frac{1}{1+x^2} \geq \frac{1}{x^2}$, which proves the claim.




\end{proof}

\dfndepth*


\noindent
The new concept of the height $|H(\T)|$ will justify a near-linear parameterized worst-case complexity in Theorem~\ref{thm:knn_KR_time}.
By condition~(\ref{dfn:cover_tree_compressed}b), the height $|H(\T(R))|$ counts the number of levels $i$ whose cover sets $C_i$ include new points that were absent on higher levels.
Then $|H(\T)|\leq|R|$ as any point can be alone at its own level.


\begin{lem}
	\label{lem:depth_bound}
	Any finite set $R$ has the bound $|H(\T(R))|\leq 1+\log_2(\Delta(R))$.
	\bs
\end{lem}
\begin{proof}
	\hypertarget{proof:lem:depth_bound}{\empty} We have $|H(\T(R))|\leq l_{\max} - l_{\min}+1$ by Definition~\ref{dfn:depth}. 
	We estimate $l_{\max} - l_{\min}$ as follows.
	\medskip
	
	\noindent
	Let $p \in R$ be a point such that $\rad(R) = \max_{q \in R}d(p,q)$. 
	Then $R$ is covered by the closed ball $\bar B(p; \rad(R))$.
	Hence the cover set $C_i$ at the level $i=\log_2(\rad(R))$ consists of a single point $p$. 
	The separation condition in Definition~\ref{dfn:cover_tree_compressed} implies that 
	$l_{\max}\leq \log_2(d_{\max}(R))$.
	Since any distinct points $p,q \in R$ have $d(p,q)\geq d_{\min}(R)$, the covering condition implies that no new points can enter the cover set $C_i$ at the level $i=[\log_2(d_{\min}(R))]$, so $l_{\min}\geq\log_2(d_{\min}(R))$.
	Then
	$|H(\T(R))| \leq 1+l_{\max} - l_{\min} \leq 
	1+\log_2(\frac{\rad(R)}{d_{\min}(R)})$.
\end{proof}

\noindent
If the aspect ratio $\Delta(R) = O(\text{Poly}(|R|))$ polynomially depends on the size $|R|$, then $|H(\T(R))| \leq O(\log(|R|))$.
Lemma \ref{lem:growth_bound} corresponds \citet[Lemma~4.2]{beygelzimer2006cover} with slightly modified notation.

\section{The minimized expansion constant in a normed vector space on $\R$}
\label{sec:minimized_exp_constant}

In this section, main Theorem \ref{thm:normed_space_exp_constant} will show that, for any finite subset $R$ of a normed vector space $(\R^n, \Vert \cdot \Vert )$, the minimized expansion constant from Definition \ref{dfn:expansion_constant} has the upper bound $2^n$, so 
$$c_m(R)  = \inf\limits_{0 < \xi}\inf\limits_{R\subseteq A\subseteq \R^{n}}\sup\limits_{p \in A,t > \xi}\dfrac{|\bar{B}(p,2t) \cap A|}{|\bar{B}(p,t) \cap A|} \leq 2^{n}.$$ 
The proof of Theorem \ref{thm:normed_space_exp_constant} is based on the volume argument.
We briefly explain the idea before giving the proof later.
For this purpose, we recall the definition of the Lebesgue measure in Definition \ref{dfn:lebesgue_measure}.

\medskip
\noindent
In Definition \ref{dfn:voronoi_region} we define concepts of grid $G(\delta) = \delta \cdot \Z^{n}$ and cubic regions 
$\bar{V}(p,\delta) = p + [-\frac{\delta}{2}, \frac{\delta}{2}]^n$. For every $\delta > 0$ we define grid extension $U(\delta)$ of $R$ 
as set $U(\delta) = (G(\delta) \setminus f(R)) \cup R$, where $f:R \rightarrow G(\delta)$ is used to replace points of $R$ with their nearest neighbors in grid $G(\delta)$. 

\medskip
\noindent
Note that $\xi$ in the definition of $c_m(R)$ acts as a low bound for radius $t > \xi$.
Let $\gamma > 0$ be a constant, that depends on dimension $n$ and norm $\Vert \cdot \Vert$.
In Lemma \ref{lem:u_bounds} it is shown that if $\delta$ satisfies
$0 < \delta <  \frac{\xi}{\gamma}$, then for any $p \in U(\delta)$ and $t > \xi$ we can bound $|\bar{B}(p,t) \cap U(\delta)|$ as follows:
$$\frac{\mu( \bar{B}(p,t - \delta \cdot \gamma))}{\delta^n} \leq |\bar{B}(p,t) \cap U(\delta)| \leq \frac{\mu( \bar{B}(p,t + \delta \cdot \gamma))}{\delta^n}. $$
 Therefore 
$$\frac{|\bar{B}(p,2t) \cap U(\delta)|}{|\bar{B}(p,t) \cap U(\delta)|} \leq \frac{\mu( \bar{B}(p,2t + \delta \cdot \gamma))}{\mu( \bar{B}(p,t - \delta \cdot \gamma))}.$$
Now since this inequality is satisfied for any $\delta > 0$, we can choose arbitrary dense grid extension $U(\delta)$.
It will be shown that when $\delta \rightarrow 0$ , then $$\frac{\mu( \bar{B}(p,2t + \delta \cdot \gamma))}{\mu( \bar{B}(p,t - \delta \cdot \gamma))} \rightarrow 2^{n}.$$
Then we can conclude that $c_m(R) \leq 2^{n} $.

\begin{dfn}[Normed vector space $(\R^n,\Vert\cdot\Vert)$ on real numbers $\R$ {\citet{Rudin1990-tp}}]
	\label{dfn:normed_rn_space}
	Consider $\R^{n}$ as a vector space. A \emph{norm} is a function $\Vert \cdot \Vert:\R^{n} \rightarrow \R$ satisfying the properties below.
	\begin{enumerate}
		\item Non-negativity : $\Vert x \Vert \geq 0$.
		\item The norm is positive on nonzero vectors, so $\Vert x \Vert = 0$ implies that $x = 0$.
		\item For every vector $x \in \R^{n}$, and every scalar $a \in \R$: $\Vert a \cdot x \Vert = |a| \cdot \Vert x \Vert$.
		\item The triangle inequality holds for every $x \in \R^{n}$ and $y \in \R^{n}$, $\Vert x + y \Vert \leq \Vert x \Vert + \Vert y \Vert$.
	\end{enumerate}
	A norm induces a metric by the formula $d(x,y) = \Vert x - y \Vert$. For every $i \in \{1,...,n\}$ let $e_i$ be a unit vector of $\R^n$ i.e. $e_i(i) = 1$ and $e_i(j) = 0$ for all $j \in \{1,...,n\} \setminus \{i\}$. Define $\rho = \max_{i \in \{1,...,n\}}\Vert e_i \Vert$. \bs 
\end{dfn}

\begin{dfn}[Lebesgue outer measure, {\citet[Section~2.A]{Jones2000-db}}]
	\label{dfn:lebesgue_measure}
	Let $\R^{n}$ be an $n$-dimensional space. Define $n$-dimensional interval as
	$$I = \{x \in \R^{n} \mid a_i \leq x_i \leq b_i, i = 1,...,n\} = [a_1,b_1] \times ... \times [a_n,b_n],$$
	with sides parallel to the coordinate axes. Define Lebesgue outer measure $\mu^{*}:\{A \mid A \subseteq \R^{n}\} \rightarrow [0,\infty) \cup 
	\{\infty\}$ 
	of interval $I$ as 
	$\mu^{*}(I) = (b_1-a_1) \cdot ... \cdot (b_n-a_n)$. The Lebesgue $\mu$ measure of a set $A \subseteq \R^n$ is defined as:
	$$\mu^{*}(A) = \inf_{A} \{\sum^{\infty}_{i = 0}\mu^{*}(I_i) \mid A \subseteq \cup^{\infty}_{i = 0}I_i\}, $$
	where the infinium is taken over all covering of $A$ by countably many intervals $I_i, i = 1,2...$.
	If set $E \subseteq \R^{n}$ satisfies $\mu^{*}(A) = \mu^{*}(A \cap E) + \mu^{*}(A \setminus E)$
	for all $A \subseteq \R^{n}$, then $E$ is lebesgue-measurable and we set $\mu(E) = \mu^{*}(E)$. 
	\bs
\end{dfn}
\noindent
It should be noted that all open sets and closed sets , as well as compact sets are Lebesgue-measurable. 

\begin{lem}[Basic properties of Lebesgue measure, {\citet[Section~2.A]{Jones2000-db}}]
	\label{lem:basic_properties_lebesgue_measure}
	A Lebesgue outer measure $\mu^{*}$ of Definition \ref{dfn:lebesgue_measure} satisfies the following conditions:
	\begin{enumerate}
		\item $\mu^{*}(\emptyset) = 0,$
		\item $\mu^{*}(A) \leq \mu^{*}(B)$ whenever $A \subseteq B \subseteq \R^{n}$ and
		\item $\mu^{*}( \cup^{\infty}_{i = 1} \mu^{*}(A_i)) \leq \sum^{\infty}_{i=1}\mu^{*}(A_i) $.
	\end{enumerate}
\end{lem}

\begin{lem}[Lebesgue measure scale property, {\citet[Section~3.B]{Jones2000-db}}]
	\label{lem:scale_property_lebesgue_measure}
	Let $\mu$ be Lebesgue measure on a normed vector space $(\R^{n},\Vert\cdot\Vert)$.
Then, for any $p \in \R^{n}$ and $t \in \R_{+}$, we have:
	$\mu(\bar{B}(p,t)) =  t^{n} \cdot \mu (\bar{B}(p,1) )$.
\end{lem}

\begin{dfn}[Grid and Cubic region]
	\label{dfn:voronoi_region}
	Let $\R^{n}$ be a normed vector space
	and let $\delta \in \R$. 
	Define $\delta$-grid on $\R^{n}$ as the set
	$G(\delta) = \{ t \cdot \delta  \mid t \in \Z^{n} \}$. For any $p \in \R^{n}$ define its open cubic region $V(p,\delta) \subseteq \R^{n}$ as the set $\{ p + u \mid u \in (-\frac{\delta}{2},\frac{\delta}{2})^n \}$ and closed cubic region $\bar{V}(p,\delta) \subseteq \R^{n}$ as $\{ p + u \mid u \in [-\frac{\delta}{2},\frac{\delta}{2}]^n \}$.
\end{dfn}
\noindent
Note that the union $\cup_{p \in G(\delta)}V(p, \delta)$ covers whole set $\R^{n}$. 

\begin{lem}[Cubic regions are separate]
	\label{lem:open_voronoi_region_separation}
	In conditions of Definition \ref{dfn:voronoi_region} let $p, q \in G(\delta)$ be distinct points. 
	Then their cubic regions are separate i.e. $V(p,\delta) \cap V(q, \delta) = \emptyset$.
\end{lem}
\begin{proof}
	Assume contrary that there exists $a \in V(p,\delta) \cap V(q, \delta)$, then $|a(i) - p(i)| < \frac{\delta}{2}$ and $|a(i) - q(i)| < \frac{\delta}{2}$
	for all $i \in \{1,...,n\}$. Since $p \neq q$, there exists index $j$, such that $p(j) \neq q(j)$. By definition of grid $G(\delta)$ it follows that
	$|p(j) - q(j)| \geq \delta$. However, by triangle inequality we have 
	$$|p(j) - q(j)| \leq |p(j) - a(j) | + |q(j) - a(j) |< \frac{\delta}{2} + \frac{\delta}{2} = \delta,$$
	which is a contradiction. Therefore $V(p,\delta) \cap V(q, \delta) = \emptyset$. 
	
\end{proof}

\begin{lem}
	\label{lem:diam_voronoi_region}
	Let $\R^n$ be a normed vector space of Definition \ref{dfn:normed_rn_space}. 
	Let $\delta \in \R$ and let $G(\delta)$ be a grid of Definition \ref{dfn:voronoi_region}. 
	Let $p \in G(\delta)$ and let $q \in V(p,\delta)$, then $d(p,q) \leq \frac{ n \cdot \delta \cdot \rho}{2}$
\end{lem}
\begin{proof}
	Let $\gamma \in \R$ be such that $q = p + \gamma$. By condition (3) of Definition \ref{dfn:normed_rn_space} we have
	$\Vert \gamma(i) \Vert \leq \Vert e_i \Vert \cdot \frac{\delta }{2} \leq \frac{\delta \cdot \rho}{2}$ for all $i \in \{1,...,n\}$.
	By the definition of norm and triangle inequality we have:
	$$d(p,q) = \Vert p -q \Vert = \Vert\gamma\Vert \leq \sum^n_{i = 1} \Vert \gamma(i) \Vert \leq \frac{n \cdot \delta \cdot \rho}{2}.$$
\end{proof}

\noindent
Any normed vector space $(\R^{n},\Vert\cdot\Vert)$ has the metric $d(x,y) = \Vert x - y \Vert$.

\begin{lem}[Existence of covering grid ]
	\label{lem:covering_grid_existence}
	Let $R$ be a finite subset of a normed vector space $(\R^{n}, \Vert\cdot\Vert)$. 
	Then for any $\delta \in \R$ having $\delta < \frac{d_{\min}(R)}{n \cdot \rho}$, then any map
	$f: R \rightarrow G(\delta)$ which maps $p \in R$ to one of its nearest neighbor in $G(\delta)$ is a well-defined injection.
\end{lem}
\begin{proof}
	Let $f$ be an arbitrary map $f: R \rightarrow G(\delta)$ mapping point $p \in R$ to one of its nearest neighbors. 
	This map is clearly well-defined. Let us now show that it is injective. Assume that $x,y \in R$ are such that
	$f(x) = f(y)$. Then by triangle inequality and Lemma \ref{lem:diam_voronoi_region} we have:
	$$d(x,y) \leq d(x,p) + d(p,y) \leq n \cdot \delta \cdot \rho < d_{\min}(R),$$
	it follows that $x = y$. Therefore map $f$ is injective.

	
\end{proof}

\begin{lem}
	\label{lem:covering_grid_inequalities}
	Let $R$ be a finite subset of normed space $(\R^{n},d)$, let $\rho$ be as in Definition \ref{dfn:normed_rn_space} and let $\delta \in \R$ be such that $0 < \delta < \frac{d_{\min}(R)}{n \cdot \rho}$.
	Let $ p \in R$ be arbitrary point and let $t > \frac{n \cdot \delta \cdot \rho}{2}$ be a real number.
	Then there exists a set $U(\delta)$ satisfying $R \subseteq U(\delta)$ and
	$$|G(\delta) \cap \bar{B}(p , t - \frac{n \cdot \delta \cdot \rho}{2}) | \leq |U(\delta) \cap \bar{B}(p, t)| 
	\leq |G(\delta) \cap \bar{B}(p , t + \frac{n \cdot \delta \cdot \rho}{2}) |$$
\end{lem}
\begin{proof}
	Let $f: R \rightarrow G(\delta)$ be an injection from Lemma \ref{lem:covering_grid_existence}, which maps every $q \in R$ to one of its nearest neighbors in $G(\delta)$.
	Define $U(\delta) = (G(\delta) \setminus f(R)) \cup R$.
	Let us first show that $$g:U(\delta) \cap \bar{B}(p, t) \rightarrow G(\delta) \cap \bar{B}(p , t + \frac{n \cdot \delta \cdot \rho}{2}),$$
	defined by $g(q) = f(q)$, if $q \in R$ and $g(q) = q$, if $q \notin R$,  is an injection. Let us show first that the map $g$ is well-defined, if $q \notin R$, the claim is trivial. Let $q \notin R$, then by triangle inequality 
	$d(g(q),p) \leq  d(q,p) + d(g(q), q) \leq t + \frac{n \cdot \delta \cdot \rho}{2}$. Assume now that $g(a) = g(b)$ for some $a, b \in U(\delta) \cap \bar{B}(p,t)$. Let us first show that either $a,b$ both belong to $R$ or neither of $a,b$ belong to $R$. Assume contrary that $a \in R$ and $b \notin R$.
	Since $b \notin R$ we have $b \in G(\delta) \setminus f(R)$. On the other hand since $h(a) = h(b)$ we have $f(a) = b$, therefore $b \in f(R)$, which is a contradiction. If both, $a$ and $b$ belong to $R$ we have $a = b$, similarly if $a ,b \notin R$ we have $a = b$ by injectivity of function $f$.
	Therefore we have now shown that $g$ is well-defined injection. 
	It follows $|U(\delta) \cap \bar{B}(p, t)| \leq  |G(\delta) \cap \bar{B}(p , t + \frac{n \cdot \delta \cdot \rho}{2})|$.
	Let us now show that map 
	$$ h : G(\delta) \cap \bar{B}(p , t - \frac{n \cdot \delta \cdot \rho}{2}) \rightarrow U(\delta) \cap \bar{B}(p,t),$$
	defined by $h(q) = f^{-1}(q)$, if $q \in f(R)$ and $h(q) = q$, if $q \notin f(R)$ is well-defined injection. 
	Let us first show that the map is well-defined. Let $q \in G(\delta) \cap \bar{B}(p , t - \frac{n \cdot \delta \cdot \rho}{2})$, if $q \notin f(R)$ the claim is satisfied trivially. If $q \in f(R)$, then by definition $d(h(q), q) \leq \frac{n \cdot \delta \cdot \rho}{2}$. By using triangle inequality we obtain: $$d(p, h(q)) \leq d(p,q) + d(q,h(q)) \leq t - \frac{n \cdot \delta \cdot \rho}{2} + \frac{n \cdot \delta \cdot \rho}{2} \leq t. $$
	Therefore $h(q) \in U(\delta) \cap \bar{B}(p,t)$. 
	
	\medskip
	
	\noindent
	Let us now show that $h$ is an injection. Let $a,b \in G(\delta) \cap \bar{B}(p , t - \frac{n \cdot \delta}{2}) $ assume that $h(a) = h(b)$, let us show that $a = b$. Let us first show that either $a,b \in f(R)$ or neither of $a,b$ belong to $f(R)$.
	Assume contrary that $a \in f(R)$ and $b \notin f(R)$. Then $h(a) = h(b)$ 
	implies that $f^{-1}(a) = b$. Since $f^{-1}(a) \in R$, we have $b \in R$. Since $b \in G(\delta)$, it follows that $f(b) = b$, which is a contradiction since $b \notin f(R)$. Assume now that $a, b \in f(R)$, then the claim follows by noting that $f^{-1}$ is injection. If $a, b \notin f(R)$ , then claim follows by noting that $h(a) = a$ and $h(b) = b$. Therefore map $h$ is injection. It follows that 
	$|G(\delta) \cap \bar{B}(p , t - \frac{n \cdot \delta \cdot \rho}{2})| \leq |U(\delta) \cap \bar{B}(p,t)|.$

	
\end{proof}

\begin{lem}
	\label{lem:covering_grid_inclusions}
	Let $R$ be a finite subset of normed vector space $\R^{n}$ and let $\delta \in \R$.
	For any $p \in G(\delta)$ recall that $V(p,\delta)$ is Minkowski sum $p + (-\frac{\delta}{2},\frac{\delta}{2})^n$.
	Define $$\bar{W}(p, t, \delta) = \bigcup_{q \in \bar{B}(p,t) \cap G(\delta)} \bar{V}(q,\delta).$$
	Then for any $p \in R$ and $t > \frac{n \cdot \delta \cdot \rho}{2}$ we have:
	$$\bar{B}(p, t - \frac{n \cdot \delta \cdot \rho}{2}) \subseteq \bar{W}(p,t,\delta) \subseteq \bar{B}(p, t + \frac{n \cdot \delta \cdot \rho}{2}).$$
\end{lem}
\begin{proof}
	Let $u \in \bar{B}(p, t-\frac{n \cdot \delta \cdot \rho}{2})$ be an arbitrary point. Since $\{\bar{V}(q,\delta) \mid q \in G(\delta)\}$ covers $R$ it follows that there exists
	$a \in G(\delta)$ such that $u \in \bar{V}(a, \delta)$. By triangle inequality we obtain:
	$$d(a,p) \leq d(a,u) + d(u,p) \leq \frac{n \cdot \delta \cdot \rho}{n} + t - \frac{n \cdot \delta \cdot \rho}{n} \leq t.  $$
	It follows that $\bar{V}(w,\delta) \in \bar{W}(p,t)$, therefore $p \in \bar{W}(p,t,\delta)$. We have $\bar{B}(p, t - \frac{n \cdot \delta \cdot \rho}{2}) \subseteq \bar{W}(p,t,\delta).$
	Let $u \in \bar{W}(p,t,\delta)$, then there exists $a \in G(\delta)$ such that $u \in \bar{V}(a, \delta)$ and $\bar{V}(a, \delta) \in \bar{W}(p,t)$. By triangle inequality we obtain: 
	$$d(u,p) \leq d(u,a) + d(a,p) \leq \frac{n \cdot \delta \cdot \rho}{n} + t . $$
	It follows that $u \in \bar{B}(p, t + \frac{n \cdot \delta \cdot \rho}{2})$. Therefore 
	$\bar{W}(p,t,\delta) \subseteq \bar{B}(p, t + \frac{n \cdot \delta \cdot \rho}{2}) $ which proves the claim.
	
	
\end{proof}

\begin{lem}[Countable additivity, {\citet[Section~2.A]{Jones2000-db}}]
	\label{lem:measure_additivty}
	Assume that $A_i \subseteq \R^{n}$, $i = 1,2,...,$ are pairwise disjoint i.e. $A_i \cap A_j = \emptyset$ for all $i \neq j$ Lebesgue-measurable sets. Then $$\mu(\bigcup^{\infty}_{i = 0}A_i) = \sum^{\infty}_{i = 1}\mu(A_i) .$$
\end{lem}
\begin{lem}[Lebesgue measure of $\bar{W}(p,t,\delta)$]
\label{lem:lebesgue_measure_w}
	In notations of Lemma \ref{lem:covering_grid_inclusions}
	let $\mu$ be a Lebesgue measure on $R$ from Definition \ref{dfn:lebesgue_measure}, then $\mu(\bar{W}(p,t,\delta)) = \delta^n \cdot |\bar{B}(p,t) \cap G(\delta)|$.
\end{lem}
\begin{proof}
	Define $W(p, t, \delta) = \bigcup\limits_{q \in \bar{B}(p,t) \cap G(\delta)} V(q,\delta).$
	Recall that for all $p \in \R^{n}$ and $\delta > 0$ set $\bar{V}(p,t)$ is a closed $n-$dimensional interval and ${V}(p,t)$ is an open $n$-dimensional interval. Therefore we have $\mu(\bar{V}(p,t)) = \mu(V(p,t))$. Since $\bar{V}(p,t)$ is a closed interval, it follows that $\mu(\bar{V}(p,t)) = \delta^{n}$. 
	Since all the sets of $W$ are separate we can use Lemma \ref{lem:measure_additivty} to obtain:
	$$\mu(W(p,t,\delta))  = \sum_{A \in W(p,t)} \mu(A) =  \sum_{A \in \bar{W}(p,t)} \mu(A) = \delta^{n} \cdot |\bar{B}(p,t) \cap G(\delta)|$$
	By Lemma \ref{lem:basic_properties_lebesgue_measure} (2), since $W(p,t,\delta) \subseteq \bar{W}(p,t,\delta)$ we obtain $\mu(\cup\bar{W}(p,t)) \geq \delta^{n} \cdot |\bar{B}(p,t) \cap G(\delta)|$. On the other hand, by Lemma \ref{lem:basic_properties_lebesgue_measure} (3) we obtain 
	$$\mu(\bar{W}(p,t,\delta)) \leq \sum_{A \in \bar{W}(p,t)} \mu(A) = \delta^{n} \cdot |\bar{B}(p,t) \cap G(\delta)| $$
	Therefore we have shown that $\mu(\bar{W}(p,t,\delta)) = \delta^{n} \cdot |\bar{B}(p,t) \cap G(\delta)|$.
\end{proof}

\begin{lem}[Set $U(\delta)$ bounds]
	\label{lem:u_bounds}
	Let $\R^{n}$ be a normed vector space.
	Let $R \subseteq \R^{n}$ be its finite subset.
	Then any set $U(\delta)$ of Lemma \ref{lem:covering_grid_inequalities} satisfies the following inequalities:
	$$\frac{\mu( \bar{B}(p,t - \delta \cdot n \cdot \rho))}{\delta^n} \leq |\bar{B}(p,t) \cap U(\delta)| \leq \frac{\mu( \bar{B}(p,t + \delta \cdot \gamma \cdot n \cdot \rho))}{\delta^n}, $$
	for all $p \in R$ and $t > n \cdot \delta \cdot \rho$.
\end{lem}
\begin{proof}

Let $p \in \R^{n}$ be an arbitrary point and let $t > n \cdot \delta \cdot \rho$ be an arbitrary real number.
By Lemma \ref{lem:covering_grid_inequalities} it follows:
$$|G(\delta) \cap \bar{B}(p , t + \frac{n \cdot \delta \cdot \rho}{2})| \leq |\bar{B}(p,t) \cap U(\delta)| \leq |G(\delta) \cap \bar{B}(p , t + \frac{n \cdot \delta \cdot \rho}{2})|.$$
Let $ \bar{W}(p,t + \frac{n \cdot \delta \cdot \rho}{2},\delta) = \cup_{q} \{ \bar{V}(q,\delta) \mid q \in \bar{B}(p, t + \frac{n \cdot \delta \cdot \rho}{2}) \}.$ By Lemma \ref{lem:covering_grid_inclusions} we have:  
$$
\bar{B}(p, t - n \cdot \delta \cdot \rho) \subseteq \bar{W}(p,t - \frac{n \cdot \delta \cdot \rho}{2},\delta) \text{ and }
\bar{W}(p,t + \frac{n \cdot \delta \cdot \rho}{2},\delta) \subseteq \bar{B}(p, t + n \cdot \delta \cdot \rho)
$$
By Lemma \ref{lem:basic_properties_lebesgue_measure} we have $\mu(\bar{W}(p,t + \frac{n \cdot \delta \cdot \rho}{2},\delta)) \leq \mu(\bar{B}(p, t + n \cdot \delta \cdot \rho))$. By Lemma \ref{lem:lebesgue_measure_w} we have:
$$\mu(\bar{W}(p,t + \frac{n \cdot \delta \cdot \rho}{2},\delta)) = \delta^n \cdot |\bar{B}(p,t + \frac{n \cdot \delta \cdot \rho}{2}) \cap G(\delta)|.$$
By combining the facts we obtain:
$$ |\bar{B}(p,t) \cap U(\delta)| \leq |G(\delta) \cap \bar{B}(p , t + \frac{n \cdot \delta \cdot \rho}{2})| \leq \frac{\mu(\bar{W}(p,t + \frac{n \cdot \delta \cdot \rho}{2},\delta))}{\delta^n} \leq \frac{\mu(\bar{B}(p, t + n \cdot \delta \cdot \rho))}{\delta^n} $$
$$ |\bar{B}(p,t) \cap U(\delta)| \geq |G(\delta) \cap \bar{B}(p , t - \frac{n \cdot \delta \cdot \rho}{2})| \geq \frac{\mu(\bar{W}(p,t - \frac{n \cdot \delta \cdot \rho}{2},\delta))}{\delta^n} \geq \frac{\mu(\bar{B}(p, t - n \cdot \delta \cdot \rho))}{\delta^n} $$
which concludes the proofs.

\end{proof}


\begin{lem}[Set $U(\delta)$ is locally finite]
	\label{lem:grid_locally_finite}
	Let $\R^{n}$ be a normed vector space.
	Let $R \subseteq \R^{n}$ be its finite subset
	Then any set $U(\delta)$ from Lemma \ref{lem:covering_grid_inequalities} is locally finite.
\end{lem}
\begin{proof}
With the exact same proof of Lemma \ref{lem:u_bounds} it can be shown that 
$$|\bar{B}(p,t) \cap U(\delta)| \leq \frac{\mu( \bar{B}(p,t + \delta \cdot n \cdot \rho))}{\delta^n}$$ 
is satisfied for all $p \in R$ and $t > 0$.
Therefore $|\bar{B}(p,t) \cap U(\delta)|$ is finite as well.

\end{proof}
\noindent
Recall that \emph{minimized expansion constant} of Definition \ref{dfn:expansion_constant} of a finite subset $R$ of a metric space $(X,d)$ was defined as  $c_m(R) = \lim\limits_{\xi \rightarrow 0^{+}}\inf\limits_{R\subseteq A\subseteq X}\sup\limits_{p \in A,t > \xi}\dfrac{|\bar{B}(p,2t) \cap A|}{|\bar{B}(p,t) \cap A|}$ where $A$ is a locally finite set which covers $R$.

\begin{thmm}[The minimized expansion constant of a finite subset $R$ of $\R^{n}$ is at most $2^{n}$]
	\label{thm:normed_space_exp_constant}
	Let $R$ be a finite subset of a normed Euclidean space $\R^{n}$.
	Let $c_m(R)$ be the minimized expansion constant of Definition~\ref{dfn:expansion_constant}, 
	then $c_m(R) \leq 2^{n}$.
\end{thmm}
\begin{proof}
	Let $0 < \xi < \frac{d_{\min}(R)}{2}$ be an arbitrary real number.  	
	Let $0< \delta < \frac{\xi}{n \cdot \rho}$ be a real number.
	Since $\delta < \frac{d_{\min}(R)}{2 \cdot n \cdot \rho}$ by Lemma \ref{lem:u_bounds} we have: $$\frac{\mu( \bar{B}(p,t - \delta \cdot n \cdot \rho))}{\delta^n} \leq |\bar{B}(p,t) \cap U(\delta)| \leq \frac{\mu( \bar{B}(p,t + \delta \cdot \gamma \cdot n \cdot \rho))}{\delta^n}$$
Note that by Lemma \ref{lem:scale_property_lebesgue_measure} we have: $\mu(\bar{B}(q,y)) =  y^{n} \cdot \mu (\bar{B}(q,1) )$ for any $q \in \R^n$ and $y \in \R_{+}$. Therefore 
	$$ \frac{|\bar{B}(p, 2t) \cap U(\delta)|}{|\bar{B}(p, t) \cap U(\delta)|} \leq \frac{\mu(\bar{B}(p, 2t + n\delta\rho)) \cdot \delta^2}{\mu(\bar{B}(p, t - n\delta\rho)) \cdot \delta^2} = \frac{ (2t + n\delta\rho)^n \cdot \mu(\bar{B}(p,1)) }{ (t - n\delta\rho)^n \cdot \mu(\bar{B}(p,1)) } 
	= \frac{(2t + n\delta\rho)^n}{(t - n\delta\rho)^n},$$
	is satisfied for for all $t > \xi$.
	Since $0 < \xi < \frac{d_{\min}(R)}{2}$ was chosen arbitrarily, we conclude that:  $$c_m(R) =  \lim\limits_{\xi \rightarrow 0^{+}}\inf\limits_{R\subseteq A\subseteq X}\sup\limits_{p \in A,t > \xi}\dfrac{|\bar{B}(p,2t)| \cap A}{|\bar{B}(p,t)| \cap A} \leq \lim_{\delta \rightarrow 0} \frac{|\bar{B}(p, 2t) \cap U(\delta)|}{|\bar{B}(p, t) \cap U(\delta)|} = \lim_{\delta \rightarrow 0} \frac{(2t + \delta \cdot n\rho)^n}{(t - \delta \cdot n\rho)^n} = \frac{2^n \cdot t^n}{t^n} = 2^{n}.$$

\end{proof}

\section{Distinctive descendant sets}
\label{sec:distinctive_descendant_set}

This section introduces auxiliary concepts for future proofs. 
The main concept is a distinctive descendant set in Definition \ref{dfn:distinctive_descendant_set}. 
The distinctive descendant set at a level $i$ of a node $p \in \T(R)$ in a compressed cover tree corresponds to the set of descendants of a copy of node $p$ at level $i$ in the original implicit cover tree $T(R)$.
Other important concepts are $\lambda$-point of Definition \ref{dfn:lambda-point} that is used in Algorithm \ref{alg:cover_tree_k-nearest} as an approximation for $k$-nearest neighboring point. The $\beta$-point property of $\lambda$-point defined in Lemma \ref{lem:beta_point} plays a major role in the proof of the main worst-case time complexity result Theorem \ref{thm:knn_KR_time}.

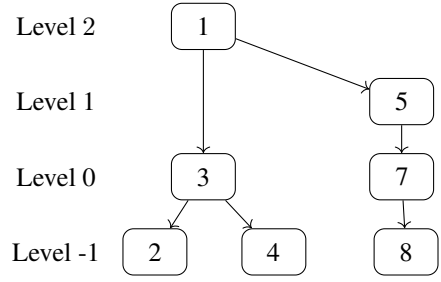
\begin{figure}
	\centering
	\begin{tikzpicture}[align=center, node distance = 1.0cm, scale = 0.45]

	\node (scalem2text) {Level $2$};
	\node[below of =scalem2text] (scalem1text) {Level 1};
	\node[below of =scalem1text] (scalem0text) {Level 0};
	\node[below of =scalem0text] (scalemm1text) {Level -1};

	\node [blockz,  right=25pt of scalem2text ] (node1) {1};
	\node [blockz,  right=100pt of scalem1text] (node5) {5};
	\node [blockz,  right=25pt of scalem0text] (node3) {3};
	\node [blockz,  right=5pt of scalemm1text] (node2) {2};
	\node [blockz,  right=50pt of scalemm1text] (node4) {4};
	
	\node [blockz,  right=100pt of scalem0text] (node7) {7};
	\node [blockz,  right=100pt of scalemm1text] (node8) {8};

	  \draw[->] (node1) -> (node5);
	  \draw[->] (node1) -> (node3);
	  \draw[->] (node3) -> (node2);
	  \draw[->] (node3) -> (node4);
	  \draw[->] (node5) -> (node7);
	  \draw[->] (node7) -> (node8);

\end{tikzpicture} 
	\caption{Consider a compressed cover tree $\T(R)$ that was built on set $R = \{1,2,3,4,5,7,8\}$. Let $\Sd_i(p, \T(R))$ be a distinctive descendant set of Definition \ref{dfn:distinctive_descendant_set}. Then $V_2(1) = \emptyset, V_{1}(1) = \{5\}$ and $V_{0}(1) = \{3,5,7\}$.
		And also $\Sd_2(1, \T(R)) = \{1, 2,3,4,5,7,8\}$, $\Sd_1(1, \T(R)) = \{1,2,3,4\} $ and $\Sd_{0}(1, \T(R)) = \{1\} $.}
	\label{fig:uniqueDescendant}
\end{figure}

\dfndistinctivedescendantset*
\begin{lem}[Distinctive descendant set inclusion property]
	\label{lem:distinctive_descendant_inclusion}
	In conditions of Definition \ref{dfn:distinctive_descendant_set} let $p \in R$ and let $i,j$ be integers satisfying
	$l_{\min}(\T(R)) \leq i \leq j \leq l(p) - 1$.
	Then $\Sd_i(p, \T(R)) \subseteq \Sd_j(p, \T(R))$.
\end{lem}


Essential levels of a node $p \in \T(R)$ have 1-1 correspondence to the set consisting of all nodes containing $p$ in the explicit representation of cover tree in \cite{beygelzimer2006cover}, see Figure \ref{fig:tripleexample} middle. 

\begin{dfn}[Essential levels of a node]
	\label{dfn:essential_levels_node}
	Let $R\subseteq X$ be a finite reference set with a cover tree $\T(R)$. Let $q \in \T(R)$ be a node. Let $(t_i)$ for 
	$i \in \{0,1,...,n\}$ be a sequence of $H(\T(R))$ in such a way that $t_0 = l(q)$, $t_n = l_{\min}(\T(R))$ and for all $i$ we have $t_{i+1} = \nxt(q, t_{i}, \T(R))$. Define the set of essential indices $\Es(q,\T(R)) = \{ t_{i} \mid i \in \{0,...,n\} \}$.
	\bs
\end{dfn}

\begin{lem}[Number of essential levels]
	\label{lem:number_of_explicit_levels}
	Let $R\subseteq X$ be a finite reference set with a cover tree $\T(R)$. Then 
	$\sum_{p \in R}|\mathcal{E}(p,\T(R))| \leq 2 \cdot |R|,$
	where $\mathcal{E}(p,\T(R))$ appears in Definition \ref{dfn:essential_levels_node}. \bs
\end{lem}
\begin{proof}
	Let us prove this claim by induction on size $|R|$. In basecase $R = \{r\}$ and therefore $|\mathcal{E}(r,\T(R))| = 1$.
	Assume now that the claim holds for any tree $\T(R)$, where $|R| = m$ and let us prove that if we add any node $v \in X \setminus R$ to tree $\T(R)$, then $\sum_{p \in R}|\mathcal{E}(p,\T(R \cup \{v\}))| \leq 2 \cdot |R| + 2$. Assume that we have added $u$ to $\T(R)$, in such a way that $v$ is its new parent. Then $|\mathcal{E}(p,\T(R \cup \{v\}))| = |\mathcal{E}(p,\T(R))| + 1 $ and $|\mathcal{E}(v,\T(R \cup \{v\}))| = 1$. We have:
	$$ \sum_{p \in R \cup \{u\}}|\mathcal{E}(p,\T(R))| = \sum_{p \in R}|\mathcal{E}(p,\T(R))| + 1 + |\mathcal{E}(v,\T(R \cup \{v\}))| \leq 2\cdot |R| + 2 \leq 2(|R \cup \{v\}|) $$
	which completes the induction step. 
\end{proof}

\begin{algorithm}
	\caption{This algorithm returns sizes of distinctive descendant set $\Sd_i(p, \T(R))$ for all essential levels $i \in \Es(p,\T(R))$}
	\label{alg:cover_tree_distinctive_descendants}
	\begin{algorithmic}[1]
		\STATE \textbf{Function} : CountDistinctiveDescendants(Node $p$, a level $i$ of $\T(R)$)
		\STATE \textbf{Output} : an integer 
		
		\IF{$i > l_{\min}(\T(Q))$}
		\FOR {$q \in \text{Children}(p)$ having $l(p) = i-1$ or $q = p$}
		\STATE Set $s = 0$
		\STATE $j \leftarrow 1 + \nxt(q, i-1,\T(R))$
		\STATE $s \leftarrow s + $CountDistinctiveDescendants($q$, $j$)
		\ENDFOR
		\ELSE
		\STATE Set $s = 1$
		\ENDIF
		\STATE Set $|\Sd_i(p)| = s$ and \textbf{return} s
	\end{algorithmic}
\end{algorithm}

\begin{lem}
	\label{lem:distinctive_descendants_precompute}
	Let $R$ be a finite subset of a metric space.
	Let $\T(R)$ be a compressed cover tree on $R$. 
	Then, Algorithm~\ref{alg:cover_tree_distinctive_descendants} computes the sizes $|\Sd_{i}(p, \T(R))|$ for all $p \in R$ and essential levels $i \in \Es(p,\T(R))$ in time $O(|R|)$.
	\bs
\end{lem}
\begin{proof}
	By Lemma \ref{lem:number_of_explicit_levels} we have $\sum_{p \in R}|\mathcal{E}(p,\T(R))| \leq 2 \cdot |R|.$ Since CountDistinctiveDescendants is called once for every any combination $p \in R$ and $i \in \mathcal{E}(p,\T(R))$ it follows that the time complexity of Algorithm~\ref{alg:cover_tree_distinctive_descendants} is $O(R)$.
	

\end{proof}



Recall that the neighbor set $N(q;r) = \{p \in C \mid d(q,p) \leq d(q,r)\}$ was introduced in Definition~\ref{dfn:kNearestNeighbor}.
\begin{dfn}[$\la$-point]
	\label{dfn:lambda-point}
	Fix a query point $q$ in a metric space $(X,d)$ and fix any level $i \in \Z$. 
	Let $\T(R)$ be its compressed cover tree on a finite reference set $R \subseteq X$. 
	Let $C$ be a subset of a cover set $C_i$ from Definition~\ref{dfn:cover_tree_compressed} satisfying $\sum_{p \in C}|\Sd_i(p, \T(R))| \geq k$, where $\Sd_i(p, \T(R))$ is the distinctive descendant set from Definition \ref{dfn:distinctive_descendant_set}.
	For any $k\geq 1$, define $\la_k(q,C)$ as a point $\la\in C$ that minimizes $d(q,\la)$ subject to $\sum_{p \in N(q;\la)}|\Sd_i(p, \T(R)) |\geq k$. 
	\bs
\end{dfn}

\begin{algorithm}
	\caption{Finding $k$-lowest element of a finite subset $A \subseteq R$ with priority function $f: A \rightarrow \R$}
	\label{alg:k_smallest_elements}
	\begin{algorithmic}[1]
		\STATE \textbf{Input:} Ordered subset $A \subseteq R$, priority function $f:A \rightarrow \R$, an integer $k \in \Z$
		\STATE Initialize an empty max-binary heap $B$ and an empty array $D$ on points $A$.
		\FOR{$p \in A$}
		\STATE add $p$ to $B$ with priority $f(p)$
		\IF{$|H| \geq k$} 
		\STATE remove the point with a maximal value from $B$
		\ENDIF
		\ENDFOR
		\STATE Transfer points from the binary heap $B$ to the array $D$ in reverse order. 
		\STATE \textbf{return} $D$. 
	\end{algorithmic}
\end{algorithm}

\begin{algorithm}
	\caption{Computation of a $\lambda$-point of Definition \ref{dfn:lambda-point} in line \ref{line:knnu:dfnLambda} of Algorithm \ref{alg:cover_tree_k-nearest} }
	\label{alg:lambda}
	\begin{algorithmic}[1]
		\STATE \textbf{Input:} A point $q \in X$, a subset $C$ of a level set $C_i$ of a compressed cover tree $\T(R)$, an integer $k \in \Z$
		\STATE Define $f: C \rightarrow \R$ by setting $f(p) = d(p,q)$. 
		\STATE Run Algorithm \ref{alg:k_smallest_elements} on inputs $(C, f, k)$ and retrieve array $D$. 
		\STATE Find the smallest index $j$ such that $\sum^{j}_{t = 0}|\Sd_i(D[t] , \T(R))| \geq k$.
		\STATE \textbf{return} $\lambda = D[j]$. 
	\end{algorithmic}
\end{algorithm}

\begin{lem}
	\label{lem:time_k_smallest_elements}
	Let $A \subseteq R$ be a finite subset and let $f: A \rightarrow \R$ be a priority function and let $k \in \Z_{+}$. 
	Then Algorithm \ref{alg:k_smallest_elements} finds $k$-smallest elements of $A$ in time $|A| \cdot \log_2(k)$
\end{lem}
\begin{proof}
	Adding and removing element from binary heap data structure \citet[section~6.5]{Cormen1990} takes at most $O(\log(n))$ time, where $n$ is the size of binary heap.
	Since the size of our binary heap is capped at $k$ and we add/remove at most $|A|$ elements, the total time complexity is 
	$O(|A| \cdot \log_2(k))$.
\end{proof}

\begin{lem}[time complexity of a $\lambda$-point]
	\label{lem:time_lambdapoint}
	In the conditions of Definition~\ref{dfn:lambda-point}, the time complexity of Algorithm \ref{alg:lambda} is $O(|C| \cdot \log_2(k))$.
\end{lem}
\begin{proof}
	Note that in line $4$ we have $|\Sd_i(D[t] , \T(R))| \geq 1$ for all $t = 0 , ..., j$. 
	Therefore the time complexity of line $4$ is $O(k)$. 
	By Lemma \ref{lem:time_k_smallest_elements} The time complexity of line $3$ is $O(|C| \cdot \log_2(k))$, which proves the claim. 
\end{proof}

\begin{lem}[separation]
	\label{lem:separation}
	In the conditions of Definition~\ref{dfn:distinctive_descendant_set}, let $p\neq q$ be nodes of $\T(R)$ with $l(p) \geq i$, $l(q) \geq i$. Then $\Sd_i(p , \T(R)) \cap \Sd_{i}(q, \T(R)) = \emptyset$.  \bs
\end{lem}
\begin{proof}
	Without loss of generality assume that $l(p) \geq l(q)$. If $q$ is not a descendant of $p$, the lemma holds trivially due to $\Desc(q) \cap \Desc(p) = \emptyset$. 
	If $q$ is a descendant of $p$, then $l(q) \leq l(p) - 1$ and therefore $q \in V_i(p)$. 
	It follows that
	$\Sd_i(p , \T(R)) \cap \Desc(q) = \emptyset$ and therefore
	$$\Sd_{i}(p, \T(R)) \cap \Sd_{i}(q, \T(R))   \subseteq \Sd_{i}(p, \T(R)) \cap \Desc(q) = \emptyset.$$ 
\end{proof}

\begin{lem}[Sum lemma]
	\label{lem:sum}
	In the notations of Definition~\ref{dfn:distinctive_descendant_set} assume that $i$ is arbitrarily index and a subset $V \subseteq R$ satisfies $l(p) \geq i$ for all $p \in V$. Then 
	$$|\bigcup\limits_{p \in V}\Sd_i(p,\T(R))| = \sum\limits_{p \in V} |\Sd_i(p, \T(R))|.$$
\end{lem}
\begin{proof}
	Proof follows from Lemma \ref{lem:separation}.
\end{proof}

\noindent 
By Lemma \ref{lem:sum} in Definition \ref{dfn:lambda-point} one can assume that $|\bigcup_{p \in C}\Sd_i(p, \T(R)) | \geq k$.

\lemdistinctivedescendantchildlevel*
\begin{proof}
	\hypertarget{proof:lem:distinctive_descendant_child_level}{\empty} Let $w \in \Sd_i(p)$ be an arbitrary node satisfying $w \neq p$. 
	Let $s$ be the node-to-root path of $w$. 
	The inclusion $\Sd_i(p) \subseteq \Desc(p)$ implies that $w \in \Desc(p)$. 
	Let $a \in \Child(p) \setminus \{p\}$ be a child on the path $s$. 
	If $l(a) \geq i$ then $a \in V_i(p)$. 
	Note that $w \in \Desc(a)$.
	Therefore $w \notin \Sd_i(p)$, which is a contradiction.
	Hence $l(a) < i$. 
\end{proof}

\begin{lem}
	\label{lem:distinctive_descendant_distance}
	In the notations of Definition~\ref{dfn:distinctive_descendant_set}, let $p \in \T(R)$ be any node. 
	If $w \in \Sd_i(p,\T(R))$ then $d(w,p) \leq 2^{i+1}$.\bs
\end{lem}
\begin{proof}
	By Lemma \ref{lem:distinctive_descendant_child_level} either $w = \gamma$ or $w \in \Desc(a)$ for some $a \in \Child(\gamma) \setminus \{\gamma\}$ for which $l(a) < i$.
	If $w = \gamma$, then trivially $d(\gamma, w) \leq 2^{i}$. Else $w$ is a descendant of $a$, which is a child of node $\gamma$ on level $i-1$ or below, therefore by Lemma \ref{lem:compressed_cover_tree_descendant_bound} we have  $d(\gamma, w) \leq 2^{i}$ anyway.
\end{proof}

\begin{lem}
	\label{lem:child_set_equivalence}
	Let $R$ be a finite subset of a metric pace.
	Let $\T(R)$ be a compressed cover tree on $R$. 
	Let $R_j \subseteq C_j$, where $C_j$ is the $i$th cover set of $\T(R)$. Let $i = \max_{p \in R_j}\nxt(p,j,\T(R))$.
	Set 
	$\C_i(R_j) = R_{j} \cup \{a \in \Child(p) \text{ for some }p \in R_i \mid l(a) = i \}$.
	Then 
	$$\bigcup_{p \in \C_i(R_j)}\Sd_{i}(p, \T(R)) = \bigcup_{p \in R_j}\Sd_{j}(p, \T(R)).$$
\end{lem}
\begin{proof}
	Let $a \in \bigcup_{p \in \C_i(R_j)}\Sd_{i}(p, \T(R))$ be an arbitrary node. Then there exits $u \in \C_i(R_j)$ having 
	$ a \in \Sd_{i}(u, \T(R))$. 
	By definition of index $i$, either $u \in R_j$ or $u$ has a parent in $R_j$. If $u \in R_j$ then we note that $V_j(u) \subseteq V_i(u)$.
	Since $a \notin V_i(u)$, we also have $a \notin V_j(u)$. 
	
	\medskip
	
	\noindent 
	Otherwise let $w$ be a parent of $u$. Therefore there are no descendants of $w$ in having level in interval $[l(u) + 1 , l(p) - 1]$.
	Since $l(u) = i$ and $j > i$ it follows that $V_{j}(w) = \emptyset$. 
	Denote $w$ to be the lowest level ancestor of $u$ on level $j$. By cases above we have $a \notin V_{j}(w)$.
	Therefore it follows that
	$$a \in \Sd_j(w, \T(R)) \subseteq \bigcup_{p \in R_j}\Sd_j(p , \T(R)).$$
	To prove the converse inclusion assume now that $a \in \bigcup\limits_{p \in R_j}\Sd_{j}(p, \T(R))$. 
	Then $a \in \Sd_{j}(v, \T(R))$ for some $w \in R_j$.
	Assume that $w$ has no children at the level $i$. 
	Then $V_{j}(w) = V_{i}(w)$ and 
	$$a \in \Sd_{i}(w, \T(R)) \subseteq \bigcup_{p \in \C_i(R_j)}\Sd_{i}(p ,\T(R)).$$ 
	Assume now that $w$ has children at the level $i$.
	If there exists $b \in \Child(w)$ for which $a \in \Desc(b)$. 
	Since $V_{i}(b) = \emptyset$, we conclude that 
	$$a \in \Sd_{i}(b, \T(R)) \subseteq \bigcup_{p \in \C_i(R_j)}\Sd_{i}(p ,\T(R)).$$ 
	Assume that $a \notin \Desc(b)$ for all $b \in \Child(w)$ with $l(b) = i$.
	Then $a \in \Desc(w)$ and $a \notin \Desc(b')$ for any $b' \in V_j(w)$. Then $a \in \Sd_{i}(w, \T(R))$ and the proof finishes:
	$$\bigcup_{p \in R_j}\Sd_{j}(p, \T(R))   \subseteq   \bigcup_{p \in \C_i(R_j)}\Sd_{i}(p, \T(R)).  $$
\end{proof}

\begin{lem}[$\be$-point]
	\label{lem:beta_point}
	In the notations of Definition~\ref{dfn:lambda-point}, let $C\subseteq C_i$ so that $\cup_{p \in C}\Sd_i(p, \T(R))$ contains all $k$-nearest neighbors of $q$. 
	Set $\la = \la_k(q,C)$. 
	Then $R$ has a point $\beta$ among the first $k$ nearest neighbors of $q$ such that $d(q,\lambda) \leq  d(q,\beta) + 2^{i+1}$.\bs
\end{lem}
\begin{proof}
	We show that $R$ has a point $\beta$ among the first $k$ nearest neighbors of $q$ such that
	$$\beta \in \bigcup_{p \in C}\Sd_i(p, \T(R)) \setminus \bigcup_{p \in N(q, \lambda) \setminus \{\lambda\} }\Sd_i(p, \T(R)).$$
	Lemma~\ref{lem:sum} and Definition~\ref{dfn:lambda-point} imply that
	$$ | \bigcup_{p \in N(q, \lambda) \setminus \{\lambda\} }\Sd_i(p, \T(R)) |= \sum_{p \in N(q, \lambda) \setminus \{\lambda\} }| \Sd_i(p, \T(R)) | < k.$$ 
	Since $\cup_{p \in C}\Sd_i(p, \T(R))$ contains all $k$-nearest neighbors of $q$, a required point $\beta\in R$ exists. 
	\medskip
	
	\noindent 
	Let us now show that $\beta$ satisfies $d(q,\lambda) \leq  d(q,\beta) + 2^{i+1}$.
	Let $\gamma \in C \setminus N(q,\lambda) \cup \{\lambda \}$ be such that $\beta \in \Sd_i(\gamma, \T(R))$. 
	Since $\gamma \notin N(q,\lambda) \setminus \{\lambda\}$, we get $d(\gamma, q) \geq d(q, \lambda)$.  
	The triangle inequality says that $ d(q, \gamma) \leq d(q,\beta) + d(\gamma ,\beta) $.
	Finally Lemma~\ref{lem:distinctive_descendant_distance} implies that $d(\gamma, \beta) \leq 2^{i+1}$. 
	Then
	$$d(q,\lambda) \leq d(q, \gamma) \leq d(q,\beta) + d(\gamma ,\beta) \leq d(q,\beta) +2^{i+1}$$
	So $\beta$ is a desired $k$-nearest neighbor satisfying $d(q,\lambda) \leq  d(q,\beta) + 2^{i+1}$.
\end{proof}

\section{Construction of a compressed cover tree}
\label{sec:ConstructionCovertree}

This section introduces a new method Algorithm \ref{alg:cover_tree_k-nearest_construction_whole} for construction of a compressed cover tree, which is based on Insert() method \citet[Algorithm~2]{beygelzimer2006cover} that was specifically adapted for compressed cover tree. The proof of  \citet[Theorem~6]{beygelzimer2006cover}, which estimated the time complexity of \citet[Algorithm~2]{beygelzimer2006cover} was shown to be incorrect  \citet[Counterexample~4.2]{elkin2022counterexamples}. The main contribution of this section are two new time complexity results that bound the time complexity of Algorithm \ref{alg:cover_tree_k-nearest_construction_whole}:
\begin{itemize}
    \item Theorem \ref{thm:construction_time} bounds the time complexity as 
    $O(c_m(R)^{10} \cdot \log_2(\Delta(R)) \cdot |R|)$ by using minimized expansion constant $c_m(R)$ and aspect ratio $\Delta(R)$ as parameters.
    \item  Theorem \ref{thm:construction_time_KR} bounds the time complexity as $O(c(R)^{12} \cdot \log_2|R| \cdot |R|)$ by using expansion constant $c(R)$ as parameter.
\end{itemize}
Definition~\ref{dfn:implementation_compressed_cover_tree} explains the concrete implementation of compressed cover tree.


\dfnimplementation*

\begin{dfn}[construction iteration set $L(\T(W),p)$]
	\label{dfn:cover_tree_construction_iteration_set}
	Let $W$ be a finite subset of a metric space $(X,d)$. 
	Let $\T(W)$ be a cover tree of Definition \ref{dfn:cover_tree_compressed} built on $W$ and let $p \in X \setminus W$ be an arbitrary point.
	Let $L(\T(W),p) \subseteq H(\T(R))$ be the set of all levels $i$ during iterations \ref{line:cof:loop_start}-\ref{line:cof:loop_end} of Algorithm \ref{alg:cover_tree_k-nearest_construction} launched with inputs 
	$\T(W),p$. 
	Set $\eta(i) = \min_{t} \{ t \in L(\T(W),p) \mid t > i\}$. 
\end{dfn}
\begin{algorithm}
	\caption{Building a compressed cover tree $\T(R)$ from Definition \ref{dfn:cover_tree_compressed}.
	}
	\label{alg:cover_tree_k-nearest_construction_whole}
	\begin{algorithmic}[1]
		\STATE \textbf{Input} : a finite subset $R$ of a metric space $(X,d)$
		\STATE \textbf{Output} : a compressed cover tree $\T(R)$. 
		\STATE Choose a random point $r \in R$ to be a root of $\T(R)$ 
		\STATE Build the initial compressed cover tree $\T = \T(\{r\})$ by making $l(r) = +\infty$. 
		\FOR{$p \in R \setminus \{r\}$} \alglinelabel{line:con:for:begin}
		\STATE $\T \leftarrow $ run AddPoint$(\T , p )$ described in Algorithm \ref{alg:cover_tree_k-nearest_construction}.
		\ENDFOR  \alglinelabel{line:con:for:end}
		\STATE For root $r$ of $\T$ set $l(r) = 1 +  \max_{p \in R \setminus \{r\}}l(p)$
	\end{algorithmic}
\end{algorithm}
\begin{algorithm}
	\caption{Building $\T(W \cup \{p\})$ in 
		lines \ref{line:con:for:begin}-\ref{line:con:for:end} of Algorithm \ref{alg:cover_tree_k-nearest_construction_whole}.}
	\label{alg:cover_tree_k-nearest_construction}
	\begin{algorithmic}[1]
		\STATE \textbf{Function} AddPoint(a compressed cover tree $\T(W)$ with a root $r$, a point $p\in X$)
		\STATE \textbf{Output} : compressed cover tree $\T(W \cup \{p\})$. 
		\STATE Set $i \leftarrow l_{\max}(\T(W)) - 1$ and $\eta(l_{\max} - 1) = l_{\max}$
		 \COMMENT{If the root $r$ has no children then $ i \leftarrow -\infty$}
		\STATE Set $R_{l_{\max}} \leftarrow \{r\}$ and initialize sorted dictionary $M$ with $M[l_{\max}] = \{r\}$
		\WHILE{$i \geq l_{\min}$} \alglinelabel{line:cof:loop_start}
		\STATE Assign $\mathcal{C}_i(R_{\eta(i)}) \leftarrow  R_{\eta(i)} \cup \{a \in \Child(q) \text{ for some }q \in R_{\eta(i)} \mid l(a) = i \} $
		\alglinelabel{line:cof:dfn_C}
		\STATE Set $R_{i} = \{a \in \C_i(R_{\eta(i)}) \mid d(p,a) \leq 2^{i+1} \}$ and $M[i] = R_{i}$.
		\alglinelabel{line:cof:defRim1}
		\IF {$R_i$ is empty}\alglinelabel{line:cof:inner_loop:begin}
		\STATE  Launch Algorithm \ref{alg:construction_parent_assign} with parameters $(p, M)$ and \textbf{exit this algorithm}.\alglinelabel{line:cof:inner_loop:mid}
		\ENDIF \alglinelabel{line:cof:inner_loop:end}
		\STATE $t = \max_{ a \in R_{i}} \nxt(a,i,\T(W)) $  \alglinelabel{line:cof:dfn_t}
		\COMMENT{If $R_{i}$ has no children we set $t = l_{\min} - 1$}
		\STATE $\eta(i) \leftarrow i$ and $i \leftarrow t$ 
		\ENDWHILE \alglinelabel{line:cof:loop_end}
        \STATE Launch Algorithm \ref{alg:construction_parent_assign} with parameters $(p, M)$. \alglinelabel{line:cof:selectParentEnd}
        
	\end{algorithmic}
\end{algorithm}

\begin{algorithm}
\caption{Assign node subprocedure}
\label{alg:construction_parent_assign}
\begin{algorithmic}[1]
\STATE \textbf{Function} AssignParent(Point $p$,  dictionary $M$)
\STATE \textbf{Output:} Compressed cover tree $\T(W \cup \{p\})$
\STATE Set $i$ to be the lowest key of $M$.
\WHILE{$i \leq l_{max}$}
\IF {$d(p,R_i) \leq 2^{i}$}\alglinelabel{line:assignparent:assgiment:ifstatement}
\STATE Let $q \in R_{i}$ such that $d(q,p) = d(R_{i},p)$, let $x$ maximal integer for which $d(p,q) > 2^{x}$. \alglinelabel{line:assignparent:assgiment:first}
 \STATE Set $l(p) = x$ and $q$ to be the parent of $p$.  \alglinelabel{line:assignparent:assgiment}
\ENDIF
\STATE Find next key $j > i$ of $M$ and set $i = j$
\ENDWHILE
\end{algorithmic}
\end{algorithm}

\noindent
Let $R$ be a finite subset of a metric space $(X,d)$. 
A compressed cover tree $\T(R)$ will be incrementally constructed by adding points one by one as summarized in Algorithm \ref{alg:cover_tree_k-nearest_construction_whole}. 
First we select a root node $r \in R$ and form a tree $\T(\{r\})$ of a single node $r$ at the level $l_{\max} = l_{\min} = +\infty$. 
Assume that we have a compressed cover tree $\T(W)$ for a subset $W \subset R$. 
For any point $p \in R \setminus W$, Algorithm \ref{alg:cover_tree_k-nearest_construction} builds a larger compressed cover tree $\T(W \cup \{p\})$ from $\T(W)$. 

\medskip 
\noindent
Note that during the construction of the compressed cover tree in Algorithm \ref{alg:cover_tree_k-nearest_construction} we write down additional information for every node $p$ , which includes the number of descendants of node $p$ and the maximal level of nodes in set $\Child(p)$. 

\begin{lem}
\label{lem:separation_of_descendants}
Let $\T(R)$ be a cover tree and let $p \in X$ be a point and let $i \in \Z$. Assume that for some $q \in \T(R)$ we have $d(p,q) > 2^{i+1}$. 
Let $\Sd_i(q,\T(R))$ be as defined in Definition~\ref{dfn:distinctive_descendant_set}.
Then for any $\theta \in \Sd_i(q, \T(R)) \setminus \{q\}$ we have $d(\theta,p) > 2^{l(\theta)}$.
\end{lem}
\begin{proof}
\hypertarget{proof:lem:separation_of_descendants}{\empty} Let $S= (\theta = a_0, ...,a_m = q)$ be a node to node path. Since $\theta \in \Sd_i(q, \T(R)) \setminus \{q\}$ by Lemma~\ref{lem:distinctive_descendant_child_level}
we have $l(a_{m-1}) \leq i - 1$. Therefore $l(\theta) = l(a_0) \leq ... \leq l(a_{m-1}) \leq i-1$. We have the following inequality:
 $$d(q,\theta) \leq  \sum\limits^{h-1}_{z = 0}d(a_z, a_{z+1})  \leq \sum\limits^{j}_{x = l(\theta)+1} 2^{x} = (2^{j+1} - 2^{l(\theta)+1}).$$
  By triangle inequality we have: $d(p,\theta) \geq d(p,\gamma) - d(\gamma, \theta) > 2^{j+1} - (2^{j+1} - 2^{l(\theta)+1}) > 2^{l(\theta)}$.
Therefore $d(p,\theta) > 2^{l(\theta)}$ which proves the claim. 
\end{proof}



\thmconstructioncorrectness*


\begin{proof}
\hypertarget{proof:thm:construction_correctness}{\empty}
	It suffices to prove that Algorithm~\ref{alg:cover_tree_k-nearest_construction} correctly extends a compressed cover tree $\T(W)$ for any finite subset $W\subseteq X$ by adding a point $p$. Let us prove that $\T(W \cup \{p\})$ satisfies Definition~\ref{dfn:cover_tree_compressed}. 

    \medskip
    \noindent 

 We first note that the parent $q$ of $p$ is always assigned in Algorithm~\ref{alg:construction_parent_assign} by choosing $q \in R_i$ which minimizes $d(R_i, p)$ as the parent of $p$ and by setting level of $p$ to be maximal integer which satisfies $d(p,q) > 2^{x}$.
Since $d(R_{i}, p) \leq 2^{i}$ we have $l(p) x < i \leq l(q)$. Therefore $l(p) < l(q)$. We also have
 $d(q,p) \leq 2^{x+1} \leq 2^{l(p) + 1}$. Therefore both conditions of (\ref{dfn:cover_tree_compressed}b) are satisfied. 

 \medskip
 \noindent 

 To check (\ref{dfn:cover_tree_compressed}c) Consider arbitrary cover set $C_{h} = \{q \in \T(W \cup \{p\}) \mid l(q) \geq h\}$. Since we have assumed that $\T(W)$ is a valid cover tree, all the cover sets $C_h$ for $h > l(p)$ satisfy the condition.  Let us consider cover sets having $h \leq l(p)$. Consider a sequence of iterations $l_{\min}(\T(W)) \leq s(0) < s(1) < ... < s(t) = l_{\max}(\T(W))$ that were saved as keys of the dictionary $M$ defined in Line~\ref{line:cof:defRim1} of Algorithm~\ref{alg:cover_tree_k-nearest_construction}. Let $\theta \in C_h$ be an arbitrary node. 

\medskip
\noindent 

Assume first that $\theta \notin R_{s(j)}$ for all $j \in \{0,...,s(t)\}$. Therefore $\theta$ is a non-trivial descendant of some node $\gamma$ that was eliminated in line~\ref{line:cof:defRim1} in some level $s(m)$. We have $\theta \in \Sd_{s(m)}(\gamma, \T(R)) \setminus \{\gamma\}$ and by line~\ref{line:cof:defRim1} also $d(p,\gamma) > 2^{s(m)+1}$. By Lemma~\ref{lem:separation_of_descendants} it follows that $d(p,\theta) > 2^{l(\theta)}$. Since $\theta \in C_h$ it follows that $d(p,\theta) > 2^{h}$.

\medskip
\noindent 

Assume then that $\theta \in R_{s(j)}$ for all $j \in [n,m]$. Assume first that $s(n-1) \geq h$. 
Since $\theta \in \C_{s(n-1)}(R_{s(n)}) \setminus R_{s(n-1)}$ we have $d(p,\theta) > 2^{s(n-1)+1} \geq 2^{h}$. 
We then assume that $s(n-1) < h$.
Since $\theta \in C_h$ it follows that $s(m) \geq h$. Pick minimal $u \geq n$ such that $h \leq s(u)$. Assume first that $h \leq s(u) \leq l(p)$,  then by line~\ref{line:assignparent:assgiment:ifstatement} in Algorithm~\ref{alg:construction_parent_assign} we have $d(\theta, p) \geq d(R_{s(u)}, p) > 2^{s(u)} \geq 2^{h}$. In final case assume that $h \leq l(p) < s(u)$. Since $s(u-1) < h$ it follows that the parent of $p$ was selected from $R_{s(u)}$ in Algorithm~\ref{alg:construction_parent_assign}. 
By line~\ref{line:assignparent:assgiment:first} we have $d(R_{s(u)},p) > 2^{l(p)}$. Therefore $d(\theta, p) > 2^{l(p)} \geq 2^{h}$.

\end{proof}

\lemgeneralconstructiontime*
\begin{proof}
    \hypertarget{proof:lem:general_construction_time}{\empty}
	The worst-case time complexity of Algorithm \ref{alg:cover_tree_k-nearest_construction_whole} is dominated by lines \ref{line:con:for:begin}-\ref{line:con:for:end} which call Algorithm \ref{alg:cover_tree_k-nearest_construction} $O(|R|)$ times in total. 
	\medskip
	
	\noindent
	Assume that we have already constructed a cover tree on set $\T(W_y)$, the goal Algorithm \ref{alg:cover_tree_k-nearest_construction} is to construct tree $\T(W_y \cup \{p_{y+1}\})$.
	By Definition \ref{dfn:cover_tree_construction_iteration_set} loop on lines
	\ref{line:cof:loop_start}-\ref{line:cof:loop_end} is performed $L(\T(W_y),p_{y+1})$ times. 
	Let $R_{*}$ be the maximal size of set $R_i$ during all iterations $i \in L(\T(W_y),p_{y+1})$.
	By Lemma~\ref{lem:compressed_cover_tree_width_bound} since $W_{y+1} \subseteq R \subseteq X$ we have 
	$$|\C_{i}(R_{\eta(i)})| \leq c_m(W_{y+1})^4 \cdot |R_{*}| \leq c_m(R)^4 \cdot |R_{*}|$$ nodes, where $\C_{\eta(i)}(R_{\eta(i)})$ is defined in 
 line~\ref{line:cof:dfn_C}.  Therefore both, lines \ref{line:cof:defRim1}  and \ref{line:cof:dfn_C} take at most $c_m(R)^4|R_{*}|$ time. In line \ref{line:cof:dfn_t} we handle $|R_{*}|$ elements, for each of them we can retrieve index $\nxt(a,i, \T(W_y))$ in $O(1)$ time, since for every $a \in \T(R)$ we can update the last index $j$, when $a$ had children on level $j$ in line \ref{line:cof:dfn_C}. Therefore line  \ref{line:cof:dfn_t} takes at most $O(|R_{*}|)$ time.  
    Algorithm~\ref{alg:construction_parent_assign} is called once during whole run-time of the algorithm takes at most $O(L(\T(W_y),p_{y+1}) * |R_{*}|)$ time. 
    Therefore line~\ref{line:cof:inner_loop:mid} and line~\ref{line:cof:selectParentEnd} take at most $O(L(\T(W_y),p_{y+1}) * |R_{*}|)$ time.
	Let us now bound $|R_{*}|$ during the whole run-time of the algorithm.
	
	\medskip
	
	\noindent
	Let $i$ be an arbitrary level. 
	Note that $R_{i} \subseteq B(p,2^{i+1}) \cap C_{i}$ where $C_{i}$ is a $i$th cover set of $\T(R)$. Since $C_{i}$ is $2^{i}$-spares subset of $R$ we can apply packing Lemma \ref{lem:packing} with $r = 2^{i+1}$ and $\delta = 2^{i}$ to obtain 
	$|B(p,2^{i+1}) \cap C_{i} | \leq (c_m(W))^4 $.  
	Lemma \ref{lem:expansion_constant_property} implies $(c_m(W))^4  \leq (c_m(R))^4 $, therefore $|B(p,2^{i}) \cap C_{i} | \leq (c_m(R))^4$. 
	\smallskip
	
	\noindent
	The time complexity of loop \ref{line:cof:loop_start} - \ref{line:cof:loop_end} in Algorithm \ref{alg:cover_tree_k-nearest_construction} is dominated by line \ref{line:cof:dfn_C} that has time $O(|C(R_i)|) \leq O((c_m(R))^4 \cdot |R_{*}|) \leq O((c_m(R))^8)$. 
	Then the whole  Algorithm \ref{alg:cover_tree_k-nearest_construction_whole} has time
	$$O((c_m(R))^8 \cdot  \max\limits_{y=2,...,|R|}L(\T(W_{y-1}),p_{y}) \cdot |R|)$$ as desired. 
\end{proof}
\thmconstructiontime*
\begin{proof}
	In Lemma \ref{lem:general_construction_time} use the upper bounds  due to Lemma \ref{lem:depth_bound} as follows: \\
	$\max\limits_{y \in 2,...,|R|}|L(\T(W_{y-1}),p_{y})| \leq H(\T(R))\leq 1 + \log_2(\Delta(R))$.
\end{proof}

\lemknnnextlevelfinder*
\begin{proof}
	\hypertarget{proof:lem:knn_next_level_finder_for_log_depth}{\empty}
	Note first that since $\eta(i+3) \in L(\T(R),q)$, there exists distinct 
	$u \in R_{\eta(\eta(i+3))}$ and $v \in \C_{\eta(i+1)}(R_{\eta(\eta(i+1)}))$, in such a way that $u$ is the parent of $v$. 
	Let us show that both of $u,v$ cant belong to set $R_i$. Assume contrary that both $u,v \in R_i$. Then by line  \ref{line:cof:defRim1} of Algorithm \ref{alg:cover_tree_k-nearest_construction}
	we have $d(v,q) \leq 2^{i+1}$ and $d(u,q) \leq 2^{i+1}$. By triangle inequality $d(u,v) \leq d(u,q) + d(q,v) \leq 2^{i+2} \leq 2^{\eta(i+1)}$.
	Recall that we denote a level of a node by $l$.
	On the other hand we have $l(u) \geq \eta(i+1)$ and $l(v) \geq \eta(i+1)$, by separation condition of Definition \ref{dfn:cover_tree_compressed} we have $d(u,v) > 2^{\eta(i+1)}$, which is a contradiction. Therefore only one of $\{u,v\}$ 
	can belong to $R_i$. It sufficies two consider the two cases below: 
	
	\medskip
	
	\noindent
	\textbf{Assume that }$v \notin R_i$. Since $v$ is children of $u$ we have $d(u,v) \leq 2^{\eta(i+1) + 1}$.
	By line \ref{line:cof:defRim1} of Algorithm \ref{alg:cover_tree_k-nearest_construction} we have $d(u,q) \leq 2^{\eta(i+1) + 1}$.
	By triangle inequality 
	$$d(v,q) \leq d(v,u) + d(u,q) \leq 2^{ \eta(i+1) + 1 } + 2^{\eta(i+1) + 1} \leq 2^{\eta(i+1) + 2} \leq 2^{\eta(\eta(i+1)) + 1} $$
	Since $v \notin R_i$ there exists level $t$ having $\eta(i+1) \geq t \geq i$ and $v \in \C_{t}(R_{\eta(t)}) \setminus R_t$.
	Therefore by line \ref{line:cof:defRim1} of Algorithm \ref{alg:cover_tree_k-nearest_construction}  we have $d(q,v) > 2^{t+1} \geq 2^{i+1}$.
	It follows that we have found point $v \in R$ satisfying $2^{i+1} < v \leq 2^{\eta(\eta(i+1)) + 1}$. Therefore $p = v$, is the desired point.
	
	\medskip
	
	\noindent
	\textbf{Assume that }$u \notin R_i$. Since $u \in R_{\eta(\eta(i+1))}$, by line \ref{line:cof:defRim1} of Algorithm \ref{alg:cover_tree_k-nearest_construction} we have $d(u,q) \leq 2^{\eta(\eta(i+1)) + 1}$.
	On the other hand since $u \notin R_i$, there exists level $t$ having $\eta(i+3) \geq t \geq i$ and $u \in \C_{t}(R_{\eta(t)}) \setminus R_t$. Therefore by line \ref{line:cof:defRim1} of Algorithm \ref{alg:cover_tree_k-nearest_construction} we have $d(q,u) > 2^{t+2} \geq 2^{i+2}$.
	It follows that we have found point $u \in R$ satisfying $2^{i+1} < u \leq 2^{\eta(\eta(i+1)) + 1}$. Therefore $p = u$, is the desired point.
	
\end{proof}


\lemconstructiondepthbound*
\begin{proof}
    \hypertarget{proof:lem:construction_depth_bound}{\empty}
	Let $x \in L(\T(R),q)$ be the lowest level of $L(\T(R),q)$.
	Define $s_1 = \eta(\eta(x)+1)$ and let $s_i = \eta(\eta(\eta(s_{i-1}+1))+1)$, if it exists. Assume that $s_{n+1}$ is the last sequence element for which $\eta(\eta(\eta(s_{n-1}+1))+1)$ is defined. Define $S = \{s_1,...,s_{n}\}$. For every $i \in \{1,...,n\}$ let $p_i$ be the point provided by Lemma \ref{lem:knn_next_level_finder_for_log_depth} that satisfies $$ 2^{s_i+1} < d(p_i,q) \leq 2^{\eta(\eta(s_{i}+1)) + 1}.$$
	Let $P$ be the sequence of points $p_i$.  Denote $n = |P| = |S|$. 
	Let us show that $S$ satisfies the conditions of Lemma \ref{lem:growth_bound_extension}. Note that:
	$$4 \cdot d(p_i,q)\leq 4 \cdot 2^{\eta(\eta(s_{i}+1)) + 1} \leq 2^{\eta(\eta(s_{i}+1)) + 3} \leq 2^{\eta(\eta(\eta(s_{i}+1))+1) + 1} \leq 2^{s_{i+1}+1} \leq d(p_{i+1},q)$$
	By Lemma \ref{lem:growth_bound_extension} applied for set $A$ and sequence $P$ we get:
	$$|\bar{B}(q,\frac{4}{3}  d(q,p_n))| \geq (1+\frac{1}{c(R)^2})^{n} \cdot |\bar{B}(q,\frac{1}{3} d(q,p_1))|$$
	Since $\eta(x) \in L(\T(R),q)$ , there exists some point $u \in R_{\eta(x)}$. 
	By definition of $R_i$ we have $d(u,q) \leq 2^{\eta(x) + 1}$. 
	Also $$2^{\eta(\eta(x) + 1)-1} \leq \frac{2^{\eta(\eta(x) + 1)+1}}{3} < \frac{d(q,p_1)}{3}$$
	It follows that:
	$$1 \leq  |\bar{B}(q, 2^{\eta(x) + 1})| \leq |\bar{B}(q, 2^{\eta(\eta(x)+1) - 1}| \leq |\bar{B}(q, \frac{d(q,p_1)}{3})|$$
	Therefore we have
	$$|A| \geq \frac{|\bar{B}(q,\frac{4}{3} \cdot d(q,p_n))|}{|\bar{B}(q,\frac{1}{3} \cdot d(q,p_1))|} \geq (1+\frac{1}{c(A)^2})^{n}$$
	Note that $c(A) \geq 2$ by definition of expansion constant. Then by applying $\log$ and by using Lemma \ref{lem:hard_function_bound} we obtain: $c(A)^2\log(A) \geq n = |S|$. 
	Let $x$ be minimal level of $L(\T(W),q)$ and let $y$ be the maximal level of $L(\T(W),q)$ 
	Note that $S$ is a sub sequence of $L$ in such a way that:
	\begin{itemize}
		\item $[x,s_1] \cap L(\T(R),q) \leq 3$, 
		\item for all $i \in 1,..., n$ we have $[s_i, s_{i+1}] \cap L(\T(R),q) \leq 6 $
		\item $[s_n, y] \cap L(\T(R),q) < 12$
	\end{itemize}
	Since segments $[x,s_1],[s_1,s_2], ...,  [s_2,s_n], [s_n,y]$ cover $|L(\T(R),q)|$,
	it follows that $|S| \geq \frac{|L(\T(R),q)|}{12}$. We obtain that $$|L(\T(R),q)| \leq 12 \cdot c(A)^2 \cdot \log_2(|A|),$$ which proves the claim.
\end{proof}

\thmconstructiontimeKR*
\begin{proof}
	It follows from Lemmas~\ref{lem:construction_depth_bound} and~\ref{lem:general_construction_time}.
\end{proof}


\corconstructiontimeKR*
\begin{proof}
	The proof follows from Theorem~\ref{thm:construction_time_KR} by setting $A = R$.
\end{proof}

\section{$k$-nearest neighbor search algorithm}
\label{sec:better_approach_knn_problem}


This section is motivated by
\citet[Counterexample~5.2]{elkin2022counterexamples}, which showed that the proof of past time complexity claim in  \citet[Theorem~5]{beygelzimer2006cover} for the nearest neighbor search algorithm contained gaps.  The two main results of this sections are Corollary~\ref{cor:cover_tree_knn_miniziminzed_constant_time}
and Theorem \ref{thm:knn_KR_time} which provide new time complexity results for $k$-nearest neighbor problem, assuming that a compressed cover tree was already constructed for the reference set $R$. For the construction algorithm of compressed cover tree and its time complexity, we refer to Section~\ref{sec:ConstructionCovertree}.

 \medskip
 \noindent
 The past mistakes are resolved by introducing a new Algorithm \ref{alg:cover_tree_k-nearest} for finding $k$-nearest neighbors that generalize and improves the original method for finding nearest neighbors using an implicit cover. \citet[Algorithm~1]{beygelzimer2006cover}. The first improvement is $\lambda$-point of line~\ref{line:knnu:dfnLambda} which allows us to search for all $k$-nearest neighbors of a given query point for any $k \geq 1$. The second improvement is a new loop break condition on line \ref{line:knnu:qtoofar:condition}. The new loop break condition is utilized in the proof of Lemma \ref{lem:knn_depth_bound} to conclude that the total number of performed iterations is bounded by  $O(c(R)^2\log(|R|))$  during whole run-time of the algorithm. The latter improvement closes the past gap in proof of \citet[Theorem~5]{beygelzimer2006cover} by bounding the number of iterations independently from the explicit depth \citet[Definition~3.2]{elkin2022counterexamples}, that generated the past confusion. 
 
  \medskip
 \noindent 
 Recall from Definition~\ref{dfn:essential_levels_node} that an essential set $\Es(p,\T(R)) \subseteq H(\T(R)$ consists of all levels $i \in H(\T(R))$ for which $p$ has non-trivial children in $\T(R)$ at level $i$. 
 By Lemma~\ref{lem:distinctive_descendants_precompute} the sizes of distinctive descendants $|\Sd_i(p, \T(R))|$ can be precomputed in a linear time $O(|R|)$ for all $p \in R$ and $i \in \Es(p,\T(R))$.
 Since the size of distinctive descendant set $|\Sd_i(p, \T(R))|$ can only change at indices $i \in \Es(p,\T(R))$, we assume that the sizes of $|\Sd_i(p, \T(R))|$ can be retrieved in a constant time $O(1)$ for any $p \in R$ and $i \in H(\T(R))$ during the run-time of Algorithm~\ref{alg:cover_tree_k-nearest}.



\begin{dfn}
	\label{dfn:knn_iteration_set}
	Let $R$ be a finite subset of a metric space $(X,d)$. 
	Let $\T(R)$ be a cover tree of Definition \ref{dfn:cover_tree_compressed} built on $R$ and let $q \in X$ be arbitrary point.
	Let $L(\T(R),q) \subseteq H(\T(R))$ be the set of all levels $i$ during iterations of lines~\ref{line:knnu:loop_begin}-\ref{line:knnu:loop_end} of Algorithm~\ref{alg:cover_tree_k-nearest} launched with inputs 
	$\T(R),q$. 
	If Algorithm~\ref{alg:cover_tree_k-nearest} reaches line \ref{line:knnu:qtoofar} at level 
	$\varrho \in L(\T(R),q)$, then we say that is \emph{special}. 
	We denote $\eta(i) = \min_{t} \{ t \in L(\T(R),q) \mid t > i\}$. 
	\bs
\end{dfn}

\noindent
Note that $\eta(i)$ of Definition \ref{dfn:knn_iteration_set} may be undefined. If $\eta(i)$ is defined, then by definition we have $\eta(i) \geq i + 1$.
Let $d_k(q,R)$ be the distance of $q$ to its $k$th nearest neighbor in $R$. 

\begin{algorithm}
	\caption{$k$-nearest neighbor search by a compressed cover tree}
	\label{alg:cover_tree_k-nearest}
	\begin{algorithmic}[1]
		\STATE \textbf{Input} : compressed cover tree $\T(R)$, a query point $q\in X$, an integer $ k \in \Z_{+} $
		\STATE Set $i \leftarrow l_{\max}(\T(R)) - 1$ and $\eta(l_{\max}-1) = l_{\max}$
		\STATE  Let $r$ be the root node of $\T(R)$. Set $R_{l_{\max}}=\{r\}$.
		\WHILE{$i \geq l_{\min}$} \alglinelabel{line:knnu:loop_begin}
		\STATE Assign $\mathcal{C}_i(R_{\eta(i)}) \leftarrow  R_{\eta(i)} \cup \{a \in \Child(p) \text{ for some }p \in R_{\eta(i)} \mid l(a) = i \} $ \\ \COMMENT{Recall that $\Child(p)$ contains node $p$ } \alglinelabel{line:knnu:dfn_C}
		\STATE Compute $\lambda = \lambda_k(q,\C_{i}(R_{\eta(i)}))$ from Definition~\ref{dfn:lambda-point} \alglinelabel{line:knnu:dfnLambda} by 
		Algorithm \ref{alg:lambda}.
		\STATE Find $R_{i} = \{p \in \C_i(R_{\eta(i)}) \mid d(q,p) \leq d(q,\lambda) + 2^{i+2}\}$ \alglinelabel{line:knnu:dfnRi}
		\IF {$d(q,\lambda) > 2^{i+2}$} \alglinelabel{line:knnu:qtoofar:condition}
		\STATE Define list $S = \emptyset$
		\FOR{$p \in R_i$} \alglinelabel{line:knnu:qtoofar:loop:start}
		\STATE Update $S$ by running Algorithm \ref{alg:cover_tree_k-nearest_final_collection} on $(p,i)$ \alglinelabel{line:knnu:qtoofar:launch} 
		\ENDFOR  \alglinelabel{line:knnu:qtoofar:loop:end}
		\STATE Compute and \textbf{output} $k$-nearest neighbors of the query point $q$ from set $S$.
		\alglinelabel{line:knnu:qtoofar}
		\ENDIF \alglinelabel{line:knnu:qtoofar:condition:endif}
		\STATE Set $j \leftarrow \max_{ a \in R_{i}} \nxt(a,i,\T(R))$
		\COMMENT{If such $j$ is undefined, we set $j = l_{\min}-1$} \alglinelabel{line:knnu:dfnindexj}
		\STATE Set $\eta(j) \leftarrow i$ and $i \leftarrow j$.
		\ENDWHILE \alglinelabel{line:knnu:loop_end}
		\STATE Compute and \textbf{output} $k$-nearest neighbors of query point $q$ from the set $R_{l_{\min}}$.
		\alglinelabel{line:knnu:final_line}
	\end{algorithmic}
\end{algorithm}

\begin{algorithm}
	\caption{The node collector called in line~\ref{line:knnu:qtoofar:launch} of Algorithm~\ref{alg:cover_tree_k-nearest}.}
	\label{alg:cover_tree_k-nearest_final_collection} 
	\begin{algorithmic}[1]
		\STATE \textbf{Input: }$p \in R$, index $i$.
		\STATE \textbf{Output: } a list $S \subseteq R$ containing all nodes of $\Sd_i(p,\T(R))$.
		\STATE Add $p$ to list $S$. 
		\IF {$i > l_{\min}(\T(R))$}
		\STATE Set $j = \nxt(p,i,\T(R))$ 
		\STATE Set $C =\{a \in \Child(p) \mid l(a) = j \}$
		\FOR{$u \in C$}
		\STATE Call Algorithm \ref{alg:cover_tree_k-nearest_final_collection} with $(u,j)$.
		\ENDFOR
		\ENDIF
	\end{algorithmic}
\end{algorithm}

\begin{exa}[Simulated run of Algorithm \ref{alg:cover_tree_k-nearest}]
	\label{exa:simulatedRun}
	Let $R$ and $\T(R)$ be as in Example \ref{exa:cover_tree_big}. Let $q = 0$ and $k = 5$. Figures \ref{fig:iteration3goodexample}, \ref{fig:iteration2goodexample},  \ref{fig:iteration1goodexample} and \ref{fig:iteration0goodexample} illustrate simulated run of Algorithm \ref{alg:cover_tree_k-nearest} on input $(\T(R), q, k)$. Recall that $l_{\max} = 2$ and $l_{\min} = -1$. During the iteration $i$ of Algorithm \ref{alg:cover_tree_k-nearest} we maintain the following coloring: Points in $R_i$ are colored orange. Points $\C_{\eta(i)}(R_{\eta(i)})$ (of line 5) that are not contained in $R_i$ are colored yellow. The $\lambda$-point of line \ref{line:knnu:dfnLambda} is denoted by using purple color. All the nodes that were present in $R_{\eta(i)}$ , but are no longer included in $R_i$ will be colored red. Finally all the points that are selected as $k$-nearest neighbors of $q$ are colored green in the final iteration. Nodes that haven't been yet visited or that will never be visited are colored white. Let $R_{2} = \{8\}$. Consider the following steps:

	\smallskip
	
	\noindent
	\textbf{Iteration} $i = 1$:  Figure \ref{fig:iteration3goodexample} illustrates iteration $i = 1$ of the Algorithm \ref{alg:cover_tree_k-nearest}. In line \ref{line:knnu:dfn_C} we find
	$\C_1(R_2) = \{4,8,12\}$. Since node $4$ minimizes distance $d(\C_1(R_2),0)$ and distinctive descendant set $\Sd_2(4, \T(R))$ consists of 7 elements we get $\lambda = 4$ and therefore $d(q,\lambda) = 4 \leq 2^{i+2} = 8$.
	In line \ref{line:knnu:dfnRi} we find $R_{1} = \{r \in C \mid d(0,r) \leq d(q,\lambda) + 2^{3} = 12\} = \{4,8,12\}$.
	
	\smallskip
	
	\noindent
	\textbf{Iteration} $i = 0$:   Figure \ref{fig:iteration2goodexample} illustrates iteration $i = 0$ of the Algorithm \ref{alg:cover_tree_k-nearest}. In line \ref{line:knnu:dfn_C}  we find 
	$\C_{0}(R_1) = \{2,4,6,8,10,12,14\}.$ Since $|\Sd_1(2, \T(R))| = 3$, $|\Sd_1(4, \T(R))| = 1$ and $|\T_1(6)| = 3$ and $6$ is the node with smallest to distance $0$ satisfying $\sum_{p \in N(0, 6) = \{2,4,6\}} | \Sd_1(p, \T(R))| \geq 5 = k.$ It follows that $\lambda = 6$.  In line \ref{line:knnu:dfnRi} we find $R_{0} = \{r \in \C(R_1) \mid d(0,r) \leq d(q,\lambda) + 2^{2} = 10\} = \{2,4,6,8,10\}$.
	Since $d(q,\lambda) > 2^{i+2} = 4$. We proceed into lines \ref{line:knnu:qtoofar:condition} - \ref{line:knnu:qtoofar:condition:endif}
	
	\smallskip
	
	\noindent
	\textbf{Final block} lines \ref{line:knnu:qtoofar:condition} - \ref{line:knnu:qtoofar:condition:endif} for  
	$i = 0$:  Figure \ref{fig:iteration1goodexample} marks all the 
	points $S$ discovered by line \ref{line:knnu:qtoofar:launch} as orange. Figure \ref{fig:iteration0goodexample} illustrates the final selection of $k$ points from set $S$ that are selected as the final output $\{1,2,3,4,5\}$.


\end{exa}

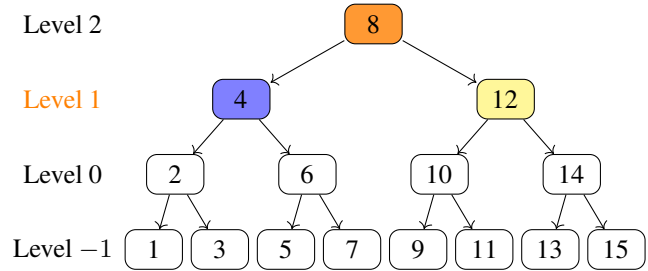
\begin{figure}[H]
	\centering
	\begin{tikzpicture}[align=center, node distance = 1.0cm, scale = 0.45]

	\node (scale3) {Level 2};
	\node[below of =scale3] (scale2) {\color{orange} Level 1};
	\node[below of =scale2] (scale1) {Level 0};
	\node[below of =scale1] (scale0) {Level $-1$};
	\node [blockzm1,  right=1pt of scale0 ] (node1) {1};
	\node [blockzm1,  right=26pt of scale0 ] (node3) {3};
	\node [blockzm1,  right=51pt of scale0 ] (node5) {5};
	\node [blockzm1,  right=76pt of scale0 ] (node7) {7};
	\node [blockzm1,  right=101pt of scale0 ] (node9) {9};
	\node [blockzm1,  right=126pt of scale0 ] (node11) {11};
	\node [blockzm1,  right=151pt of scale0 ] (node13) {13};
	\node [blockzm1,  right=176pt of scale0 ] (node15) {15};

	\node [blockzm1,  right=13pt of scale1 ] (node2) {2};
	\node [blockzm1,  right=63pt of scale1 ] (node6) {6};
	\node [blockzm1,  right=113pt of scale1 ] (node10) {10};
	\node [blockzm1,  right=163pt of scale1 ] (node14) {14};

	\node [blockzm1p,  right=38pt of scale2 ] (node4) {4};
	\node [blockzm1y,  right=138pt of scale2 ] (node12) {12};
	\node [blockzm1o,  right=88pt of scale3 ] (node8) {8};

            \draw[->] (node2) -> (node1);
    \draw[->] (node2) -> (node3);
     \draw[->] (node6) -> (node5);
    \draw[->] (node6) -> (node7);
     \draw[->] (node10) -> (node9);
    \draw[->] (node10) -> (node11);
     \draw[->] (node14) -> (node13);
    \draw[->] (node14) -> (node15);

    \draw[->] (node8) -> (node4);
    \draw[->] (node8) -> (node12);

      \draw[->] (node4) -> (node2);
     \draw[->] (node4) -> (node6);
         
         \draw[->] (node12) -> (node10);
     \draw[->] (node12) -> (node14);


\end{tikzpicture}
	\caption{Iteration $i = 1$ of simulation in Example \ref{exa:simulatedRun} of Algorithm \ref{alg:cover_tree_k-nearest} }
	\label{fig:iteration3goodexample}
\end{figure}
\begin{figure}[H]
	\centering
	\begin{tikzpicture}[align=center, node distance = 1.0cm, scale = 0.45]

	\node (scale3) {Level 2};
	\node[below of =scale3] (scale2) {Level 1};
	\node[below of =scale2] (scale1) {\color{orange} Level 0};
	\node[below of =scale1] (scale0) {Level $-1$};
	\node [blockzm1,  right=1pt of scale0 ] (node1) {1};
	\node [blockzm1,  right=26pt of scale0 ] (node3) {3};
	\node [blockzm1,  right=51pt of scale0 ] (node5) {5};
	\node [blockzm1,  right=76pt of scale0 ] (node7) {7};
	\node [blockzm1,  right=101pt of scale0 ] (node9) {9};
	\node [blockzm1,  right=126pt of scale0 ] (node11) {11};
	\node [blockzm1,  right=151pt of scale0 ] (node13) {13};
	\node [blockzm1,  right=176pt of scale0 ] (node15) {15};

	\node [blockzm1y,  right=13pt of scale1 ] (node2) {2};
	\node [blockzm1p,  right=63pt of scale1 ] (node6) {6};
	\node [blockzm1y,  right=113pt of scale1 ] (node10) {10};
	\node [blockzm1y,  right=163pt of scale1 ] (node14) {14};

	\node [blockzm1o,  right=38pt of scale2 ] (node4) {4};
	\node [blockzm1o,  right=138pt of scale2 ] (node12) {12};
	\node [blockzm1o,  right=88pt of scale3 ] (node8) {8};

    \draw[->] (node2) -> (node1);
    \draw[->] (node2) -> (node3);
     \draw[->] (node6) -> (node5);
    \draw[->] (node6) -> (node7);
     \draw[->] (node10) -> (node9);
    \draw[->] (node10) -> (node11);
     \draw[->] (node14) -> (node13);
    \draw[->] (node14) -> (node15);

    \draw[->] (node8) -> (node4);
    \draw[->] (node8) -> (node12);

      \draw[->] (node4) -> (node2);
     \draw[->] (node4) -> (node6);
         
         \draw[->] (node12) -> (node10);
     \draw[->] (node12) -> (node14);


\end{tikzpicture}
	\caption{Iteration $i = 0$ of simulation in Example \ref{exa:simulatedRun} of Algorithm \ref{alg:cover_tree_k-nearest} }
	\label{fig:iteration2goodexample}
\end{figure}

\begin{figure}[H]
	\centering
	\begin{tikzpicture}[align=center, node distance = 1.0cm, scale = 0.45]

	\node (scale3) {Level 2};
	\node[below of =scale3] (scale2) {Level 1};
	\node[below of =scale2] (scale1) {\color{orange} Level 0};
	\node[below of =scale1] (scale0) {Level $-1$};
	\node [blockzm1o,  right=1pt of scale0 ] (node1) {1};
	\node [blockzm1o,  right=26pt of scale0 ] (node3) {3};
	\node [blockzm1o,  right=51pt of scale0 ] (node5) {5};
	\node [blockzm1o,  right=76pt of scale0 ] (node7) {7};
	\node [blockzm1o,  right=101pt of scale0 ] (node9) {9};
	\node [blockzm1o,  right=126pt of scale0 ] (node11) {11};
	\node [blockzm1,  right=151pt of scale0 ] (node13) {13};
	\node [blockzm1,  right=176pt of scale0 ] (node15) {15};

	\node [blockzm1o,  right=13pt of scale1 ] (node2) {2};
	\node [blockzm1o,  right=63pt of scale1 ] (node6) {6};
	\node [blockzm1o,  right=113pt of scale1 ] (node10) {10};
	\node [blockzm1r,  right=163pt of scale1 ] (node14) {14};

	\node [blockzm1o,  right=38pt of scale2 ] (node4) {4};
	\node [blockzm1r,  right=138pt of scale2 ] (node12) {12};
	\node [blockzm1o,  right=88pt of scale3 ] (node8) {8};

    \draw[->] (node2) -> (node1);
    \draw[->] (node2) -> (node3);
     \draw[->] (node6) -> (node5);
    \draw[->] (node6) -> (node7);
     \draw[->] (node10) -> (node9);
    \draw[->] (node10) -> (node11);
     \draw[->] (node14) -> (node13);
    \draw[->] (node14) -> (node15);

    \draw[->] (node8) -> (node4);
    \draw[->] (node8) -> (node12);

      \draw[->] (node4) -> (node2);
     \draw[->] (node4) -> (node6);
         
         \draw[->] (node12) -> (node10);
     \draw[->] (node12) -> (node14);

\end{tikzpicture}
	\caption{Line \ref{line:knnu:qtoofar:launch} of Iteration $i = 0$ of simulation in Example \ref{exa:simulatedRun} of Algorithm \ref{alg:cover_tree_k-nearest} }
	\label{fig:iteration1goodexample}
\end{figure}

\begin{figure}[H]
	\centering
	\begin{tikzpicture}[align=center, node distance = 1.0cm, scale = 0.45]

	\node (scale3) {Level 2};
	\node[below of =scale3] (scale2) {Level 1};
	\node[below of =scale2] (scale1) {\color{orange} Level 0};
	\node[below of =scale1] (scale0) {Level $-1$};
	\node [blockzm1g,  right=1pt of scale0 ] (node1) {1};
	\node [blockzm1g,  right=26pt of scale0 ] (node3) {3};
	\node [blockzm1g,  right=51pt of scale0 ] (node5) {5};
	\node [blockzm1r,  right=76pt of scale0 ] (node7) {7};
	\node [blockzm1r,  right=101pt of scale0 ] (node9) {9};
	\node [blockzm1r,  right=126pt of scale0 ] (node11) {11};
	\node [blockzm1,  right=151pt of scale0 ] (node13) {13};
	\node [blockzm1,  right=176pt of scale0 ] (node15) {15};

	\node [blockzm1g,  right=13pt of scale1 ] (node2) {2};
	\node [blockzm1r,  right=63pt of scale1 ] (node6) {6};
	\node [blockzm1r,  right=113pt of scale1 ] (node10) {10};
	\node [blockzm1r,  right=163pt of scale1 ] (node14) {14};

	\node [blockzm1g,  right=38pt of scale2 ] (node4) {4};
	\node [blockzm1r,  right=138pt of scale2 ] (node12) {12};
	\node [blockzm1r,  right=88pt of scale3 ] (node8) {8};

    \draw[->] (node2) -> (node1);
    \draw[->] (node2) -> (node3);
     \draw[->] (node6) -> (node5);
    \draw[->] (node6) -> (node7);
     \draw[->] (node10) -> (node9);
    \draw[->] (node10) -> (node11);
     \draw[->] (node14) -> (node13);
    \draw[->] (node14) -> (node15);

    \draw[->] (node8) -> (node4);
    \draw[->] (node8) -> (node12);

      \draw[->] (node4) -> (node2);
     \draw[->] (node4) -> (node6);
         
         \draw[->] (node12) -> (node10);
     \draw[->] (node12) -> (node14);


\end{tikzpicture}
	\caption{Line \ref{line:knnu:qtoofar} of iteration $i = 0$ of simulation in Example \ref{exa:simulatedRun} of Algorithm \ref{alg:cover_tree_k-nearest} }
	\label{fig:iteration0goodexample}
\end{figure}
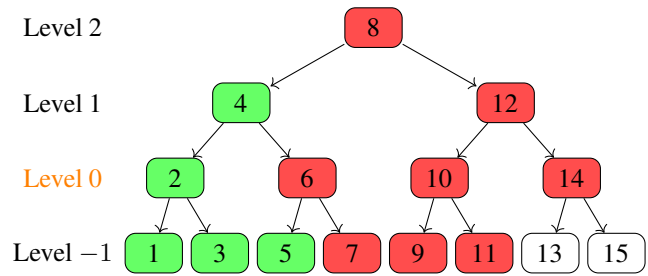

Note that $\bigcup_{p \in R_i}\Sd_i(p, \T(R))$ is decreasing set for which $\bigcup_{p \in R_{l_{\max}}}\Sd_{l_{\max}}(p, \T(R)) = R$ and $$\bigcup_{p \in R_{l_{\min}}}\Sd_{l_{\min}}(p, \T(R)) = R_{l_{\min}}.$$

\begin{lem}[$k$-nearest neighbors in the candidate set for all $i$]
	\label{lem:cover_tree_knn_correct_lem}
	Let $R$ be a finite subset of an ambient metric space $(X,d)$, let $q \in X$ be a query point and let $k \in \Z \cap [1,\infty)$ be a parameter. Let $\T(R)$ be a compressed cover tree of $R$. Assume that $|R| \geq k$. Then for any iteration $i \in L(\T(R),q)$ of Definition \ref{dfn:knn_iteration_set} the candidate set $\bigcup_{p \in R_i}\Sd_i(p, \T(R))$ contains all $k$-nearest neighbors of $q$. \bs
	
\end{lem}
\begin{proof}
	
	Since $R_{l_{\max}} = \{r\}$, where $r$ is the root $\T(R)$ we have $S_{l_{\max}}(r,\T(R)) = R$ and therefore any point among $k$-nearest neighbor of $q$ is contained in $R_{l_{\max}}$. Let $i$ be the largest index for which there exists a point among $k$-nearest neighbor of $q$ that doesn't belong to $\bigcup_{p \in R_{i}}\Sd_i(p, \T(R))$. Let us denote such point by $\beta$, then:
	$$\beta \in \bigcup_{p \in R_{\eta(i)}}\Sd_{\eta(i)}(p, \T(R)) \setminus \bigcup_{p \in R_{i}}\Sd_{i}(p, \T(R)).$$ 
	By Lemma \ref{lem:child_set_equivalence} we have 
	\begin{ceqn}
		
		\begin{equation}
			\label{eqa:neighborsContained}
			\bigcup_{p \in \C_{\eta(i)}(R_{\eta(i)})}\Sd_{i}(p, \T(R)) = \bigcup_{p \in R_{\eta(i)}}\Sd_{\eta(i)}(p, \T(R))
		\end{equation}
		
	\end{ceqn}
	\noindent 
	Let $\lambda$ be as in line \ref{line:knnu:dfnLambda} of Algorithm \ref{alg:cover_tree_k-nearest}. By Equation (\ref{eqa:neighborsContained}) we have $$|\bigcup_{p \in \C_{\eta(i)}(R_{\eta(i)})}\Sd_{i}(p, \T(R))| \geq k,$$ therefore by Definition \ref{dfn:lambda-point} such $\lambda$ exists. Since $\beta \in \bigcup_{p \in \C_{\eta(i)}(R_{\eta(i)})}\Sd_{i}(p, \T(R))$, there exists $\alpha \in \C_{\eta(i)}(R_{\eta(i)})$ satisfying $\beta \in \Sd_{i}(\alpha, \T(R))$. By assumption it follows $\alpha \notin R_{i}$. By line \ref{line:knnu:dfnRi} of the algorithm we have
	\begin{ceqn}
		\begin{equation}
			\label{eqa:neighborsContained2}
			d(\alpha, q) > d(q, \lambda) + 2^{i+2}.
		\end{equation}
	\end{ceqn}
	Let $w$ be arbitrary point in set $\bigcup_{p \in N(q;\la)}\Sd_{i}(p, \T(R))$. Therefore $w \in \Sd_{i}(\gamma, \T(R))$ for some $\gamma \in  N(q;\la)$. By Lemma \ref{lem:distinctive_descendant_distance} applied on $i$ we have  $d(\gamma, w) \leq 2^{i+1}$. By Definition \ref{dfn:lambda-point}  since $\gamma \in  N(q;\la)$ we have $d(q,\gamma) \leq d(q,\lambda)$. By (\ref{eqa:neighborsContained2}) and the triangle inequality we obtain:
	\begin{ceqn}
		\begin{equation}
			\label{eqa:neighborsContained3}
			d(q,w) \leq d(q, \gamma) + d(\gamma,w) \leq d(q,\lambda) + 2^{i+1} < d(\alpha,q) - 2^{i+1} 
		\end{equation}
	\end{ceqn}
	On the other hand $\beta$ is a descendant of $\alpha$ thus we can estimate:
	\begin{ceqn}
		\begin{equation}
			\label{eqa:neighborsContained4}
			d(q,\beta) \geq d(q,\alpha) - d(\alpha,\beta) \geq d(\alpha,q) - 2^{i+1} 
		\end{equation}
	\end{ceqn}
	By combining Inequality (\ref{eqa:neighborsContained3}) with Inequality (\ref{eqa:neighborsContained4}) we obtain $d(q,w) < d(q,\beta)$. Since $w$ was arbitrary point from $\bigcup_{p \in N(q;\la)}\Sd_{i}(p, \T(R))$, that contains at least  $k$ points, $\beta$ cannot be any $k$-nearest neighbor of $q$, which is a contradiction. 
\end{proof}

\thmknncorrectness*
\begin{proof}
    \hypertarget{proof:thm:cover_tree_knn_correct}{\empty}	
	Note that Algorithm~\ref{alg:cover_tree_k-nearest} is terminated by either reaching line \ref{line:knnu:final_line} or by going inside 
	block \ref{line:knnu:qtoofar:loop:start} - \ref{line:knnu:qtoofar:loop:end}.
	
	\medskip
	\noindent
	Assume first that Algorithm~\ref{alg:cover_tree_k-nearest} is terminated by reaching line \ref{line:knnu:final_line}.
	Claim follows directly from Lemma \ref{lem:cover_tree_knn_correct_lem} by noting that since 
	$i = l_{\min}$ all the nodes $p \in R_{l_{\min}}$ do not have any children. Therefore it follows $\bigcup_{p \in R_{l_{\min}}}\Sd_i(p, \T(R)) = R_{l_{\min}}$. Thus all the $k$-nearest neighbors of $q$ are contained in the set $R_{l_{\min}}$.
	
	\medskip
	\noindent
	Assume then that block \ref{line:knnu:qtoofar:loop:start} - \ref{line:knnu:qtoofar:loop:end} is reached during some iteration 
	$i \in L(\T(R),q)$. By Lemma \ref{lem:cover_tree_knn_correct_lem} set $\bigcup_{p \in R_i}\Sd_i(p, \T(R))$ contains all $k$-nearest neighbors of $q$. Note that in line \ref{line:knnu:qtoofar:launch} we collect all nodes of $\bigcup_{p \in R_i}\Sd_i(p, \T(R))$ into single array $S$. Therefore in line \ref{line:knnu:qtoofar} we correctly select $k$ nearest neighbors of $q$ from array $S$, which proves the claim. 
\end{proof}

	
	
\lemknntime*
\begin{proof}
    \hypertarget{proof:lem:knn:time}{\empty}
	\textbf{(a)}
	Let $\varrho \in L(\T(R),q)$ be as in Definition \ref{dfn:knn_iteration_set}. 
	Note that if iteration $\varrho$ is encountered, it becomes the last iteration of $L(\T(R),q)$.
	The total number of children encountered in line \ref{line:knnu:dfn_C} during single iteration (\ref{line:knnu:loop_begin}-\ref{line:knnu:loop_end}) is at most is at most 
	$(c_m(R))^4 \cdot \max\limits_{i \in L(\T(R),q) \setminus \varrho}|R_i|$
	by Lemma \ref{lem:compressed_cover_tree_width_bound}. From Lemma \ref{lem:time_lambdapoint} we obtain that line \ref{line:knnu:dfnLambda}, which launches Algorithm \ref{alg:lambda} takes at most
	$$|\C(R_i)| \cdot \log_2(k) = (c_m(R))^4 \cdot \max\limits_{L(q,\T(R)) \setminus \varrho} |R_i| \cdot \log_2(k) $$ time. 
	Line \ref{line:knnu:dfnRi} never does more work than line \ref{line:knnu:dfn_C}, since in the worst case scenario $R_{\eta(i)}$ is copied to $R_{i}$ in its current form. Line \ref{line:knnu:dfnindexj} handles $|R_{i}|$ nodes, since we can keep track of value of $\nxt(a,i,\T(R))$ of Definition \ref{dfn:implementation_compressed_cover_tree} by updating it when necessary in line \ref{line:knnu:dfn_C} we can retrieve its value in $O(1)$ time. Therefore maximal run-time of line \ref{line:knnu:dfnindexj} is $\max\limits_{i \in L(q,\T(R)) \setminus \varrho}|R_i|$.  Final line \ref{line:knnu:final_line}  picks lowest $k$-elements $R_{\eta(i)}$ ranked by function 
	$f(p) = d(p,q)$. By Lemma \ref{lem:time_k_smallest_elements} it can be computed in time 
	$O(\log_2(k) \cdot \max\limits_{L(q,\T(R)) \setminus \varrho}|R_i|)$.  
	It follows that 
	\begin{ceqn}
		\begin{equation} \label{eqa:thmimpeqadx}
			\centering
			\max(\li{\ref{line:knnu:loop_begin},\ref{line:knnu:qtoofar:condition}}, \li{\ref{line:knnu:qtoofar:condition:endif},\ref{line:knnu:loop_end}} , \li{\ref{line:knnu:final_line}}) = O\Big(c_m(R)^4 \cdot \max_{i \in L(q,\T(R)) \setminus \varrho}|R_i| \cdot \log_2(k) \Big )
		\end{equation}
	\end{ceqn}
	Let us now bound $\max_{i \in L(q,\T(R)) \setminus \varrho}|R_i|$, by showing $|R_i| \leq c_m(R)^6$. 
	Let $C_i$ be the $i$th level of $\T(R)$ as in Definition \ref{dfn:cover_tree_compressed}.
	For all $i \in L(\T(R),q) \setminus \varrho$ we have:
	\begin{ceqn}
		\begin{align}
			\label{eqa:ModifiedQBoundOne}
			R_{i} &= \{r \in \C_i(R_{\eta(i)}) \mid d(p,q) \leq d(q,\lambda) + 2^{i+2}\} \\
			&= B(q,d(q,\lambda)+2^{i+2}) \cap \C_i(R_i) \\
			&\subseteq B(q,2^{i+3}) \cap C_{i} 
			\label{eqa:ModifiedQboundTwo}
		\end{align}
	\end{ceqn}
	From cover-tree condition we know that all the points in $C_{i}$ are separated by $2^{i}$.
	We will now apply Lemma \ref{lem:packing} with $t = 2^{i+3}$ and $\delta = 2^{i}$.
	Since $4\frac{t}{\delta} + 1 = 2^5 + 1 \leq 2^6$ we obtain 
	$\max\limits_{i \in L(q,\T(R)) \setminus \varrho}|R_{i}| \leq |B(q,2^{i+2} ) \cap C_{i}| \leq c_m(R)^6$. The claim follows by replacing $\max\limits_{i \in L(q,\T(R)) \setminus \varrho}|R_{i}|$ with $c_m(R)^6$ in (\ref{eqa:thmimpeqadx}).

	
	
	\medskip
	
	\noindent
	\textbf{(b)}
	Let us now bound the run-time of $\li{\ref{line:knnu:qtoofar:condition}, \ref{line:knnu:loop_end}}$.
	which runs Algorithm \ref{alg:cover_tree_k-nearest_final_collection} for all $(p,i)$, where $p \in R_i$.
	Let $\Sd$ be a distinctive descendant set from Definition \ref{dfn:distinctive_descendant_set}. 
	Algorithm \ref{alg:cover_tree_k-nearest_final_collection} visits every node $u \in \cup_{p \in R_{i}}\Sd_{i}(p, \T(R))$ once, therefore its running time is $O(\cup_{p \in R_{i}}|\Sd_{i}(p, \T(R))|)$. Let us now show that 
	$$\cup_{p \in R_{i}}\Sd_{i}(p, \T(R)) \subseteq \bar{B}(q, 5d_k(q,R))$$
	Note first that by Lemma \ref{lem:cover_tree_knn_correct_lem} set $\cup_{p \in R_{i}}\Sd_{i}(p, \T(R))$ contains all $k$-nearest neighbors of $q$. Using Lemma \ref{lem:beta_point} we find $\beta$ among $k$-nearest neighbors of $q$ satisfying $d(q,\lambda) \leq d(q,\beta) + 2^{i+1}$. From assumption It follows $2^{i+1} \leq  d(q,\beta)$ .
	\medskip
	
	\noindent
	By line \ref{line:knnu:qtoofar:condition}
	we have $d(q, \lambda) \leq 2^{i+1}$.
	By line \ref{line:knnu:qtoofar} we perform depth-first traversal on $$A = \cup_{p \in R_i}\Sd_i(p, \T(R)).$$ 
	Let $u \in \cup_{p \in R_i}\Sd_i(p, \T(R))$ be arbitrary node and let $v \in R_i$ be such that $u \in \Sd_i(v,\T(R))$.
	By Lemma \ref{lem:distinctive_descendant_distance} we have $d(u,v) \leq 2^{i+1}$. Since $v \in R_i$ we have 
	$d(q,v) \leq d(\lambda,q) + 2^{i+2}$. By triangle inequality
	$$d(u,q) \leq d(u,v) + d(v,q) \leq  2^{i+1} + d(\lambda,v) + 2^{i+2}  \leq 2^{i+1} + 2^{i+1} + d(q,\beta) + 2^{i+2} \leq 5 \cdot d(q,\beta)$$
	It follows that $\cup_{p \in R_i}\Sd_i(p, \T(R)) \subseteq \bar{B}(q,5 \cdot d(q,\beta))$.
	Let us now bound the time complexity of line \ref{line:knnu:qtoofar}. 
	By Lemma \ref{lem:time_k_smallest_elements} for any set $A$ is takes $\log(k) \cdot |A|$ time to select $k$-lowest elements. We have:
	$$\li{\ref{line:knnu:qtoofar:condition}, \ref{line:knnu:loop_end}} = O(|\bar{B}(q,5 \cdot d_k(q,R))| \cdot \log(k)). $$
\end{proof}

\thmgeneraltime*
\begin{proof}
	Apply Lemma~\ref{lem:knn:time} to estimate the time complexity of Algorithm~\ref{alg:cover_tree_k-nearest}: \\ 
	$O\big( |L(\T(R),q)| \cdot 
	(\li{\ref{line:knnu:loop_begin}-\ref{line:knnu:qtoofar:condition}}
	+ \li{\ref{line:knnu:qtoofar:condition:endif}-\ref{line:knnu:loop_end}}  
	+ \li{\ref{line:knnu:final_line}}) 
	+\li{\ref{line:knnu:qtoofar:condition}-\ref{line:knnu:qtoofar:condition:endif} }\big)$.
\end{proof}

\noindent 
Corollary \ref{cor:cover_tree_knn_miniziminzed_constant_time} gives a run-time bound using only minimized expansion constant $c_m(R)$,
where if $R \subset \R^{m}$, then $c_m(R) \leq 2^{m}$. Recall that $\Delta(R)$ is aspect ratio of $R$ introduced in 
Definition \ref{dfn:radius+d_min}.

\corminimizedexpansionconstanttime*
\begin{proof}
	Replace $|L(q,\T(R))|$ in the time complexity of Theorem \ref{thm:cover_tree_knn_general_time} by its upper bound from
	Lemma~\ref{lem:depth_bound}: $|L(q,\T(R))| \leq |H(\T(R))| \leq \log_2(\Delta(R))$.
\end{proof}

\noindent
If we are allowed to use the standard expansion constant, that corresponds to KR-dimension of \citet{krauthgamer2004navigating}, then we obtain a stronger result, Theorem \ref{thm:knn_KR_time}.




\begin{lem}
	\label{lem:upper_bound_to_points_beloning_to_reference_set}
	Let $R$ be a finite reference set in a metric space $(X,d)$ and let $q \in X$ be a query point.
	Let $\varrho$ be the \emph{special} level of $L(\T(R),q)$. Let $i \in L(\T(R),q) \setminus \varrho$ be any level. 
	Then if $p \in R_i$ we have $d(p,q) \leq 2^{i+3}$.
\end{lem}
\begin{proof}
	By assumption in this part of the algorithm we have $d(q, \lambda) \leq 2^{i+2}$. By line \ref{line:knnu:dfnRi} of Algorithm \ref{alg:cover_tree_k-nearest}, since $p \in R_i$ we have $d(p,q) \leq d(q,\lambda) + 2^{i+2} \leq 2^{i+2} + 2^{i+2} \leq 2^{i+3}$, which proves the claim.
\end{proof}

\begin{lem}
	\label{lem:lower_bound_to_points_not_belonging_to_reference_set}
	Let $R$ be a finite reference set in a metric space $(X,d)$ and let $q \in X$ be a query point.
	Let $\varrho$ be the \emph{special} level of $L(\T(R),q)$. Let $i \in L(\T(R),q) \setminus \varrho$ be any level. 
	Then if $p \in \C_{i}(R_{\eta(i)}) \setminus R_i$, we have $d(p,q) > 2^{i+2}$.
\end{lem}
\begin{proof}
	By assumption $p \in \C_{i}(R_{\eta(i)}) \setminus R_i$.
	By line \ref{line:knnu:dfnRi} of Algorithm \ref{alg:cover_tree_k-nearest}
	it follows that $d(q,p) > 2^{i+2} + d(q,\lambda) \geq 2^{i+2}$.
	Therefore $d(q,p) > 2^{i+2}$, which proves the claim. 
\end{proof}

\begin{lem}
	\label{lem:knn_next_level_finder_for_log_depth_two}
	Let $i$ be a non-minimal level of $L(\T(R),q)$ of Definition \ref{dfn:knn_iteration_set}. Assume that $t = \eta(\eta(i+3))$ is defined.
	Then there exists $p \in R$ satisfying $2^{i+2} < d(p,q) \leq 2^{t+4}$.
\end{lem}
\begin{proof}
	
	
	
	
	Note first that since $\eta(i+3) \in L(\T(R),q)$, there exists distinct 
	$u \in R_{\eta(\eta(i+3))}$ and $v \in \C_{\eta(i+3)}(R_{\eta(\eta(i+3)}))$, in such a way that $u$ is the parent of $v$. 
	Let us show that both of $u,v$ cant belong to set $R_i$. Assume contrary that both $u,v \in R_i$. Then by Lemma \ref{lem:upper_bound_to_points_beloning_to_reference_set} we have
	$d(v,q) \leq 2^{i+3}$ and $d(u,q) \leq 2^{i+3}$. By triangle inequality $d(u,v) \leq d(u,q) + d(q,v) \leq 2^{i+4} \leq 2^{\eta(i+3)}$.
	Recall that we denote a level of a node by $l$.
	On the other hand we have $l(u) \geq \eta(i+3)$ and $l(v) \geq \eta(i+3)$, by separation condition of Definition \ref{dfn:cover_tree_compressed} we have $d(u,v) > 2^{\eta(i+3)}$, which is a contradiction. Therefore only one of $\{u,v\}$ 
	can belong to $R_i$. It sufficies two consider the two cases below: 
	
	\medskip
	
	\noindent
	\textbf{Assume that }$v \notin R_i$. Since $v$ is children of $u$ we have $d(u,v) \leq 2^{\eta(i+3) + 1}$.
	By Lemma \ref{lem:upper_bound_to_points_beloning_to_reference_set} we have $d(u,q) \leq 2^{\eta(\eta(i+3)) + 3}$.
	By triangle inequality 
	$$d(v,q) \leq d(v,u) + d(u,q) \leq 2^{\eta(\eta(i+3)) + 3} + 2^{\eta(i+3) + 1} \leq 2^{\eta(\eta(i+3)) + 4}$$
	Since $v \notin R_i$ there exists level $t$ having $\eta(i+3) \geq t \geq i$ and $v \in \C_{t}(R_{\eta(t)}) \setminus R_t$.
	Therefore by Lemma \ref{lem:lower_bound_to_points_not_belonging_to_reference_set} we have $d(q,v) > 2^{t+2} \geq 2^{i+2}$.
	It follows that we have found point $v \in R$ satisfying $2^{i+2} < v \leq 2^{\eta(\eta(i+3)) + 4}$. Therefore $p = v$, is the desired point.
	
	\medskip
	
	\noindent
	\textbf{Assume that }$u \notin R_i$. Since $u \in R_{\eta(\eta(i+3))}$, by Lemma \ref{lem:upper_bound_to_points_beloning_to_reference_set} we have $d(u,q) \leq 2^{\eta(\eta(i+3)) + 3}$.
	On the other hand since $u \notin R_i$, there exists level $t$ having $\eta(i+3) \geq t \geq i$ and $u \in \C_{t}(R_{\eta(t)}) \setminus R_t$. Therefore by Lemma \ref{lem:lower_bound_to_points_not_belonging_to_reference_set} we have $d(q,u) > 2^{t+2} \geq 2^{i+2}$.
	It follows that we have found point $u \in R$ satisfying $2^{i+2} < u \leq 2^{\eta(\eta(i+3)) + 4}$. Therefore $p = u$, is the desired point.

\end{proof}

\lemknndepthbound*
\begin{proof}
    \hypertarget{proof:lem:knn_depth_bound}{\empty}
	Let $x \in L(\T(R),q)$ be the lowest level of $L(\T(R),q)$.
	Define $s_1 = \eta(\eta(x)+1)$ and let $s_i = \eta(\eta(\eta(s_{i-1}+3))+3)$, if it exists. Assume that $s_{n+1}$ is the last sequence element for which $\eta(\eta(\eta(s_{n-1}+3))+3)$ is defined. Define $S = \{s_1,...,s_{n}\}$. For every $i \in \{1,...,n\}$ let $p_i$ be the point provided by Lemma \ref{lem:knn_next_level_finder_for_log_depth_two} that satisfies $$ 2^{s_i+2} < d(p_i,q) \leq 2^{\eta(\eta(s_{i}+3)) + 4}.$$
	Let $P$ be the sequence of points $p_i$.  Denote $n = |P| = |S|$. 
	Let us show that $S$ satisfies the conditions of Lemma \ref{lem:growth_bound_extension}. Note that:
	$$4 \cdot d(p_i,q)\leq 4 \cdot 2^{\eta(\eta(s_{i}+3)) + 4} \leq 2^{\eta(\eta(s_{i}+3)) + 6} \leq 2^{\eta(\eta(\eta(s_{i}+3))+3) + 2} \leq 2^{s_{i+1}+2} \leq d(p_{i+1},q)$$
	By Lemma \ref{lem:growth_bound_extension} applied for $A = R \cup q$ and sequence $P$ we get:
	$$|\bar{B}(q,\frac{4}{3}  d(q,p_n))| \geq (1+\frac{1}{c(R)^2})^{n} \cdot |\bar{B}(q,\frac{1}{3} d(q,p_1))|$$
	Since $\eta(x) \in L(\T(R),q)$ , there exists some point $u \in R_{\eta(x)}$. 
	By Lemma \ref{lem:upper_bound_to_points_beloning_to_reference_set} we have $d(u,q) \leq 2^{\eta(x) + 3}$. 
	Also $2^{\eta(\eta(x) + 1)+1} \leq \frac{2^{\eta(\eta(x) + 1)+2}}{3} < \frac{d(q,p_1)}{3}$
	It follows that:
	$$1 \leq  |\bar{B}(q, 2^{\eta(x) + 3})| \leq |\bar{B}(q, 2^{\eta(\eta(x) + 1)} + 1)| \leq |\bar{B}(q, \frac{d(q,p_1)}{3})|$$
	Therefore we have
	$$|R| \geq \frac{|\bar{B}(q,\frac{4}{3} \cdot d(q,p_n))|}{|\bar{B}(q,\frac{1}{3} \cdot d(q,p_1))|} \geq (1+\frac{1}{c(R \cup \{q\})^2})^{n}$$
	Note that $c(R \cup \{q\}) \geq 2$ by definition of expansion constant. Then by applying $\log$ and by using Lemma \ref{lem:hard_function_bound} we obtain: $c(R \cup \{q\})^2\log(|R|) \geq n = |S|$. 
	Let $x$ be minimal level of $L(\T(R),q)$ and let $y$ be the maximal level of $L(\T(R),q)$ 
	Note that $S$ is a sub sequence of $L$ in such a way that:
	\begin{itemize}
		\item $[x,s_1] \cap L(\T(R),q) \leq 3$, 
		\item for all $i \in 1,..., n$ we have $[s_i, s_{i+1}] \cap L(\T(R),q) \leq 10 $
		\item $[s_n, y] \cap L(\T(R),q) < 20$
	\end{itemize}
	Since segments $[x,s_1],[s_1,s_2], ...,  [s_2,s_n], [s_n,y]$ cover $|L(\T(R),q)|$,
	it follows that $|S| \geq \frac{|L(\T(R),q)|}{20}$. We obtain that $$|L(\T(R),q)| \leq 20 \cdot c(R \cup \{q\})^2 \cdot \log_2(|R|),$$ which proves the claim.
\end{proof}

\thmknnkrtime*
\begin{proof}
	By Theorem~\ref{thm:cover_tree_knn_general_time} the required time complexity is
	$$O\Big ((c_m(R))^{10} \cdot \log_2 (k) \cdot |L(q,\T(R))| + |\bar{B}(q, 5d(q,\beta)) | \cdot \log_2(k) \Big )$$
	for some point $\beta$ among the first $k$-nearest neighbors of $q$.
	Apply Definition \ref{dfn:expansion_constant}:
	\begin{ceqn}
		\begin{align}
			|B(q,5d(q,\beta))| \leq (c(R \cup \{q\}))^3 \cdot |B(q,\frac{5}{8}d(q,\beta))|
		\end{align}
	\end{ceqn}
	Since $|B(q,\frac{5}{8}d(q,\beta))| \leq k$, we have $|B(q,5d(q,\beta))| \leq (c(R \cup \{q\}))^3  \cdot k$.
	It remains to apply Lemma \ref{lem:knn_depth_bound}: $|L(q,\T(R))| = O(c(R \cup \{q\})^2 \cdot \log_2|R|)$.
\end{proof}

\noindent
Corollary~\ref{cor:cover_tree_knn_time} combines Theorem~\ref{thm:construction_time_KR} with Theorem~\ref{thm:knn_KR_time}, to show that
Problem~\ref{pro:knn} can be solved in $O(c^{O(1)} \cdot \log(k) \cdot \max\{|Q|, |R|\} \cdot (\log|R|) + k)$ time. 

\begin{cor}[solution to Problem \ref{pro:knn}]
	\label{cor:cover_tree_knn_time}
	In the notations of Theorem~\ref{thm:knn_KR_time}, 
	set $c = \max\limits_{q \in Q}c(R \cup \{q\})$.
	Algorithms~\ref{alg:cover_tree_k-nearest_construction_whole} and \ref{alg:cover_tree_k-nearest} solve Problem \ref{pro:knn} in time 
	$$O\Big( \max(|Q|,|R|) \cdot  c^2 \cdot \log_2(k) \cdot \big ((c_m(R))^{10}  \cdot \log_2(|R|) + c \cdot k \big ) \Big).$$
\end{cor}
\begin{proof}
	For any $q \in Q$, since $\log_2|R \cup \{q\}| \leq 2\log_2|R|$,
	a tree $\T(R)$ can be built in time $$O(c^2 \cdot c_m(R)^8 \cdot \log|R|)$$ by Theorem~\ref{thm:construction_time_KR}.
	Therefore the time complexity is dominated by running Algorithm~\ref{alg:cover_tree_k-nearest} on all points $q \in Q$.
	The final complexity is obtained by multiplying the time from Theorem~\ref{thm:knn_KR_time} by $|Q|$.
\end{proof}


\section{Approximate $k$-nearest neighbor search}
\label{sec:approxknearestneighbor}




The original navigating nets and cover trees were used in \citet[Theorem~2.2]{krauthgamer2004navigating} and \citet[Section~3.2]{beygelzimer2006cover} to solve the $(1+\epsilon)$-approximate nearest neighbor problem for $k=1$.  
The main result, Theorem~\ref{thm:approximate_k_nearestneighbors} justifies a near linear parameterized complexity to find approximate a $k$-nearest neighbor set $\mathcal{P}$ formalized in Definition \ref{dfn:ApproxKNearestNeighbor}. 

\begin{dfn}[approximate $k$-nearest neighbor set $\AP$]
	\label{dfn:ApproxKNearestNeighbor}
	Let $R$ be a finite reference set and let $Q$ be a finite query set of a metric space $(X,d)$.
	Let $q \in Q \subseteq X$ be a query point, $k \geq 1$ be an integer and $\epsilon > 0$ be a real number. 
	Let $\mathcal{N}_k = \cup_{i=1}^k \NN_i(q)$ be the union of neighbor sets from Definition~\ref{dfn:kNearestNeighbor}. 
	A set $\mathcal{P} \subseteq R$ is called an \emph{approximate $k$-nearest neighbors set}, if $|\mathcal{P}| = k$ and there is an injection $f: \mathcal{P} \rightarrow \mathcal{N}_k$ satisfying $d(q, p) \leq (1+\epsilon) \cdot d(q,f(p)) $ for all $p \in \mathcal{P}$.  
	\bs
\end{dfn}


\begin{algorithm}[ht]
	\caption{This algorithm finds approximate $k$-nearest neighbor of Definition \ref{dfn:ApproxKNearestNeighbor}.}
	\label{alg:cover_tree_k-nearest_approximate}
	\begin{algorithmic}[1]
		\STATE \textbf{Input} : compressed cover tree $\T(R)$, a query point $q\in X$, an integer $ k \in \Z_{+}$, real $\epsilon \in \R_{+}$.
		\STATE Set $i \leftarrow l_{\max}(\T(R)) - 1$ and $\eta(l_{\max}-1) = l_{\max}$. Set $R_{l_{\max}}=\{\text{root}(\T(R))\}$.
		\WHILE{$i \geq l_{\min}$} \alglinelabel{line:aknn:loop_begin}
		\STATE Assign $\mathcal{C}_i(R_{\eta(i)}) \leftarrow  R_{\eta(i)} \cup \{a \in \Child(p) \text{ for some }p \in R_{\eta(i)} \mid l(a) = i \}$. 
		\STATE Compute $\lambda = \lambda_k(q,\C_{i}(R_{\eta(i)}))$ from Definition~\ref{dfn:lambda-point} \alglinelabel{line:aknn:dfnLambda} by 
		Algorithm \ref{alg:lambda}.
		\STATE Find $R_{i} = \{p \in \C_i(R_{\eta(i)}) \mid d(q,p) \leq d(q,\lambda) + 2^{i+2}\}$ \alglinelabel{line:aknn:dfnRi}.
		\IF {$\frac{2^{i+2}}{\epsilon} + 2^{i+1}  \leq d(q, \lambda)$}\alglinelabel{line:aknn:ifcondition}
		\STATE Let $\mathcal{P} = \emptyset$. 
		\FOR {$p \in \C_i(R_{\eta(i)})$}
		\IF {$d(p,q) < d(q,\lambda)$}
		\STATE $\mathcal{P} = \mathcal{P} \cup \Sd_{i}(p,\T(R))$
		\ENDIF
		\ENDFOR
		\STATE Fill $\mathcal{P}$ until it has $k$ points by adding points from sets $\Sd_{i}(p,\T(R))$, where $d(p,q) = d(q, \lambda)$.
		\STATE \textbf{return} $\mathcal{P}$. 
		\ENDIF \alglinelabel{line:aknn:endifcondition}

		\STATE Set $j \leftarrow \max_{ a \in R_{i}} \nxt(a,i,\T(R))$.
		\COMMENT{If such $j$ is undefined, we set $j = l_{\min}-1$} \alglinelabel{line:aknn:dfnindexj}
		\STATE Set $\eta(j) \leftarrow i$ and $i \leftarrow j$.
		\ENDWHILE \alglinelabel{line:aknn:loop_end}
		\STATE Compute and \textbf{output} $k$-nearest neighbors of query point $q$ from the set $R_{l_{\min}}$.
		\alglinelabel{line:aknn:final_line}
	\end{algorithmic}
\end{algorithm}




Definition \ref{dfn:knn_iteration_set_approx} is analog of Definition \ref{dfn:knn_iteration_set} for $(1+\epsilon)$-approximate $k$-nearest neighbor search. 
\begin{dfn}[Iteration set of approximate $k$-nearest neighbor search]
	\label{dfn:knn_iteration_set_approx}
	Let $R$ be a finite subset of a metric space $(X,d)$. 
	Let $\T(R)$ be a cover tree of Definition \ref{dfn:cover_tree_compressed} built on $R$ and let $q \in X$ be an arbitrary point.
	Let $L(\T(R),q) \subseteq H(\T(R))$ be the set of all levels $i$ during iterations of lines~\ref{line:aknn:loop_begin}-\ref{line:aknn:loop_end} of Algorithm~\ref{alg:cover_tree_k-nearest_approximate} launched with inputs $(\T(R),q)$. 
	We denote $\eta(i) = \min_{t} \{ t \in L(\T(R),q) \mid t > i\}$. 
	\bs
	
\end{dfn}

\begin{lem}[$k$-nearest neighbors in the candidate set for all $i$]
	\label{lem:cover_tree_knn_correct_lem_approx}
	Let $R$ be a finite subset of an ambient metric space $(X,d)$, let $q \in X$ be a query point , let $k \in \Z \cap [1,\infty)$ and $\epsilon \in \R_{+}$ be parameters. Let $\T(R)$ be a compressed cover tree of $R$. Assume that $|R| \geq k$. Then for any iteration $i \in L(\T(R),q)$ of Algorithm \ref{alg:cover_tree_k-nearest_approximate} the candidate set $\bigcup_{p \in R_i}\Sd_i(p, \T(R))$ contains all $k$-nearest neighbors of $q$. \bs
\end{lem}
\begin{proof}
	Proof of this lemma is similar to Lemma \ref{lem:cover_tree_knn_correct_lem_approx} and is therefore omitted.
\end{proof}

\noindent 
Lemma \ref{lem:approximate_knn_correctness} shows that Algorithm \ref{alg:cover_tree_k-nearest_approximate} correctly returns an Approximate $k$-nearest neighbor set of Definition \ref{dfn:ApproxKNearestNeighbor}.

\begin{lem}[Correctness of Algorithm \ref{alg:cover_tree_k-nearest_approximate}]
	\label{lem:approximate_knn_correctness}
	
	Algorithm \ref{alg:cover_tree_k-nearest_approximate}
	finds an approximate $k$-nearest neighbors set of any query point $q \in X$. 
	\bs
\end{lem}
\begin{proof}
	
	Assume first that condition on line \ref{line:aknn:ifcondition} of Algorithm \ref{alg:cover_tree_k-nearest_approximate}
	is satisfied during some iteration $i \in H(\T(R))$ of Algorithm \ref{alg:cover_tree_k-nearest_approximate}. Let us denote
	$$\mathcal{A} = \bigcup_{p \in \C_i(R_{\eta(i)})} \{\Sd_{i}(p,\T(R)) \mid d(p,q) < d(q,\lambda) \}, 
	\mathcal{B} = \bigcup_{p \in \C_i(R_{\eta(i)})} \{\Sd_{i}(p,\T(R)) \mid d(p,q) = d(q,\lambda) \}.$$
	By Algorithm \ref{alg:cover_tree_k-nearest_approximate} set $\mathcal{P}$ contains all points of $\mathcal{A}$ and rest of the points are filled form $\mathcal{B}$.
	We will now form $f: \mathcal{P} \rightarrow \mathcal{N}_k$ by mapping every point $p \in \mathcal{A} \cap \mathcal{P}$ into itself and then by extending $f$ to be injective map on whole set $\mathcal{P}$ . The claim holds trivially for all points $p \in \mathcal{A} \cap \mathcal{P}$. Let us now consider points  $p \in \mathcal{P} \setminus \mathcal{A}$. Let $\gamma \in \C_i(R_{\eta(i)})$ be such that $p \in \Sd_i(\gamma, \T(R))$ and let $\psi \in \C_i(R_{\eta(i)})$ be such that $f(p) \in  \Sd_i(\psi, \T(R)) $. By using triangle inequality, Lemma \ref{lem:compressed_cover_tree_descendant_bound} and the fact that
	$p \in \mathcal{A} \cup \mathcal{B}$ we obtain:
	\begin{ceqn}
		\begin{equation}
			\label{eqa:ANNCorrectness1}
			d(q, p) \leq  d(q, \gamma) + d(\gamma,p) \leq d(q, \lambda) + 2^{i+1}
		\end{equation}
	\end{ceqn} 
	On the other hand since $f(p) \notin \mathcal{A}$ we have
	\begin{ceqn}
		\begin{equation}
			\label{eqa:ANNCorrectness2}
			(1+\epsilon) \cdot d(q, f(p)) \geq (1+\epsilon) \cdot ( d(q, \psi) - d(\psi,f(p))) \geq (1+\epsilon) \cdot (d(q, \lambda) - 2^{i+1})
		\end{equation}
	\end{ceqn}
	Note that by line \ref{line:aknn:ifcondition} we have $\frac{2^{i+2}}{\epsilon} + 2^{i+1} \leq d(q, \lambda)$. It follows that 
	$2^{i+2} \leq \epsilon \cdot d(q, \lambda) - \epsilon \cdot 2^{i+1}$.
	Therefore we have:
	\begin{ceqn}
		\begin{equation}
			\label{eqa:ANNCorrectness3}
			d(q, \lambda) + 2^{i+1}  \leq d(q,\lambda) + 2^{i+2} - 2^{i+1} \leq (1+\epsilon) \cdot (d(q,\lambda) - 2^{i+1})
		\end{equation}
	\end{ceqn}
	By combining Equations (\ref{eqa:ANNCorrectness1}) - (\ref{eqa:ANNCorrectness3}) we obtain $d(q, p) \leq (1+\epsilon) \cdot d(q,f(p)) $.
	If the condition on line \ref{line:aknn:ifcondition} of Algorithm \ref{alg:cover_tree_k-nearest_approximate} is never satisfied, then the Algorithm finds real $k$-nearest neighbors of point $q$ in the end of the algorithm and therefore the claim holds. 
	
\end{proof}

\begin{thmm}[Time complexity of Algorithm~\ref{alg:cover_tree_k-nearest_approximate} ]
	\label{thm:approximate_k_nearestneighbors}
	In the notations of Definition \ref{dfn:ApproxKNearestNeighbor}, the complexity of Algorithm \ref{alg:cover_tree_k-nearest_approximate} is
	$$O\Big(  (c_m(R))^{8 + \lceil \log(2 + \frac{1}{\epsilon}) \rceil}  \cdot \log_2(k) \cdot \log_2(\Delta(R)) + k \Big ).$$
	\bs
\end{thmm}
\begin{proof}
	Similarly to Lemma \ref{lem:knn:time} it can be shown that Algorithm  \ref{alg:cover_tree_k-nearest_approximate}  is bounded by: 
	\begin{ceqn}
		\begin{equation}
			\label{eqa:boundapproximate}
			O((c_m(R))^4 \cdot \log_2(k) \cdot \max_i|R_i| \cdot |H(\T(R))| + \li{\ref{line:aknn:ifcondition} - \ref{line:aknn:endifcondition}})
		\end{equation}
	\end{ceqn}
	
    \noindent
	Note first that in lines \ref{line:aknn:ifcondition} - \ref{line:aknn:endifcondition} we loop over set $\C_i(R_{\eta(i)})$ and select $k$ points from it. Therefore $\li{\ref{line:aknn:ifcondition} - \ref{line:aknn:endifcondition}} = k + |\C_i(R_{\eta(i)})|$.

	\medskip
	
	\noindent
	Let us now bound the size of $R_i$. By line \ref{line:aknn:ifcondition} of Algorithm \ref{alg:cover_tree_k-nearest_approximate} either Algorithm \ref{alg:cover_tree_k-nearest_approximate} is launched that terminates the program or $\frac{2^{i+2}}{\epsilon} + 2^{i+1}  > d(q, \lambda)$. Let $C_i$ be the $i$th cover set of $\T(R)$.
	To bound $|R_i|$ we can assume the latter. Similarly to Theorem \ref{thm:knn_KR_time} we have:
	\begin{ceqn}
		\begin{align}
			\label{eqa:QBoundOne1}
			R_{i} &= \{r \in \C_i(R_{\eta(i)}) \mid d(p,q) \leq d(q,\lambda) + 2^{i+2}\} \\
			&= \bar{B}(q,d(q,\lambda)+2^{i+2}) \cap \C_i(R_{\eta(i)}) \\
			&\subseteq \bar{B}(q,d(q,\lambda)+2^{i+2}) \cap C_{i} \\
			&\subseteq \bar{B}(q,2^{i+2}(\frac{3}{2} + \frac{1}{\epsilon})) \cap C_{i} 
			\label{eqa:QboundTwo1}
		\end{align}
	\end{ceqn}
	Since the cover set $C_{i}$ is a $2^{i}$-sparse subset of the ambient metric space $X$, we can apply Lemma~\ref{lem:packing} with $t = 2^{i+2}(\frac{3}{2} + \frac{1}{\epsilon})$ and $\delta = 2^{i}$. 
	Since $4\frac{t}{\delta} + 1 = 2^4(\frac{3}{2} + \frac{1}{\epsilon}) + 1 \leq 2^4(2 + \frac{1}{\epsilon})$, we get $\max |R_i| \leq (c_m(R))^{4 + \lceil \log_2(2 + \frac{1}{\epsilon}) \rceil}$. 
	The final complexity is obtained by plugging the upper bound of $|R_i|$ above into (\ref{eqa:boundapproximate}).
\end{proof}

\begin{cor}[complexity for approximate $k$-nearest neighbors set $\AP$]
	\label{cor:approximate_k_nearestneighbors}
	In the notations of Definition \ref{dfn:ApproxKNearestNeighbor}, an approximate $k$-nearest neighbors set is found for all $q \in Q$ in time
	$O\Big( |Q| \cdot  (c_m(R))^{8 + \lceil \log(2 + \frac{1}{\epsilon}) \rceil} \cdot \log(k) \cdot \log_2(\Delta(R)) + |Q| \cdot k \Big ).$
	\bs
\end{cor}
\begin{proof}
	This corollary follows directly from Theorem \ref{thm:approximate_k_nearestneighbors} .
\end{proof}

\section{Discussions: current contributions and future steps}
\label{sec:Conclusions}

This paper rigorously proved the time complexity of the exact $k$-nearest neighbor search.
The motivations were the past gaps in the proofs of time complexities in \citet[Theorem~5]{beygelzimer2006cover}, \citet[Theorem~3.1]{ram2009linear}, \citet[Theorem~5.1]{march2010fast}.
Though \citet{elkin2022counterexamples} provided concrete counterexamples, no corrections were published.
Main Theorem~\ref{thm:knn_KR_time} and Corollary~\ref{cor:construction_time_KR} have finally filled the above gaps.
\medskip

\noindent
To overcome all past obstacles, first Definition~\ref{dfn:kNearestNeighbor} and Problem~\ref{pro:knn} rigorously dealt with a potential ambiguity of $k$-nearest neighbors at equal distances, which was not discussed in the past work.
\medskip

\noindent
A new compressed cover tree in Definition~\ref{dfn:cover_tree_compressed} substantially simplified the navigating nets \citet{krauthgamer2004navigating} and original cover trees  \citet{beygelzimer2006cover} by avoiding any repetitions of given data points.
This compression has substantially clarified the construction and search Algorithms~\ref{alg:cover_tree_k-nearest_construction_whole} and~\ref{alg:cover_tree_k-nearest}. 
\medskip

\noindent
Second, section~\ref{sec:minimized_exp_constant} showed that the new minimized expansion constant $c_m$ of any finite subset $R$ of a normed vector space $\R^{n}$ has the upper bound $2^{m}$. In the future, it can be similarly shown that if $R$ is uniformly distributed then classical expansion constant $c(R)$ is $2^{m}$ as well.
\medskip

\noindent
Third, sections~\ref{sec:ConstructionCovertree} and~\ref{sec:better_approach_knn_problem} corrected the approach of \citet{beygelzimer2006cover} as follows.
Assuming that expansion constants and aspect ratio of a reference set $R$ are fixed, Corollaries~\ref{cor:construction_time_KR} and~\ref{cor:cover_tree_knn_time} rigorously showed that the times are linear in the maximum size of $R,Q$ and near-linear $O(k\log k)$ in the number $k$ of neighbors. 
\medskip

\noindent
The future problem is to improve the complexity of $k$-nearest neighbor search to a pure linear time $O(c(R)^{O(1)}|R|$ by using cover trees on both sets $Q,R$. 
Since a similar approach \citet{ram2009linear} was shown to have incorrect proof in \citet[Counterexample~6.5]{elkin2022counterexamples} and \citet{curtin2015plug,elkin2022paired} used additional parameters $I, \theta$, this goal will require significantly more effort to understand if $O(c(R)^{O(1)}|R|)$ is achievable by using a compressed cover tree.
\medskip

\noindent
Corollary~\ref{cor:cover_tree_knn_time} allowed us to justify the near-linear time of generically complete PDD \citet{widdowson2021pointwise} invariants (Pointwise Distance Distributions), which recently distinguished  all (more than 660 thousand) periodic crystals in the world's largest database of real materials \citet{widdowson2022average}. 
Due to these ultra-fast invariants, more than 200 billion pairwise comparisons were completed over two days on a modest desktop while past tools were estimated to require over 34 thousand years \citet{widdowson2022resolving}. 
The huge speed of PDD is complemented by slower but provably complete invariant isosets \citet{anosova2021isometry} with continuous metrics that allow polynomial-time approximations \citet{anosova2022algorithms}.
\medskip

\noindent
For the purpose of open access, the authors applied a Creative Commons Attribution (CC BY) licence to any accepted version.


\end{document}